\documentclass[11pt,letterpaper]{article}
\usepackage{fullpage}

\usepackage{thm-restate}
\usepackage{libertine}
\usepackage[libertine]{newtxmath}

\usepackage{setspace}
\setstretch{1.1}
\usepackage[breaklinks]{hyperref}

\usepackage[ruled,vlined,linesnumbered]{algorithm2e}
\usepackage{xcolor}
\usepackage{verbatim}
\usepackage{thmtools,thm-restate}

\makeatletter
 \setlength{\parindent}{10pt}
 \addtolength{\partopsep}{-2mm}
 \setlength{\parskip}{2pt plus 1pt}
 \addtolength{\abovedisplayskip}{-3mm}
\makeatother

\def\Var{\mbox{Var}}
\def\Max{\mbox{Max}}

\def\prob{\hbox{Pr}}

\newcommand{\E}{\mathbb{E}}

\newcommand{\cI}{{\mathcal I}}

\newcommand{\iden}{{\bf Check-NTSC}\xspace}
\newcommand{\ecov}{{\bf Exact Bounded 3-Cover}\xspace}
\newcommand{\inst}{{\cI}}

\newcounter{note}[section]

\def\prob{\mbox{Pr}}

\newtheorem{theorem}{Theorem}[section]
\newtheorem{lemma}[theorem]{Lemma}

\newtheorem{corollary}[theorem]{Corollary}

\newtheorem{claim}{Claim}[section]

\newtheorem{examp}{Example}[section]

\newenvironment{proof}[1][Proof:]{\begin{trivlist}
\item[\hskip \labelsep {\bfseries #1}]}  {\end{trivlist}}
\newenvironment{definition}[1][Definition:]{\begin{trivlist}
\item[\hskip \labelsep {\bfseries #1}]}{\end{trivlist}}

\newenvironment{remark}[1][Remark:]{\begin{trivlist}
\item[\hskip \labelsep {\bfseries #1}]}{\end{trivlist}}

\newcommand{\qed}{\nobreak \ifvmode \relax \else
      \ifdim\lastskip<1.5em \hskip-\lastskip
      \hskip1.5em plus0em minus0.5em \fi \nobreak
      \vrule height0.4em width0.5em depth0.25em\fi}

\def\prob{\text{Prob}}

\usepackage{amsmath}

\newcommand{\lntsc}{{\frac{\sqrt{n} \log n}{100}}}

\DeclareMathOperator*{\argmax}{arg\,max}






\title{Algorithms for finding $k$ in $k-$means}
%

\author{
Chiranjib Bhattacharyya\thanks{Dept. of Computer Science and Automation, Indian Institute of Science} \and Ravindran Kannan\thanks{Microsoft Research Lab., India} \and Amit Kumar\thanks{Dept. of Computer Science \& Engg., IIT Delhi} 
}

\newcommand{\rc}[2]{{C^{(#1)}_{{#2}}}}
\newcommand{\outl}{{\textbf {Outlier Centered 1-means}}\xspace}
\newcommand{\calC}{{\cal C}}
\newcommand{\opt}{{\textsf{opt}}}
\newcommand{\hatw}{{\hat w}}
\newcommand{\hatk}{{\hat k}}
\newcommand{\calT}{{\cal T}}

\newcommand{\ntsc}{{\textbf {NTSC}}\xspace}
\newcommand{\wntsc}{{\textbf {weak-NTSC}}\xspace}
\newcommand{\sepn}{{\textbf {well-separatedness}}\xspace}

\newcommand{\centone}{{\textbf {Centered 1-means}}\xspace}

\newcommand{\prune}{{\bf Prune}\xspace}
\newcommand{\hatX}{{\hat X}}
\newcommand{\hmu}{{\hat \mu}}
\newcommand{\hsigma}{{\hat \sigma}}

\begin{document}
\date{}
\maketitle



$k-$means Clustering  requires as input the exact value of $k$, the number of clusters. Two challenges are open: (i) Is there a data-determined definition of $k$ which is provably correct and (ii) Is there a polynomial time algorithm to find $k$ from data ? This paper provides the first affirmative answers to both these questions. As common in the literature, we assume that the data admits an unknown Ground Truth (GT) clustering with cluster centers separated. This assumption alone is not sufficient to answer Yes to (i). We assume a novel, but natural second constraint called no tight sub-cluster (\ntsc) which stipulates that no substantially large subset of a GT cluster can be ``tighter'' (in a sense we define) than the cluster. Our yes answer to (i) and (ii) are under these two deterministic assumptions. We also give polynomial time algorithm to identify $k$. Our algorithm relies on \ntsc to peel off one cluster at a time by identifying points which are tightly packed. 
We are also able to show that 
our algorithm(s) apply to data generated by mixtures of Gaussians and more generally to mixtures of sub-Gaussian pdf's and hence are able to find the number of components of the mixture from data. To our knowledge, previous results for these specialized settings as well, assume generally that $k$ is given besides the data.

\vspace*{3.5in}

\thispagestyle{empty}

\pagebreak

\setcounter{page}{1}
\section{Introduction}

The $k-$means algorithm is widely used in practice in a variety of applications. $k$, the number of clusters is the most basic parameter and we point out later its exact value needs to be known for the algorithm to produce the ``correct'' clustering.

However, there has been a lack of theoretical results on the problem of finding $k$ purely from data.  To our knowledge, the following two fundamental questions remain open for general clustering:
\begin{itemize}
    \item Is there a purely data determined definition of $k$ which is provably correct (a term we define below).
    \item Is there a polynomial time algorithm to find this value, again given no extra information besides the data.
    Data here is the set of points to be clustered.
\end{itemize}
This paper provides the first affirmative answers to both these questions. 
We start with a standard set-up: There is an unknown {\it Ground Truth (GT)} Clustering: a partition of data into subsets $C_1,C_2,\ldots ,C_k$, with the cluster centers separated from each other. The ``correct'' $k$ is the number of clusters in the GT.

We use a notion of cluster-center separation in GT defined below. First, some notation:
For any subset $S$ of data, the mean/center $\mu(S)$ and standard deviation $\sigma(S)$ are defined as usual by:
$$\mu(S)=\frac{1}{|S|}\sum_{x\in S}x\; ;\; \sigma(S)^2=\Max_{v:|v|=1}\frac{1}{|S|} \sum_{x\in S} \left( v\cdot (x-\mu(S))\right)^2.$$
[$\sigma(S)^2$ is the maximum over all directions of the mean squared deviation from the center of $S$.] 
We let $w_0$ be the minimum weight of a GT cluster.

The clusters obey {\it weak separation} if for each $\ell\not=\ell', \ell,\ell'\in [k]$, 
$$|\mu(C_\ell)-\mu(C_{\ell'})|\geq p(1/w_0)\sigma(C_\ell),$$ where, $p$ is a polynomial. This conforms to the adage: Means separated by XX standard deviations.
We will later also use a stronger condition (called strong separation or well-separatedness) which replaces the $\sigma(C_\ell)$ on right hand side above by $\Max_{r\in [k]}\sigma(C_r)$. 

For data generated from a stochastic model, there has been a long study of the minimal separation conditions under which GT can be found. In particular for spherical GMM's, recent deep results have obtained optimal separations~\cite{RV17, KwonC, Hopkins018}. In general, $k$ is assumed to be given even in these special cases. Our aim here is not to restrict to GMM's or in fact to any stochastic model, but GT is to be a deterministic object satisfying certain conditions.

Is (weak) separation a sufficient condition to impose on GT to have a data determined correct value of $k$? I.e.,  is $k$ the minimum number of weakly separated clusters the data can be partitioned into?
[The Occam's razor principle of minimum here excludes the trivial solution $k=n$ (where $n$ is the number of data points)
which clearly satisfies separation, since $\sigma$ is then 0 for each single point cluster.] The answer is no, since the other trivial solution $k=1$ vacuously satisfies separation too. So, we seek additional condition(s) on GT.
We impose some requirements on the condition(s) to strike a good trade-off between how strong they are and how functional they are: 
(i) The conditions must be deterministic, but, 
(ii) the conditions must be satisfied when specialized to data generated according to a GMM with (weak) separation between component means, 
(iii) If the conditions are satisfied by GT, then, the minimum number of clusters, which also satisfy the conditions, into which the data can be partitioned must equal the number of clusters in the GT, 
(iv) The number of clusters in the GT can be found (exactly) in polynomial time.

We formulate a novel condition 
called ``No-Tight-Sub-Cluster'' (\ntsc) 
which together with  (weak) separation satisfies the above requirements. When specialized to stochastically generated data, \ntsc boils down to a natural anti-concentration property of the pdf's of the components of the mixture (satisfied by general Gaussians and all log-concave pdf's).

To motivate \ntsc, consider the special case of determining whether $k=1$. We formulate a clean version of this question and show that even this special case is NP-hard by a reduction from \ecov~\cite{Kann94}.
If $k>1$, then, intuitively, there is a  
subset of data which is more ``tightly packed'' than the whole set. The first try for quantifying ``tightly packed'' would be 1-means cost. A simple 2-component GMM illustrates that this does not work (see Example~\ref{ex:2} in Section~\ref{sec:ex}). 
\ntsc is a new measure of tightness of a subset $T$ of data, which considers $\sigma(T)$ rather than the 1-means cost.
\wntsc : (Informal Definition) We say that a subset $C$ of data satisfies  \wntsc if for every reasonably large subset $S$ of $C$, we have that $\sigma(S)\in\Omega(|S|\sigma(C)/|C|)$. In the special case when $C$ is generated by iid draws from a pdf $f$, we will show that  \wntsc of $C$ follows from the anti-concentration condition: there is a  $1-$dimensional marginal $g$ of $f$, such that for every real $\zeta$, $g(\zeta)\in O(1/\sigma(f))$. 

Later, we will also use a stronger notion denoted \ntsc where we require the \wntsc condition to be satisfied when data is projected onto any 1-dimensional subspace of ${\bf R}^d$.

While we motivated \wntsc by just the $k=1$ or $k>1$ dichotomy, we are now ready to state our first theorem informally, which proves that in fact, \wntsc together with weak separation of cluster means identifies $k$ from the data alone. It does not give a polynomial time algorithm, which we will develop below. We assume that the number of points $n$ is at least  $\frac{100}{w_0^5}. $

\begin{theorem}\label{k-first}
Suppose there is a ground-truth clustering with $k$ clusters which satisfies weak separation  and \wntsc. Then, the minimum $s$ such that there is an $s-$ clustering satisfying \wntsc  equals $k$.
\end{theorem}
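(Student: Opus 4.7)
Plan. The inequality $\min s\leq k$ is immediate, since the ground-truth partition is itself a $k$-clustering satisfying \wntsc. The substance of the theorem is the reverse inequality: every partition $C'_1,\ldots,C'_s$ with $s<k$ must violate \wntsc. I argue this by contradiction, producing inside some $C'_j$ a subset $S$ whose $\sigma(S)$ is simultaneously too small (because $S\subseteq C_\ell$ for a tight ground-truth cluster $C_\ell$) and too large (because \wntsc forces it to be comparable to $(|S|/|C'_j|)\,\sigma(C'_j)$, and $\sigma(C'_j)$ is large due to mixing).

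Step 1 (pigeonhole to find a mixed cluster). For each $\ell\in[k]$ set $j(\ell)=\argmax_{j\in[s]}|C_\ell\cap C'_j|$, so $|C_\ell\cap C'_{j(\ell)}|\geq |C_\ell|/s\geq w_0 n/s$. Since $s<k\leq 1/w_0$, the map $j(\cdot)$ cannot be injective, so there exist distinct $\ell,\ell'$ with $j(\ell)=j(\ell')=:j$. Setting $S=C'_j\cap C_\ell$ and $S'=C'_j\cap C_{\ell'}$, I obtain the two size bounds I need: $|S|/|C'_j|,|S'|/|C'_j|\geq w_0/s\geq w_0^2$ (so $S,S'$ are ``reasonably large'' inside $C'_j$), and $|S|/|C_\ell|,|S'|/|C_{\ell'}|\geq 1/s\geq w_0$ (so $S,S'$ occupy constant fractions of their respective ground-truth clusters).

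Step 2 (the squeeze). Let $v$ be the unit vector along $\mu(C_\ell)-\mu(C_{\ell'})$ and $D=|\mu(C_\ell)-\mu(C_{\ell'})|$. Cauchy--Schwarz gives $|v\cdot(\mu(S)-\mu(C_\ell))|\leq\sqrt{|C_\ell|/|S|}\,\sigma(C_\ell)\leq w_0^{-1/2}\sigma(C_\ell)$, and likewise for $S'$; choosing the polynomial $p$ in weak separation large enough makes these perturbations negligible compared to $D$, so $|v\cdot(\mu(S)-\mu(S'))|\geq D/2$. The two-group between-cluster variance identity along $v$ then yields
\[|C'_j|\,\sigma(C'_j)^2 \;\geq\; \frac{|S|\,|S'|}{|S|+|S'|}\bigl(v\cdot(\mu(S)-\mu(S'))\bigr)^2 \;\geq\; \tfrac{w_0^2}{8}\,|C'_j|\,D^2,\]
so $\sigma(C'_j)=\Omega(w_0)\,D$. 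Applying \wntsc to $C'_j$ with subset $S$ then forces $\sigma(S)\geq c(w_0)\,(|S|/|C'_j|)\,\sigma(C'_j)=\Omega(w_0^3 c(w_0))\,D$. On the other hand, $S\subseteq C_\ell$ yields the elementary direction-wise bound $|S|\,\sigma(S)^2\leq|C_\ell|\,\sigma(C_\ell)^2$, i.e.\ $\sigma(S)\leq w_0^{-1/2}\sigma(C_\ell)$. Chaining the two bounds on $\sigma(S)$ gives $D\leq O(w_0^{-7/2}/c(w_0))\,\sigma(C_\ell)$, contradicting $D\geq p(1/w_0)\,\sigma(C_\ell)$ once $p$ is a sufficiently large polynomial in $1/w_0$.

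The main obstacle is the bookkeeping of $w_0$-dependent constants: the losses from pigeonhole, from the mean-perturbation bound, from the between-cluster variance identity, and from the \wntsc constant all compound, so one must commit to a quantitative form of \wntsc --- specifying the meaning of ``reasonably large'' and the dependence of the implicit $\Omega$-constant on $w_0$ --- before verifying that the polynomial $p$ used in weak separation dominates the resulting power of $1/w_0$. No single step is conceptually subtle; the care is entirely in matching these quantitative definitions.
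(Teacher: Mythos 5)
Your proof is correct, and it takes a genuinely different route through essentially the same ingredients as the paper. The paper's Section~\ref{sec:id1} argument is built around Lemma~\ref{lem:rcaps}, which combines Lemma~\ref{RcapS} with \wntsc to show that a large subset $S\subseteq T_\ell$ has $|\mu(S)-\mu(T_\ell)|$ small; it then applies this twice --- from $S_i$ to the candidate cluster $T_\ell$ and from $S_i$ to the ground-truth cluster $C_i$ --- so it invokes \wntsc of both the candidate partition and of the ground truth. Your argument replaces those mean-to-mean estimates by (i) a bare Cauchy--Schwarz / Claim~\ref{cl:sigma1}-style bound on $|v\cdot(\mu(S)-\mu(C_\ell))|$, which needs no \wntsc, and (ii) the between-group variance identity to lower-bound $\sigma(C'_j)$ from the separation of $\mu(S),\mu(S')$, after which \wntsc is invoked only once, on the candidate cluster $C'_j$, to transfer the lower bound down to $\sigma(S)$. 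This is slightly more economical: for the $\min s\geq k$ direction you never use \wntsc of the ground truth, only of the competing partition, and the exponent you arrive at ($D\leq O(w_0^{-7/2})\sigma(C_\ell)$) is even a touch sharper than the paper's $|\mu(C_1)-\mu(C_2)|\leq \frac{24000}{w_0^4}(\sigma(C_1)+\sigma(C_2))$; both are comfortably dominated by the paper's choice $\gamma=K/w_0^{11}$. Two small items to close the bookkeeping you rightly flag as the delicate part: verify $|S|\geq\lntsc$ before invoking \wntsc on $(S,C'_j)$ (this follows from $|S|\geq w_0^2 n$ together with inequality~\eqref{eq:lbnd}), and plug in the paper's concrete \wntsc constant $1/125$ and $\gamma=K/w_0^{11}$ to confirm the final squeeze numerically.
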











\subsection{Our Contributions}
We summarize our main contributions of the paper:
\begin{itemize}
    \item First provable result on determining $k$ from data: Weak Separation and \wntsc are sufficient to determine $k$ (in exponential time) -- see Theorem~\ref{k-first}. 
    \item First polynomial time algorithm to compute $k$: Strong Separation and \ntsc suffice to give us a polynomial time algorithm to compute $k$ from data (Theorem~\ref{main1}). 
    \item Corollary: Since GMM's automatically satisfy \ntsc, if in addition, separation holds, we get a polynomial time algorithm to determine $k$. To the best of our knowledge, there is no earlier explicitly stated provable algorithm for finding $k$ purely from data generated by such GMM's.
    \item First polynomial time algorithm for determining $k$ from data generated by  sub-Gaussian mixtures assuming both separation and anti-concentration (while our anti-concentration condition automatically holds for all log-concave pdf's, it does not hold for arbitrary sub-Gaussian pdf's).
    \item Polynomial time algorithm when $w_0$ is known: Often knowledge of $w_0$ is milder condition than knowing $k$. Under this assumption, we give a polynomial time algorithm for finding $k$ which requires strong separation and \wntsc. In some cases of stochastically generated data (e.g., stochastic block models), we can only show \wntsc, and hence this result is of interest in such settings. 
    \item Besides determining $k$, our results also give a new algorithm to find the approximate clustering. The cluster centers found here are close to the true means, and so can be used as good initialization for $k$-means (see e.g.,~\cite{KK10}). Further, 
     once $k$ is found, GT can also be found approximately by the known algorithm of \cite{AwasthiS12}.
\end{itemize}

\subsection{Informal statement of results and Our Techniques}
In this section, we discuss our results informally starting with an idea of the proof of Theorem (\ref{k-first}): consider a clustering of the points into $s < k$ clusters, say $X_1, \ldots, X_s$. Then one of these clusters, say $X_i$ will contain sufficiently large number of points from two different clusters in the GT -- denote these subsets of $X_i$ as $S_1$ and $S_2$ respectively. Since the GT satisfies weak separation and \wntsc, $\mu(S_1)$ and $\mu(S_2)$ will be sufficiently far apart implying that $\sigma(X_i)$ would be much larger than $\min(\sigma(S_1), \sigma(S_2)$. This will show that the partition under consideration does not satisfy \wntsc producing a contradiction to the hypothesis. 

\noindent
{\bf{From Theorem~\ref{k-first} to a polynomial time algorithm}}

Here, we intuitively describe the challenges in what is left to be done after the theorem.
It suggests an outline of an algorithm: (1)  Starting with $k=1$, try values of $k$ increasing it by 1 each time, (2) For each $k$, find a $k-$clustering, (3)  Check if the clustering satisfies NTSC and if so accept that $k$ and stop.

Both steps 2 and 3 present challenges. For step 2, there are known algorithms~\cite{AwasthiS12, KK10} which will find near optimal $k-$means cost and with means close to the true means. However, there is no proof that for the correct $k$, the clustering so found will satisfy \ntsc (Imagine $O^*(\sqrt{n})$ points from one cluster misplaced into another -- the \ntsc condition considers subsets of this small size). 

For step 3, of course, it is not obvious how to check \ntsc, an intrinsically exponential criterion. In fact, we prove that in general this problem is NP-hard (See Theorem~\ref{thm:nphard}.)


Next we discuss our algorithm. We first assume that we know the minimum relative weight $w_0$ of a cluster. In a low dimensional space, the variance $\sigma(X)$ of a set of points $X$ and the average 1-means cost are close to each other. It turns out that the separation conditions and \ntsc hold if we project to the $1/w_0$-SVD dimensional subspace. Thus our algorithm proceeds as follows: project data to this SVD-subspace, and then peel off points which have low 1-means cost (there are some more subtleties as we don't want the points peeled off to have large $\sigma()$ value).

The other difficulty is that we do not know $w_0$. The  algorithm maintains a guess $\hatw$ for $w_0$ -- it starts with $\hatw$ as 1, and slowly decreases it. For a certain value of $\hatw$, it runs the above-mentioned algorithm. Now one idea would be to check if the resulting clusters, say $X_1, \ldots, X_{\hatk}$, output by the above algorithm satisfy \ntsc (in the SVD-subspace). Although this can be done efficiently, this property may not be satisfied by the clusters produced by the above algorithm when given the correct value $w_0$. Instead we use a more subtle idea: for each of the clusters $X_i$, we prune it by removing subsets which are more tightly clustered than $X_i$. If we end up pruning $X_i$ to less than half its original size, we reject this partitioning (and try a smaller value of $\hatw$). The main technical result here shows that for every $\hatw < w_0$, the resulting clustering $X_1, \ldots, X_{\hatk}$ will always be rejected. The reason is that if $k < \hatk$, then lot of points from two different clusters $X_i$ and $X_j$ belong to a common subset $C_\ell$. But then the means of $X_i$ and $X_j$ cannot be too far (and so we will reject this clustering). If $k$ happens to be larger than $\hatk$, then lot of points from two different clusters $C_i$ and $C_j$ belong to the same set $X_\ell$; but then the pruning procedure above would remove lot of points from $X_\ell$. Thus, we get the following result (note that weak separation has been replaced by a stronger notion, which we call \sepn -- See Section~\ref{sec:prel} for details):
\begin{theorem}
\label{main1}
Let $P$ be a set of points implicitly partitioned into $k$ clusters $C_1, \ldots, C_k$ satisfying well-separetedness and \ntsc. Then there is a polynomial time algorithm to identify the parameter $k$. 
\end{theorem}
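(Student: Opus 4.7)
The plan is to describe and analyze an algorithm that searches over guesses $\hatw\in\{1,1/2,1/4,\ldots\}$ of the minimum cluster weight $w_0$, stopping at the first $\hatw$ that passes a ``certification'' test, and returning the number of clusters produced at that step. For a fixed $\hatw$, the certifier first projects the data onto the top $\lceil 1/\hatw \rceil$ principal components (a standard SVD argument shows this projection preserves \sepn and \ntsc up to constants, and makes $\sigma(S)^2$ on any subset $S$ comparable to the average 1-means cost of $S$), then greedily peels clusters by repeatedly extracting the ball of size $\ge \hatw n$ with minimum radius until nothing remains, producing candidates $X_1,\ldots,X_{\hatk}$. Finally, for each $X_i$ it runs a pruning step that removes any subset $T\subseteq X_i$ with $\sigma(T)$ substantially smaller than $(|T|/|X_i|)\,\sigma(X_i)$ (the tightness level that \ntsc would preclude inside a true cluster). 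The test accepts iff no $X_i$ loses more than half its points to pruning.

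The proof then splits into two claims. (\textbf{Soundness}) At $\hatw=w_0$, each peeled $X_i$ equals a true cluster $C_{\pi(i)}$ up to a small misassigned tail: in the projected space, \ntsc and \sepn ensure that every small-radius ball of size $\ge w_0 n$ is essentially supported on a single $C_j$, and a straightforward induction over peel rounds propagates this. Pruning can at worst strip the misassigned tail, so $\hatk=k$ is accepted. (\textbf{Completeness}) For any $\hatw$ leading to $\hatk\neq k$, the certifier rejects. If $\hatk>k$, two candidates $X_i,X_j$ contain $\Omega(\hatw n)$ points each from a common $C_\ell$; those slices lie near $\mu(C_\ell)$ and hence form a subset of $X_i$ whose tightness is strictly smaller than $X_i$'s, so pruning removes more than half of $X_i$. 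If $\hatk<k$, some $X_i$ contains $\Omega(\hatw n)$ points from two true clusters with well-separated means, which by a direct variance calculation forces $\sigma(X_i)$ to dwarf the tightness of either slice, and pruning again rejects.

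I expect the hardest step to be \textbf{Soundness}, specifically the inductive claim that after the first true cluster has been peeled, the greedy next step does not contaminate $X_2$ with a nontrivial residual from $C_1$. The residual point set no longer has a uniform minimum weight $w_0$, and in principle a ball that straddles $C_1$ and $C_2$ could have small average radius. Ruling this out needs \ntsc on $C_1$ to lower-bound $\sigma$ of any would-be tight subset containing leftover $C_1$ points, and \sepn to show no low-radius ball near $\mu(C_2)$ can swallow more than a vanishing fraction of $C_1$. Once this invariant is in place, \textbf{Completeness} is essentially a sequence of triangle-style variance bounds, and the polynomial runtime is immediate since peeling and pruning both reduce to ranking polynomially many candidate balls by $\sigma$.
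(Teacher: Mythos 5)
Your overall architecture matches the paper's: project to the $O(1/\hatw)$-dimensional SVD subspace, peel clusters greedily, prune each candidate, and sweep $\hatw$ downward until a certifier accepts. Soundness at the true $w_0$, and the rejection argument for $\hatk<k$, also track the paper closely (the paper uses precisely the ``$\sigma(X_i)$ dwarfs the slice tightness'' argument via Lemmas~\ref{lem:larges} and~\ref{lem:prune}). However, there is a genuine gap in your completeness argument, and a secondary issue with your search grid.

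The gap is in the case $\hatk>k$. You claim that when two candidates $X_i,X_j$ each receive $\Omega(\hatw n)$ points from a common true cluster $C_\ell$, the $C_\ell$-slice of $X_i$ is ``a subset of $X_i$ whose tightness is strictly smaller than $X_i$'s,'' so pruning removes more than half of $X_i$. That inference fails in exactly the situation that produces $\hatk>k$: when a true cluster is over-split, each piece $X_i$ \emph{is} essentially the $C_\ell$-slice, so the slice is not strictly tighter than $X_i$ (by \ntsc within $C_\ell$, a large piece of a true cluster has no anomalously tight sub-chunk), and the Prune procedure removes essentially nothing. Condition ``$|\hatX_h|\ge|X_h|/2$'' passes, and your certifier accepts a wrong $\hatk$. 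The paper handles this case with a separate check that is absent from your certifier: a \emph{pairwise mean-separation test} among the candidates (condition (a) in Figure~\ref{fig:unknownw}, verified in Section~\ref{sec:polyw1} via Claim~\ref{cl:part1} and the subsequent triangle-inequality chain). When a true cluster is over-split, the two pieces both have means near $\mu(C_\ell)$, and it is this separation test, not pruning, that rejects. You need to add that check and route the $\hatk>k$ rejection through it; pruning alone is not enough.

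A secondary issue is the grid $\hatw\in\{1,1/2,1/4,\ldots\}$. The paper's completeness guarantees that the certifier accepts at $\hatw=w_0$ exactly (and the ``early acceptance implies $\hatk=k$'' lemmas are only proved for $\hatw>w_0$); it therefore decreases $\hatw$ in steps of $1/n$ and assumes $w_0$ is a multiple of $1/n$. A geometric grid generally skips $w_0$, so you must either (i) prove the soundness lemmas (Lemmas~\ref{lem:key1}, \ref{lem:useful1}, \ref{lem:tight}) remain valid for some $\hatw\in[w_0,2w_0]$ on your grid, or (ii) use the arithmetic grid. Left unaddressed, your sweep may descend below $w_0$, where none of the guarantees apply. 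The remaining differences (minimum-radius ball versus the paper's \outl-then-ball peel) are cosmetic and should not affect correctness, provided you confirm your peel sub-routine is polynomial in the projected space (the paper makes this explicit by tying ``tightness'' to centered 1-means cost in $M$, which is efficiently computable, rather than to $\sigma$ directly, which would be NP-hard by Theorem~\ref{thm:nphard}).
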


Our algorithm also gives an approximate clustering on $P$ into clusters which match with the true clustering on a large fraction of points. 

As an application of this result, we consider points sampled from a mixture of distributions, where each distribution is sub-Gaussian. We assume that the separation between the means of any two distinct distributions from such a mixture is at least $poly(1/w_0)$ times the maximum directional variance of any of the component distributions. Under this mild assumption, we show that the data sampled from the mixture model satisfies well-separatedness and \ntsc. Well-separatedness follows from the fact that the sample and actual means and variances are close to each other. For \ntsc, we need a crucial technical assumption that anti-concentration properties hold for sub-Gaussian pdf's in the mixture. [While all log-concave pdf's (and as a subclass, all Gaussians) automatically satisfy anti-concentration, sub-Gaussian pdf's could behave wildly in sets of small measure and hence, we need the assumption.]

Intuitively, anti-concentration implies that no region of the probability space can have high density, and so the actual samples from this region cannot be more tightly concentrated (compared to rest of the samples from a component distribution).

The anti-concentration property, which relies on upper bounds on the pdf's, does not hold in case of discrete distributions. Weaker versions of anti-concentration which accommodate point masses imply that sampled data satisfy a milder version of \ntsc, which we call \wntsc. Recall that for a point set $X$ to satisfy \ntsc, we needed $\sigma(S)$ to be $\Omega \left( \frac{|S|}{|X|} \sigma(X) \right)$ for every large subset $S$ and restrictions on every line $L$. In \wntsc, we need this property to hold in the underlying space only. 

Our next result gives a polynomial time algorithm for points satisfying \wntsc, but under the assumption that $w_0$ is given in addition to data. 
\begin{theorem}
\label{main2}
Given a set of points $P$ implicitly partitioned into clusters $C_1, \ldots, C_k$ satisfying well-separatedness and \wntsc, and the parameter $w_0$, there is a polynomial time algorithm which correctly identifies $k$.
\end{theorem}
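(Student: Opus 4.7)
My plan is to follow the architecture of Theorem~\ref{main1} with two simplifications and one substantive change. Because $w_0$ is now given as input, I drop the outer loop that guesses $\hatw$ and simply sweep $k' = 1, 2, \ldots, \lceil 1/w_0 \rceil$, returning the smallest value that passes a validation step. The substantive change is that \wntsc (unlike \ntsc) lower-bounds $\sigma(S)$ only as a spectral quantity in the full space and not on arbitrary $1$-dimensional projections; I therefore confine all variance-based reasoning to the top-SVD subspace, where spectral and Frobenius norms are comparable up to the dimension and the normalized $1$-means cost becomes a usable proxy for $\sigma^2$.

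Concretely, first project $P$ onto its top-$d'$ right-singular subspace $V$, with $d' = \lceil 1/w_0 \rceil$. A standard SVD argument, using $|C_\ell| \geq w_0 n$, gives (i) every cluster mean $\mu(C_\ell)$ lies within $O(\sigma_{\max}/\sqrt{w_0})$ of $V$, so \sepn descends to $V$ with at most a constant-factor loss, and (ii) $\sigma(P_V(C_\ell))$ is within constants of $\sigma(C_\ell)$, so \wntsc also descends to $V$. Inside $V$, for any subset $X$,
\[
\sigma(X)^2 \;\leq\; \frac{1}{|X|}\sum_{x\in X}\lVert x-\mu(X)\rVert^2 \;\leq\; d'\cdot\sigma(X)^2,
\]
so the \wntsc tightness constraint can be recast as a $1$-means-cost lower bound, losing only a factor $d' = O(1/w_0)$ that the polynomial separation slack can absorb.

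For each $k' \in \{1, \ldots, d'\}$, run a known initialization-plus-Lloyd scheme (e.g.\ the Awasthi--Sheffet or Kumar--Kannan algorithm) on the projected data to obtain candidates $X_1, \ldots, X_{k'}$, and then apply a \prune-style validation to every $X_i$: search for $S \subseteq X_i$ with $|S| \geq w_0 n/2$ whose $1$-means cost falls much below the $(|S|/|X_i|)^2$ fraction of the $1$-means cost of $X_i$ that \wntsc would predict. Since $V$ has dimension $O(1/w_0)$, this search runs in polynomial time by enumerating candidate centers from the data (or from a polynomial-size geometric net of radii) and, for each, taking the $\lceil w_0 n/2\rceil$ nearest points of $X_i$. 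Reject the clustering if such a witness is found on any $X_i$; otherwise accept and output this $k'$.

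For correctness, the two cases mirror Theorem~\ref{main1}. If $k' < k$, pigeonhole forces some $X_i$ to contain $\Omega(w_0 n)$ points from two GT clusters $C_\ell, C_{\ell'}$; by \sepn, $\sigma(X_i)$ in $V$ is of order $\lVert\mu(C_\ell)-\mu(C_{\ell'})\rVert$, polynomially larger than $\sigma(C_\ell)$, so $S = X_i \cap C_\ell$ furnishes a valid witness and the clustering is rejected. For $k' = k$, the initialization subroutine returns an $(X_i)$ that agrees with GT on all but an $O(w_0)$ fraction of points, so any candidate witness $S$ differs only negligibly from a subset of a single $C_\ell$; the preserved \wntsc in $V$, combined with the $\sigma^2$-vs-$1$-means equivalence above, lower-bounds the $1$-means cost of $S$ enough to defeat the validation threshold. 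The main obstacle I anticipate is precisely the matching of constants in the validation: the polynomial-time densest-subset routine must be sensitive enough to catch every bad $X_i$ when $k' < k$ while remaining conservative enough to leave the correct clustering intact at $k' = k$. This reduces to a quantitative gap between the $\Theta(1)$ tightness ratio of genuinely mixed clusters (driven by $p(1/w_0)$-separation) and the $\Theta(1/\mathrm{poly}(1/w_0))$ tightness floor guaranteed by \wntsc, and the low dimension $d' = O(1/w_0)$ is exactly what keeps these two thresholds polynomially separated.
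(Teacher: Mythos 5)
Your plan hinges on claim (ii) that ``\wntsc also descends to $V$,'' and this is precisely the step that does not hold; it is, in fact, the entire reason the paper's own proof of this theorem abandons the peel-in-subspace strategy of Theorem~\ref{main1}. The \wntsc condition lower-bounds $\sigma(T)$ only as a spectral quantity in $\Re^d$, with no control over how a subset's spread distributes across directions. Projecting to a $d'$-dimensional SVD subspace $V$ can collapse a qualifying subset $T$ of $C_\ell$ to nearly a point (say $T$'s spread is concentrated in directions orthogonal to $V$), so $\sigma(\pi_V(T))$ can be arbitrarily small even though $\sigma(T)$ is not, and the inequality $\sigma(\pi_V(T))^2 \gtrsim (|T|/|C_\ell|)^2 \sigma(\pi_V(C_\ell))^2$ simply fails. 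The loss here is not the $O(1/w_0)$ factor you mention absorbing with separation slack; it is unbounded. (By contrast, \ntsc is stated on every $1$-dimensional subspace, so it does survive projection, which is why Theorem~\ref{main1}'s algorithm can work entirely in $V$.) Your validation step would therefore reject the correct clustering at $k'=k$: a legitimate subset $S\subseteq X_i\approx C_\ell$ could look artificially tight after projection and be flagged as a witness.

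The paper resolves this by splitting the use of the subspace from the use of \wntsc. It still finds the approximate center $\nu_j$ via \outl on $\pi_M(P^{(j)})$ (your step one is fine), but the peeling itself is done by a convex program $\calC(m,\nu_j,P^{(j)})$ whose objective $\|B_y\|/\sqrt{m}$ is the spectral norm in the \emph{original} ambient space, with the matrix $B_y$ having rows $y_i(x_i-\nu_j)$ for the full-dimensional $x_i$. The crucial lower bound (Lemma~\ref{lem:lower}) then invokes \wntsc directly in $\Re^d$, which is exactly where it is available. This convex program also sidesteps the ``enumerate centers and take nearest points'' idea in your search: taking the $m$ nearest points to a center minimizes a \emph{Frobenius}-type cost, not a spectral one, and in the original (possibly high-dimensional) space those two are not comparable up to small factors, so that search does not certify a small $\sigma$. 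In short, your architecture correctly identifies that the algorithm should peel one cluster per round using a tightness test, but the choice to move the tightness test into the SVD subspace makes it a test of a property (\wntsc in $V$) that you are not given and that genuinely may fail, whereas the paper's convex-program formulation keeps the test in the space where \wntsc does hold.
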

Note that unlike the algorithm in Theorem~\ref{main1}, the above algorithm requires the knowledge of $w_0$, but relies on \wntsc. As mentioned in the introduction, knowledge of $w_0$ is often a much milder assumption than that of $k$. In the algorithm in Theorem~\ref{main1}, we proceeded by peeling off clusters in the SVD subspace. Here we cannot do that because \ntsc may not hold in a subspace. Instead we use a convex program to identify the clusters which are peeled off in each iteration. 

We apply this result to points sampled from stochastic block model (SBM). In SBM, there are $k$ classes with each class $\ell$ having a relative weight $w_\ell$. There is also an implicit $k \times k$ symmetric probability matrix $P$. Points are sampled as follows: first each point is assigned to a class with probability $w_\ell$. Then we build a graph on these points where an edge between two points belonging to communities $i$ and $j$ is added with probability $P_{ij}$. We can view the adjacency matrix as representing points in $\Re^n$. We show that if the intra-cluster probabilities (i.e., diagonal entries of $P$) are sufficiently higher than the inter-cluster probabilities (our separation condition is close to that in many related in works on SBM's (see e.g.~\cite{McSherry01}), the sampled points satisfy well-separatedness and \wntsc. The proof of \wntsc property follows from a weaker anti-concentration result for binomial distributions. 



\subsection{Related Work}
Determining the number of clusters, $k$, is an important open problem which has received considerable attention over the last four decades. 
The first approach for finding $k$ can be traced back to \cite{gauss} where 
clustering is modelled as fitting a mixture of Gaussians with $k$ mixture components. Instead of fitting distributions, Hartigan  \cite{hartigan75} attempted to find the smallest $k$ such that quality of the clustering obtained from a procedure such as $k$-means, is within acceptable limits. Since then several procedures for determing $k$,(see \cite{surveyk} for a survey),
which have shown good empirical performance on clustering data obtained from  large class of distributions have been reported. A common strategy among all such methods is to evolve a measure  of quality of clustering  which have a monotonic behaviour as a function of $k$. Often these measures shows that the monotonic behaviour flattens beyond a certain value of $k$, the ``elbow'', and this value is often taken to be the true value of $k$. A satisfactory explanation of why such methods performs well in practice is still elusive. \emph{Gap Statistic}\cite{gap} is the first rigorous study which aims to explain when such ``elbow'' methods succeed and can be considered as state of the art. \cite{gap} reports that the Gap-statistic performs well 
when the ``clusters are well separated''. However, there are no precise characterizations of the separation, neither the class of distributions for which the Gap Statistic recovers true $k$ is known. 

Lloyd's $k-$means algorithm~\cite{lloyd} is one of the most widely used methods for clustering. 
The algorithm needs $k$ as input. In practice, heuristics are used to get the value of $k$. In the theory/algorithms literature on $k-$means, generally, $k$ is assumed to be given. $k-$means++ algorithm~\cite{ArthurV07} initializes with $k$ means and a wrong value of $k$ can lead it astray at the start.  
Many spectral clustering methods project data points to the $k$ dimensional Singular Value Decomposition subspace ~\cite{VempalaW04, KK10, AwasthiS12} at the outset and an inexact $k$ can make a substantial difference. Recent progress on clustering data generated by GMM's involves iterative algorithms~\cite{RV17, KwonC, Hopkins018, KothariSS18}
starting with an initialization process which again crucially needs $k$ as the number of initial centers to choose. Further all these algorithms require the exact value of $k$.

There has been lot of work on clustering data under deterministic assumptions; however most of these results require the knowledge of the parameter $k$. This is the case for stability defined in ~\cite{BiluL12},
as well as the one introduced in ~\cite{BalcanBG13}.  
If $k$ is not given, even in simple examples, data can satisfy these notions of stability with multiple values of $k$, so $k$ is in general not identifiable under the promise of stability  (see Example~\ref{stability} in Section~\ref{sec:ex}.)
Stability also has another issue: the separation it requires is too large to fit the mold of ``means separated by XX standard deviations'' for the XX we use here. 
~\cite{OstrovskyRSS12} defined a slightly different notion of robustness: the optimal $k$-means objective value of the input is at most $\varepsilon$ times the optimal $k-1$-means objective value. This could be taken as a definition of the right parameter ``$k$'' and is similar to the ``elbow'' method.
However, we show in Section~\ref{sec:ex} that the Elbow method is not sharp enough to figure out the correct value of $k$ when data is generated from (well-separated) mixture of Gaussians. 

A weaker notion of robustness, called the proximity condition is defined in \cite{KK10}.
While this provides motivation for our set-up here with purely deterministic assumptions on GT as well as the use of $\sigma$ (which is related to spectral norm as we see below), \cite{KK10} also require knowledge of $k$ at the outset. The same applies to the improvement due to ~\cite{AwasthiS12}.
 

There has been lot of deep work in clustering data and learning parameters of underlying generative model when the data is generated  from a stochastic model, in particular mixture of distributions. Perhaps the most significant special case is data generated from mixture of Gaussians (GMM). 
 A breakthrough result by Dasgupta~\cite{sanj99} showed that one can recover 
the parameters if the means of component Gaussians are separated. Following this result, a long line of work 
\cite{arora05,Schul,moi,VempalaW04,RV17,KwonC} focusing  on Gaussian Mixture models(GMMs) have developed 
powerful theories which aim to  recover the true parameters even when the separation between the component means is small. All of these assume $k$ is given.

For stochastic block models (SBM's), there have been mathematically sophisticated methods to achieve learning under various separation conditions which rely on gaps between the probability vectors~(see e.g. \cite{McSherry01,Lee2019,Abbe17}).  However, all of these require that $k$ is known.

\subsection{Preliminaries}
\label{sec:prel}
We are given a set of $n$ points $P$ in $\Re^d$. 
These points have an implicit partitioning $C_1, \ldots, C_k$ into $k$ clusters. This partition is the {\it Ground Truth Clustering.}

For a subset $X$ of points in $\Re^d$, we define the maximum directional standard deviation, denoted $\sigma(X)$ as follows: let $n$ denote $|X|$ and $A$ be the $n \times d$ where row $i$ of $A$ is given by $x_i - \mu(X).$ Here $x_i$ is in ${\bf R}^d$ and denotes the $i^{th}$ point in $X$ and $\mu(X)$ denotes the (coordinate-wise) mean of $X$. Then 
$ \sigma(X) := \frac{||A||}{\sqrt{n}}, $ where $||A||$ denotes the spectral norm of $A$. 

We now state the three conditions which we assume are satisfied by Ground Truth clustering: 

\noindent
{\bf{Minimum Cluster Weight Condition:}}
There is a parameter $w_0$ such that $|C_i| \geq w_0 n $ for $i=1, \ldots, k$. We also assume that   $n \geq \frac{100}{w_0^5}. $

\noindent
{\bf{Separation Condition:}}
Let $\sigma_0$ denote $\max_{i=1}^k \sigma(C_i).$
The \sepn (or ``strong separation'') condition states that for every distinct pair of indices $\ell, \ell' \in \{1, \ldots, k\}$, 
\begin{align}
    \label{eq:sept}
|\mu(C_\ell)-\mu(C_{\ell'})|\geq \gamma\sigma_0, \quad \gamma =\frac{K}{w_0^{11}}, \, {\mbox{where  $K$ is  a large enough constant}}.
\end{align}

There is a milder ``weak separation'' condition where the above condition is replaced by 
\begin{align}
    \label{eq:sept1}
|\mu(C_\ell)-\mu(C_{\ell'})|\geq \gamma (\sigma(C_\ell) + \sigma(C_{\ell'})), \quad \gamma =\frac{K}{w_0^{11}}
\end{align}

\noindent
{\bf No Tight Sub-cluster Condition (\ntsc):}
 For $\ell=1,2,\ldots ,k$, every subset $T$ of $C_\ell$ with $|T|\geq \frac{\sqrt{n} \log n}{100}$,
 and any one dimensional subspace $L$, 
\begin{align}
    \label{eq:ntscdef}
\sigma^2(\pi_L(T))\geq \frac{|T|^2}{125 \cdot |C_\ell|^2}\sigma^2(\pi_L(C_\ell)),
\end{align}
where for a set $A$ of points, $\pi_L(A)$, denotes the set of projections of points of $A$ on $L$.

Some of our results will rely on a weaker tightness condition which does not require it to hold on all projections on lines, but only in the original space $\Re^d$.

\noindent
{\bf Weak No Tight Sub-cluster Condition (\wntsc)}
For $\ell=1,2,\ldots ,k$ and for every subset $T$ of $C_\ell$ with $\frac{|T|}{|C_\ell|}\geq
\frac{\sqrt{n} \log n}{100},$ we have 

$$\sigma^2(T)\geq \frac{|T|^2}{125|C_\ell|^2}\sigma^2(C_\ell).$$
It is not hard to show that \ntsc implies \wntsc. 


We now state  a few useful results about $\sigma(X)$ for a set of points $X$. Proof of the following is deferred to the appendix. 
\begin{restatable}{claim}{simple}
\label{cl:sigma1}
Let $S$ be a subset of $X$. Then $|S| \sigma(S)^2 \leq |X| \sigma(X)^2.$
\end{restatable}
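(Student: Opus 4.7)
The plan is to unfold the definition of $\sigma(\cdot)^2$ as a maximum over unit directions and then use the standard fact that sums of squared deviations are minimized at the mean, so projecting onto the direction that witnesses $\sigma(S)^2$ and comparing centers at $\mu(S)$ versus $\mu(X)$ only makes things larger.

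Concretely, first I would pick a unit vector $v^\star$ that attains the maximum in the definition of $\sigma(S)^2$, so that
\[
|S|\,\sigma(S)^2 \;=\; \sum_{x\in S}\bigl(v^\star\cdot(x-\mu(S))\bigr)^2.
\]
Next, I use the scalar fact that for any real numbers $a_1,\dots,a_m$, the function $c\mapsto \sum_i (a_i-c)^2$ is minimized at $c=\frac{1}{m}\sum_i a_i$. Applying this to the scalars $a_x := v^\star\cdot x$ for $x\in S$, whose average is $v^\star\cdot \mu(S)$, gives
\[
\sum_{x\in S}\bigl(v^\star\cdot(x-\mu(S))\bigr)^2 \;\le\; \sum_{x\in S}\bigl(v^\star\cdot(x-\mu(X))\bigr)^2.
\]
Then, since $S\subseteq X$ and every summand is nonnegative, I extend the sum from $S$ to $X$:
\[
\sum_{x\in S}\bigl(v^\star\cdot(x-\mu(X))\bigr)^2 \;\le\; \sum_{x\in X}\bigl(v^\star\cdot(x-\mu(X))\bigr)^2.
\]
Finally, the right-hand side is at most $|X|\,\sigma(X)^2$, because $v^\star$ is a particular unit vector and $\sigma(X)^2$ is the maximum of $\frac{1}{|X|}\sum_{x\in X}(v\cdot(x-\mu(X)))^2$ over all unit $v$. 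Chaining these three inequalities yields $|S|\sigma(S)^2\le |X|\sigma(X)^2$.

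I do not expect any real obstacle here; the only thing to be careful about is the order of the two comparisons (replacing $\mu(S)$ by $\mu(X)$ before extending the sum), since swapping them would require comparing $\mu(X)$ to an average over $X$ rather than over $S$, which is exactly what the mean-minimization fact provides on $S$.
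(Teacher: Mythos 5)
Your proof is correct and follows exactly the same chain of inequalities as the paper's: pick the maximizing direction for $\sigma(S)^2$, replace $\mu(S)$ by $\mu(X)$ via mean-minimization, extend the sum from $S$ to $X$, then bound by $\sigma(X)^2$ using that it is a maximum over directions. No substantive difference.
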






The following lemma, whose proof is deferred to the appendix, 
states that if two sets $R$ and $S$ have large intersection, then their means are not too far apart in distance units measured in directional standard deviations $\sigma$. 

\begin{restatable}{lemma}{Tech}
\label{RcapS}
Suppose $R,S\subseteq [n]$. 
Then, 
$$|\mu(R)-\mu(S)|^2\leq \frac{2}{|R\cap S|} \left( |R|\sigma^2(R)+|S|\sigma^2(S)\right).$$
\end{restatable}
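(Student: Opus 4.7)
The plan is to exploit the fact that the quantity $|\mu(R)-\mu(S)|^2$ is a single scalar that lives along a specific direction, and then use points in $R\cap S$ as ``witnesses'' that tie the two means together. Concretely, let $v$ be the unit vector along $\mu(R)-\mu(S)$, so that
\[
|\mu(R)-\mu(S)|^2 \;=\; \bigl(v\cdot(\mu(R)-\mu(S))\bigr)^2.
\]
For any $x\in R\cap S$, write $v\cdot(\mu(R)-\mu(S)) = v\cdot(\mu(R)-x) + v\cdot(x-\mu(S))$ and apply the elementary inequality $(a+b)^2\le 2(a^2+b^2)$. This yields, pointwise in $x\in R\cap S$,
\[
|\mu(R)-\mu(S)|^2 \;\le\; 2\bigl(v\cdot(x-\mu(R))\bigr)^2 + 2\bigl(v\cdot(x-\mu(S))\bigr)^2.
\]

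The next step is to average this inequality over $x\in R\cap S$. Summing and dividing by $|R\cap S|$ gives
\[
|\mu(R)-\mu(S)|^2 \;\le\; \frac{2}{|R\cap S|}\sum_{x\in R\cap S}\bigl(v\cdot(x-\mu(R))\bigr)^2 + \frac{2}{|R\cap S|}\sum_{x\in R\cap S}\bigl(v\cdot(x-\mu(S))\bigr)^2.
\]
The first sum on the right is over a subset of $R$, so it is upper bounded by $\sum_{x\in R}(v\cdot(x-\mu(R)))^2$, which in turn is at most $|R|\sigma^2(R)$ by the definition of $\sigma(R)$ as the maximum directional standard deviation (since the definition takes a maximum over unit directions, the sum along our particular $v$ is bounded by $|R|\sigma^2(R)$). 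The analogous bound holds for the $S$-sum. Substituting gives exactly
\[
|\mu(R)-\mu(S)|^2 \;\le\; \frac{2}{|R\cap S|}\bigl(|R|\sigma^2(R)+|S|\sigma^2(S)\bigr),
\]
which is the claim.

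There is no real obstacle here: the only mild subtlety is remembering that one must project onto the particular direction $v$ connecting the two means (rather than work with full vectors), so that the definition of $\sigma(\cdot)$ as the maximum \emph{directional} variance can be invoked and the extension from summing over $R\cap S$ to summing over all of $R$ (or $S$) costs nothing because each summand is a nonnegative square.
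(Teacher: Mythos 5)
Your proof is correct and is essentially the paper's argument: both project onto the direction joining $\mu(R)$ and $\mu(S)$, use the elementary inequality relating $(a\pm b)^2$ to $a^2$ and $b^2$, extend the sum from $R\cap S$ to all of $R$ (and $S$) by nonnegativity, and invoke the definition of $\sigma$ as the maximum directional variance. Your version is presented a bit more symmetrically (pointwise triangle split, then average) whereas the paper pivots on the single quantity $\sum_{j\in R\cap S}(v\cdot(A_{\cdot,j}-\mu(R)))^2$ bounded above and below, but these are the same computation.
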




\begin{definition}
Given a set of $n$ points $X$ in $\Re^d$, the \centone problem seeks to find a center among $X$ which minimizes the 1-means cost of assigning all of $X$ to this center, i.e., we want to minimize (also denoted as the {\em centered 1-means cost of $X$})
$ \min_{x \in X} \sum_{x' \in X} ||x-x'||^2. $

An instance of the \outl problem is defined as above along with a parameter $m$. The goal is to find a subset $X' \subset X$ of size $m$ such that the centered 1-means cost of $X'$ is minimized. 
\end{definition}

The \centone and the \outl problems can be easily solved in polynomial time because we just need to try each point in $X$ as a potential center (in case of \outl, we just need to pick the $m$ closest points to this center). For an instance $\cI$ of \centone, let $\opt(\cI)$ denote its optimal cost. It is also well-known that $\opt(\cI)$ is at most four times the optimal 1-means cost of the instance $\cI$. We now relate $\opt(\cI)$ to $\sigma(X)$, where $X$ denotes the set of points in $\cI$. The proof is deferred to the appendix. 


\begin{restatable}{claim}{abc}
\label{cl:centone}
  Consider an instance $\cI$ of \centone consisting of a set $X$  of points in $\Re^d$. Then 
  $$ \sigma(X)^2 \leq \frac{\opt(\cI)}{|X|} \leq 4d \cdot \sigma(X)^2. $$
\end{restatable}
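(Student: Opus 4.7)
The plan is to relate both sides of the desired inequality to the quantity $S(X) := \sum_{x \in X} \|x - \mu(X)\|^2$, the unrestricted 1-means cost of $X$ with center at its centroid. On one hand, $S(X)$ is easy to compare to $\sigma(X)^2$ via linear algebra; on the other hand, it is easy to compare to $\opt(\cI)$ via the two facts that the centroid minimizes 1-means cost over all of $\Re^d$, and that (as the paper notes) restricting the center to lie in $X$ inflates the optimum by a factor of at most $4$.

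Concretely, let $A$ denote the $|X| \times d$ matrix whose rows are $x - \mu(X)$, so that $n \sigma(X)^2 = \|A\|^2$ (spectral norm squared) by the definition in the excerpt, and $S(X) = \|A\|_F^2$. Since $\|A\|_F^2 = \sum_i \sigma_i(A)^2$ is a sum of at most $d$ squared singular values each bounded by $\|A\|^2$, we have
\[
n \cdot \sigma(X)^2 \;=\; \|A\|^2 \;\leq\; \|A\|_F^2 \;=\; S(X) \;\leq\; d \cdot \|A\|^2 \;=\; n \cdot d \cdot \sigma(X)^2.
\]

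For the lower bound in the claim, I would use the standard identity $\sum_{x \in X}\|x - c\|^2 = S(X) + |X| \cdot \|c - \mu(X)\|^2$ for any $c \in \Re^d$, which shows that for every candidate center $c \in X$, $\sum_{x \in X}\|x - c\|^2 \geq S(X)$; hence $\opt(\cI) \geq S(X) \geq n \sigma(X)^2$, which rearranges to $\sigma(X)^2 \leq \opt(\cI)/|X|$. For the upper bound, I would invoke the fact, stated in the paragraph preceding the claim, that $\opt(\cI) \leq 4 \cdot S(X)$ (the centered 1-means cost is at most four times the unrestricted 1-means optimum), and combine with $S(X) \leq n d\, \sigma(X)^2$ to get $\opt(\cI) \leq 4 d \cdot |X| \cdot \sigma(X)^2$, which is the desired upper bound.

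There is no substantial obstacle here; the claim is essentially a translation between spectral and Frobenius norms of the centered data matrix combined with two elementary facts about 1-means cost, both of which are either invoked as ``well-known'' in the statement or follow from the parallel axis identity. The only thing worth being careful about is that the spectral-to-Frobenius gap introduces exactly the factor of $d$ (and no more), and that the restriction of the center to lie in $X$ contributes only the constant factor $4$ coming from the cited well-known bound.
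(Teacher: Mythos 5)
Your proof is correct and follows essentially the same route as the paper: bound the unrestricted centroid-centered cost $S(X)$ between $|X|\sigma(X)^2$ and $d\,|X|\sigma(X)^2$ via the spectral-versus-Frobenius norm comparison on the centered data matrix, then lose only the constant factor $4$ when restricting the center to a data point. (If anything your write-up is the more careful one; the paper's appendix sentence ``$\sigma(X)^2$ is at least the 1-means cost of $X$ and at most $d$ times this quantity'' has the first inequality reversed, but the intended argument is exactly yours.)
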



We now give an outline of rest of the paper.  In Section~\ref{sec:id1}, we prove Theorem~\ref{k-first}. We give a polynomial time algorithm for identifying $k$ (Theorem~\ref{main1}) in Section~\ref{sec:poly}. This algorithm is presented in two parts -- in Section~\ref{sec:polyw}, we give a polynomial time algorithm which knows the parameter $w_0$. In Section~\ref{sec:polyw1}, we remove this assumption: the algorithm here tries different values of $w_0$ and uses the previous algorithm as a sub-routine. In Section~\ref{sec:polyw2}, we give another polynomial time algorithm which relies on the knowledge of $w_0$, but works with \wntsc (Theorem~\ref{main2}). In Section~\ref{sec:a3}, we apply our results to data generated from stochastic models. In Section~\ref{sec:np}, we show that the problem of finding a subset $X$ of certain size with minimum $\sigma(X)$ is APX-hard. Finally, we give some counterexamples in Section~\ref{sec:ex}. 

\section{A Simple Procedure for Identifying $k$}
\label{sec:id1}
In this section, 
we prove Theorem~\ref{k-first}. We are given a set of $n$ points $P$ in $\Re^d$ satisfying weak separation  and \wntsc.
We also assume that $n \geq \frac{100}{w_0^5}. $ In particular, this implies that 
\begin{align}
    \label{eq:lbnd} 
    w_0^2 n \geq \sqrt{n} \log n
\end{align}


In fact the procedure to identify the parameter $k$ is very simple and is given in Figure~\ref{fig:weak}. We try all partitions of the point set in ascending order of the number of clusters in it, and output the first one which satisfies \wntsc. 


\begin{figure}[h]
    \begin{procedure}[H]
    $\hatk \leftarrow 1.$ \\
    \Repeat{the procedure halts}{

           \For{every partition $\calT = \{T_1, \ldots, T_{\hatk} \}$ of $P$}{
              \If{$\calT$ satisfies \wntsc}{
                    Halt and Ouput $\hatk$.               
              }
           }
          $\hatk \leftarrow \hatk+1$ \\
    
    }
    \end{procedure}
    \caption{ Procedure for identifying $k$ without knowing $w_0$.}
    \label{fig:weak}
\end{figure}

It is easy to see that the procedure will halt with $\hatk \leq k$: when we try the partition $C_1, \ldots, C_k$, it will halt with output $k$. 
%
%
 In order to prove correctness, we need to argue that if our procedure stops earlier with a partition $\calT$, then the number of sets in this partition must be $k$. 

For rest of the argument, fix such a partition $\calT = \{T_1, \ldots, T_{\hatk} \}$  of $P$  which satisfies  \wntsc. We will show that $\hatk = k$. We begin with a simple application of Lemma~\ref{RcapS}:

\begin{lemma}
\label{lem:rcaps}
Consider a cluster $T_\ell$ and a subset $S \subseteq T_\ell, |S| \geq \lntsc$. Then, 
$$ |\mu(S) - \mu(T_\ell)| \leq \frac{50 |T_\ell|^{3/2}}{|S|^{3/2}} \sigma(S). $$
\end{lemma}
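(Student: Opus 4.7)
The plan is to apply Lemma~\ref{RcapS} with $R = T_\ell$ and $S = S$. Since $S \subseteq T_\ell$, we have $R \cap S = S$, and the lemma gives
\[
|\mu(T_\ell) - \mu(S)|^2 \leq \frac{2}{|S|}\bigl(|T_\ell|\sigma^2(T_\ell) + |S|\sigma^2(S)\bigr).
\]
Claim~\ref{cl:sigma1} applied to $S \subseteq T_\ell$ tells us that $|S|\sigma^2(S) \leq |T_\ell|\sigma^2(T_\ell)$, so the right hand side simplifies to at most $\frac{4 |T_\ell|\sigma^2(T_\ell)}{|S|}$.

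Next I would use the hypothesis that the partition $\calT$ satisfies \wntsc. Since $S \subseteq T_\ell$ and $|S| \geq \lntsc$, the \wntsc condition (with $T_\ell$ playing the role of $C_\ell$) yields
\[
\sigma^2(S) \geq \frac{|S|^2}{125 |T_\ell|^2}\sigma^2(T_\ell),
\]
which rearranges to $\sigma^2(T_\ell) \leq \frac{125 |T_\ell|^2}{|S|^2}\sigma^2(S)$. Substituting into the previous bound gives
\[
|\mu(T_\ell) - \mu(S)|^2 \leq \frac{4 |T_\ell|}{|S|} \cdot \frac{125 |T_\ell|^2}{|S|^2}\sigma^2(S) = \frac{500 |T_\ell|^3}{|S|^3}\sigma^2(S).
\]
Taking square roots and using $\sqrt{500} < 50$ yields the claimed inequality.

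There is no real obstacle here; the only thing to be careful about is that applying \wntsc requires the size threshold $|S| \geq \lntsc$, which is exactly the hypothesis of the lemma, and that the role of $C_\ell$ in the definition of \wntsc is played by $T_\ell$, which is justified by our standing assumption that the partition $\calT$ itself satisfies \wntsc. The rest is a short two-line chain of inequalities combining Lemma~\ref{RcapS}, Claim~\ref{cl:sigma1}, and \wntsc.
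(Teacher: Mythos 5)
Your proof is correct and follows essentially the same route as the paper: both apply Lemma~\ref{RcapS} with $R=T_\ell$ and then use \wntsc to bound $\sigma^2(T_\ell)$ in terms of $\sigma^2(S)$. The only difference is that you additionally invoke Claim~\ref{cl:sigma1} to collapse the $|S|\sigma^2(S)$ term before substituting \wntsc, ending up with a constant of $\sqrt{500}$ rather than the paper's slightly tighter intermediate bound; both sit comfortably under $50$, so there is no gap.
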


\begin{proof}
A direct application of Lemma~\ref{RcapS} shows that 
$$ |\mu(S) - \mu(T_\ell)|^2 \leq \frac{2}{|S|} \left( |T_\ell| \sigma^2(T_\ell) + |S| \sigma^2(S) \right). $$
The \wntsc property 
implies that $\sigma^2(T_\ell) \leq \frac{125 |T_\ell|^2}{|S|^2} \sigma^2(S). $ Substituting this in the above inequality yields the desired result.  \qed
\end{proof}

We first show that no cluster in $\calT$ can have significant number of points from two distinct clusters in $\{C_1, \ldots, C_k\}$. 

\begin{lemma}
\label{lem:clus1}
For any cluster $T_\ell \in \calT$, there is at most one cluster $C_h \in \{C_1, \ldots, C_k\}$ with 
$|C_h \cap T_\ell| \geq \frac{w_0^2 n}{10}$. 
\end{lemma}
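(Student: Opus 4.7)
The plan is to argue by contradiction. Suppose, for some $T_\ell \in \calT$, there exist two distinct GT clusters $C_h, C_{h'}$ with $|C_h \cap T_\ell|, |C_{h'} \cap T_\ell| \geq w_0^2 n/10$. Set $S_1 = C_h \cap T_\ell$ and $S_2 = C_{h'} \cap T_\ell$. The standing assumption $n \geq 100/w_0^5$ (in the form of equation~\eqref{eq:lbnd}) immediately gives $|S_1|, |S_2| \geq w_0^2 n/10 \geq \sqrt{n}\log n/100$, so $S_1$ and $S_2$ meet the size threshold required to invoke \wntsc and Lemma~\ref{lem:rcaps}. Note also that $|C_h|/|S_1|, |T_\ell|/|S_1| \leq 10/w_0^2$, and likewise for $S_2$; these ratios will be the dominant source of $1/w_0$ factors throughout.

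First, I will pin down the four means $\mu(S_1), \mu(S_2), \mu(C_h), \mu(C_{h'}), \mu(T_\ell)$ relative to one another. Applying Lemma~\ref{lem:rcaps} to $S_1 \subseteq C_h$ (the proof of that lemma only uses \wntsc, which the GT cluster $C_h$ satisfies by hypothesis), and then bounding $\sigma(S_1) \leq \sqrt{|C_h|/|S_1|}\,\sigma(C_h)$ via Claim~\ref{cl:sigma1}, yields $|\mu(S_1) - \mu(C_h)| \lesssim (|C_h|/|S_1|)^2\sigma(C_h) \lesssim \sigma(C_h)/w_0^4$, and symmetrically $|\mu(S_2) - \mu(C_{h'})| \lesssim \sigma(C_{h'})/w_0^4$. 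Applying the same argument with $S_1, S_2 \subseteq T_\ell$ (legitimate because $\calT$ was chosen to satisfy \wntsc) gives $|\mu(S_i) - \mu(T_\ell)| \lesssim \sigma(T_\ell)/w_0^4$ for $i=1,2$.

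Second, inserting these four bounds into the triangle inequality
\[
|\mu(C_h) - \mu(C_{h'})| \leq |\mu(C_h) - \mu(S_1)| + |\mu(S_1) - \mu(T_\ell)| + |\mu(T_\ell) - \mu(S_2)| + |\mu(S_2) - \mu(C_{h'})|
\]
and comparing against the weak-separation hypothesis $|\mu(C_h) - \mu(C_{h'})| \geq (K/w_0^{11})(\sigma(C_h) + \sigma(C_{h'}))$, I can choose the constant $K$ large enough that the $\sigma(C_h) + \sigma(C_{h'})$ contribution from Step~1 is absorbed, and solve for $\sigma(T_\ell)$ to conclude $\sigma(T_\ell) \gtrsim (K/w_0^7)(\sigma(C_h) + \sigma(C_{h'}))$.

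Finally, I will derive a contradiction by producing two incompatible bounds on $\sigma(S_1)$. The \wntsc property of $T_\ell$ applied to $S_1 \subseteq T_\ell$ gives the lower bound $\sigma(S_1) \gtrsim (|S_1|/|T_\ell|)\sigma(T_\ell) \gtrsim w_0^2 \sigma(T_\ell) \gtrsim (K/w_0^5)\sigma(C_h)$. On the other hand, Claim~\ref{cl:sigma1} applied to $S_1 \subseteq C_h$ gives the upper bound $\sigma(S_1) \leq \sqrt{|C_h|/|S_1|}\,\sigma(C_h) \lesssim \sigma(C_h)/w_0$. For $K$ a sufficiently large absolute constant, these two bounds are inconsistent, completing the contradiction.

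The main obstacle will be the careful bookkeeping of the $1/w_0$ factors: each use of Lemma~\ref{lem:rcaps} paired with Claim~\ref{cl:sigma1} inflates by a polynomial in $1/w_0$, and I need the $K/w_0^{11}$ gap coming from weak separation to be strong enough both to dominate the $1/w_0^4$ losses in Step~1 and to leave enough room in Step~3 that the eventual lower bound of order $K/w_0^5$ on $\sigma(S_1)$ beats the unconditional upper bound of order $1/w_0$. This is exactly why the hypothesis fixes $K$ to be a large enough constant rather than an arbitrary positive one.
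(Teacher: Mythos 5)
Your proof is correct and uses the same core ingredients as the paper — Lemma~\ref{lem:rcaps} (applied both to $S_i \subseteq C_i$ and $S_i \subseteq T_\ell$), Claim~\ref{cl:sigma1}, and the weak separation hypothesis — but it takes a longer route to the contradiction. The paper keeps the intermediate mean-distance bounds in terms of $\sigma(S_i)$: Lemma~\ref{lem:rcaps} gives $|\mu(S_i)-\mu(T_\ell)| \lesssim \sigma(S_i)/w_0^3$ and $|\mu(S_i)-\mu(C_i)| \lesssim \sigma(S_i)/w_0^3$, the triangle inequality yields $|\mu(C_1)-\mu(C_2)| \lesssim (\sigma(S_1)+\sigma(S_2))/w_0^3$, and a single application of Claim~\ref{cl:sigma1} converts this to $\lesssim (\sigma(C_1)+\sigma(C_2))/w_0^4$, which immediately contradicts separation. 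You instead upgrade the $|\mu(S_i)-\mu(T_\ell)|$ bound from $\sigma(S_i)$ to $\sigma(T_\ell)$ right away; this introduces $\sigma(T_\ell)$ as a free unknown that separation alone cannot dispose of, forcing the additional maneuver of first extracting a lower bound $\sigma(T_\ell) \gtrsim (K/w_0^7)(\sigma(C_h)+\sigma(C_{h'}))$ and then re-invoking \wntsc of $T_\ell$ on $S_1$ to produce incompatible upper and lower bounds on $\sigma(S_1)$. The bookkeeping of $1/w_0$ powers is right and the slack in $K/w_0^{11}$ is ample, so the argument closes — but the extra indirection buys nothing. Keeping the bound in terms of $\sigma(S_i)$ throughout, as the paper does, is the cleaner instinct: it reaches the contradiction one step sooner, with smaller constant loss, and without a second appeal to \wntsc.
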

\begin{proof}
Consider a cluster $T_\ell \in \calT$, and suppose for the sake of contradiction, there are subsets $S_1, S_2$ of $T_\ell$ and clusters $C_1, C_2$ (by renumbering) such that $S_i \subseteq C_i \cap T_\ell$, $|S_i| \geq \frac{w_0^2 n}{10}$, for $i=1,2$. 

Inequality~(\ref{eq:lbnd}) shows that $|S_i| \geq \lntsc$ for $i=1,2$. Therefore, Lemma~\ref{lem:rcaps} implies that for $i=1,2$: 
$$|\mu(S_i) - \mu(T_\ell)| \leq \frac{2000}{w_0^3} \sigma(S_i). $$

This implies that $$ |\mu(S_1) - \mu(S_2)| \leq \frac{2000}{w_0^3} \left( \sigma(S_1) + \sigma(S_2) \right). $$

Applying Lemma~\ref{lem:rcaps} again, we see that for $i=1,2$:
$$|\mu(C_i) - \mu(S_i)| \leq \frac{50 |C_1|^{3/2}}{|S_1|^{3/2}} \sigma(S_i) \leq \frac{2000}{w_0^3} \sigma(S_i), $$
where the last inequality follows because 
$\frac{|C_i|}{|S_i|} \leq \frac{10 n}{w_0^2  n} \leq \frac{10}{w_0^2}. $ 

Combining the previous two inequalities, we see that 
\begin{align}
\label{eq:ci}
|\mu(C_1) - \mu(C_2)| \leq \frac{6000}{w_0^3} \left( \sigma(S_1) + \sigma(S_2) \right). 
\end{align}

Since $\sigma(S_i) \cdot \sqrt{|S_i|} \leq \sigma(C_i) \cdot \sqrt{|C_i|}$ (Claim~\ref{cl:sigma1}), and we showed above that $\frac{|C_i|}{|S_i|} \leq \frac{10}{w_0^2}, $
it follows that 
$\sigma(S_i) \leq \frac{4}{w_0} \sigma(C_i). $
Substituting this in~\eqref{eq:ci}, we see that 
$$ |\mu(C_1) - \mu(C_2)| \leq \frac{24000}{w_0^4} \left( \sigma(C_1) + \sigma(C_2) \right). $$
But this contradicts the fact that $\{C_1, \ldots, C_k\}$ satisfies well-separatedness. \qed 

\end{proof}

A simple application of the above result is that $\hatk \geq k$, details are deferred to the appendix.
%
 Since $\hatk \leq k$, this shows that $\hatk = k$, and proves correctness of our algorithm.
\section{Polynomial Time Algorithm for Identifying $k$}
\label{sec:poly}

In this section, we prove Theorem~\ref{main1}. As before, we assume that $n$ is $\Omega(1/w_0^5)$. The algorithm is given in two parts. In Section~\ref{sec:polyw}, we describe an algorithm which knows the parameter $w_0$, and then we remove this assumption in Section~\ref{sec:polyw1}.

\subsection{Polynomial Time Algorithm which knows $w_0$}
\label{sec:polyw}

The algorithm, {\bf IdentifyK($P,w_0$)} is outlined in Figure~\ref{fig:new}. 
Let $M$ denote the $d' := \frac{1}{w_0}$-dimensional SVD subspace of $P$. 
For a set of points $X$, let $\mu_M(S)$ denote 
$\mu(\pi_M(S))$, i.e., the mean of the projection of $S$ on $M$. Similarly, let $\sigma_M(S)$ denote $\sigma(\pi_M(S))$.

The algorithm works with the projections $\pi_M(P)$ of the points $P$ on $M$. 
It  runs in iterations -- in iteration $j$ it removes points from a newly discovered cluster $X_j$. Let $P^{(j)}$ denote the set of points at the beginning of iteration $j$. In iteration $j$, we first find a subset $S$ of $w_0 n$ points whose centered  1-means cost (in the projected space) is small -- recall that for point set $X$ and integer $h$, \outl$(X,h$) seeks to find a subset $X' \subseteq X$ of size $h$ with the smallest centered 1-means cost.  The set $X_j$ will be the set of points whose distance from $\mu_M(S)$ in the projected space is within the parameter $r_j$. The algorithm terminates when very few points (at most $w_0 n/10$) remain.

\begin{figure}[h]
    \begin{procedure}[H]
    Initialize $P^{(1)}$ to be the initial set of points $P$. \\
    Define $d' \leftarrow \frac{1}{w_0}. $ \\
    Let $M$ be the $d'$-dimensional SVD subspace of $P$. \\
    \For{$j=1,2,3, \ldots$}{
        Project $P^{(j)}$ to $M$. \\
        Let $S \leftarrow \outl(\pi_M(P^{(j)}), w_0 n/2).$ \\
        Let $X_j$ be the set of points $x \in P^{(j)}$ for which $$|\pi_M(x) - \mu_M(S)| \leq r_j,$$ 
        where \label{l:rj} $$r_j := \frac{2000 k^2 \cdot \sigma_M(S)}{w_0^3} .$$ \\
        Update $P^{(j+1)} \leftarrow P^{(j)} \setminus X_j. $ \\ 
        \If{$|P^{(j+1)}| \leq w_0n/10$}{
          Stop and output $j$ as the estimate for the number of clusters $k$ in the input. \\
        
    }
    }
    \end{procedure}
    \caption{{\bf IdentifyK($P,w_0$)}: Another polynomial time algorithm for identifying $k$}
    \label{fig:new}
\end{figure}


\paragraph{Analysis}
We now analyse the algorithm. 
The following result shows that $\mu_M(C_h)$ and $\mu(C_h)$ are close to each other. 

\begin{lemma}
\label{lem:close2}
For every $h \in \{1, \ldots, k\}$, $|\mu_M(C_h) - \mu(C_h)| \leq \frac{3 \sigma_0}{w_0}$.
\end{lemma}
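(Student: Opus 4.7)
The key observation is that, since $M$ is a linear subspace, orthogonal projection onto $M$ commutes with averaging, so $\mu_M(C_h) = \pi_M(\mu(C_h))$. Hence $|\mu(C_h) - \mu_M(C_h)|$ is exactly the length of the component of $\mu(C_h)$ orthogonal to $M$, and it suffices to upper bound $|v \cdot \mu(C_h)|$ uniformly over unit vectors $v \in M^\perp$, then set $v$ to the unit vector along $\pi_{M^\perp}(\mu(C_h))$.

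For such a $v$, I would first apply Cauchy--Schwarz to the identity $v \cdot \mu(C_h) = \tfrac{1}{|C_h|}\sum_{x \in C_h} v \cdot x$, obtaining
$$(v \cdot \mu(C_h))^2 \;\leq\; \frac{1}{|C_h|} \sum_{x \in C_h}(v \cdot x)^2 \;\leq\; \frac{\|Av\|^2}{|C_h|},$$
where $A$ denotes the $n \times d$ data matrix whose rows are the points of $P$. Because $v \in M^\perp$ and $M$ is spanned by the top $d'$ right singular vectors of $A$, the Courant--Fischer principle gives $\|Av\|^2 \leq \sigma_{d'+1}(A)^2$, the square of the $(d'+1)$-st singular value of $A$.

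The main step is bounding $\sigma_{d'+1}(A)$. I would introduce the auxiliary matrix $\widetilde{C}$ whose row indexed by $x \in P$ is $\mu(C_{h(x)})$, the ground-truth cluster mean containing $x$; since $\widetilde{C}$ has at most $k \leq 1/w_0 = d'$ distinct rows, $\mathrm{rank}(\widetilde{C}) \leq d'$, so the spectral-norm form of the Eckart--Young theorem yields $\sigma_{d'+1}(A) \leq \|A - \widetilde{C}\|$. For any unit vector $u$, grouping points by their ground-truth cluster and invoking the definition of $\sigma(C_h)$ gives
$$\|(A-\widetilde{C})u\|^2 \;=\; \sum_{h=1}^k \sum_{x \in C_h}\bigl(u \cdot (x - \mu(C_h))\bigr)^2 \;\leq\; \sum_{h=1}^k |C_h|\,\sigma(C_h)^2 \;\leq\; n\sigma_0^2,$$
so $\|A - \widetilde{C}\|^2 \leq n\sigma_0^2$.

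Chaining the inequalities, $(v \cdot \mu(C_h))^2 \leq n\sigma_0^2/|C_h| \leq \sigma_0^2/w_0$, and taking the supremum over $v \in M^\perp$ yields $|\mu_M(C_h) - \mu(C_h)| \leq \sigma_0/\sqrt{w_0} \leq 3\sigma_0/w_0$, using $w_0 \leq 1$ in the last step. The one subtle point, and the only potential obstacle, is that one must invoke the spectral-norm (not the Frobenius) form of Eckart--Young: a Frobenius bound on $\|A - \widetilde{C}\|_F$ would pay a spurious $\sqrt{d}$ factor from summing $\sigma(C_h)^2$ along $d-d'$ orthogonal directions, whereas the spectral norm of $A - \widetilde{C}$ is controlled cluster-by-cluster by a single $\sigma(C_h)$, which is exactly what $\sigma_0$ bounds.
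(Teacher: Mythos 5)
Your proof is correct and in fact yields the slightly sharper bound $\sigma_0/\sqrt{w_0}$, which you then relax to $3\sigma_0/w_0$. The route is genuinely different from the paper's. The paper forms the residual matrix $\pi_M(A) - C$ (where $C$ is the matrix of cluster means), uses the rank bound $\mathrm{rank}(\pi_M(A)-C)\leq 2d'$ to pass from Frobenius to spectral norm, bounds the spectral norm by $2\|A-C\|\leq 2\sigma_0\sqrt n$ via triangle inequality plus best-rank-$d'$ optimality of $\pi_M(A)$, then restricts to the $C_h$ block and uses the fact that the projected centroid $\mu_M(C_h)$ minimizes the sum of squared distances within $\pi_M(C_h)$. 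You instead work directly with the single worst direction $v\in M^\perp$ realizing $|\pi_{M^\perp}(\mu(C_h))|$, apply Cauchy--Schwarz to relate $v\cdot\mu(C_h)$ to $\|Av\|$, and control $\|Av\|\leq\sigma_{d'+1}(A)\leq\|A-\widetilde C\|$ via the spectral form of Eckart--Young; this avoids the Frobenius/rank bookkeeping and the centroid-minimization step entirely, at the cost of needing to notice that $\mu_M(C_h)=\pi_M(\mu(C_h))$. Both arguments hinge on the same core fact — the $d'$-dimensional SVD projection of $A$ is at least as good as the rank-$\leq k$ cluster-mean matrix — but your decomposition is leaner. Your closing caution about needing the spectral, not Frobenius, form of Eckart--Young is well taken; note, though, that the paper does use a Frobenius norm and dodges the spurious $\sqrt d$ not by avoiding Frobenius but by exploiting that $\pi_M(A)-C$ already has rank at most $2d'$.
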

\begin{proof}

Let $A$ be the matrix whose $i^{th}$ row is given by the coordinates of the $i^{th}$ point in $P$.
Let $C$ be the corresponding matrix 
whose $i^{th}$ row is given by $\mu(C_j)$ where $C_j$ is the cluster containing the corresponding point in $P$. Since $\pi_M(A) - C$ is a matrix of rank at most $d'+k \leq 2d'$
(since $k\leq 1/w_0$)
$$||\pi_M(A) - C||_F^2 \leq \frac{2}{w_0} ||\pi_M(A) - C||^2 . $$

By triangle inequality and properties of SVD (which imply that $\pi_M(A)$ is the best approximation to $A$ of rank at most $d'$) 
$$ ||\pi_M(A) - C|| \leq ||\pi_M(A) - A|| + ||A-C|| \leq 2 ||A-C|| \leq 2\sigma_0 \sqrt{n},$$
where, recall, $\sigma_0$ denotes $\max_{h=1}^k \sigma(C_h).$

Combining this with the above inequality, we see that for any fixed index $h \in \{1, \ldots, k\}$, 
$$ \sum_{x \in \pi_M(C_h)} |x - \mu(C_h)|^2 \leq
||\pi_M(A) - C||_F^2 \leq \frac{8 \sigma_0^2 n}{w_0}. $$

But the mean of $\pi_M(C_h)$ is $\mu_M(C_h)$. Therefore, 
$$ \sum_{x \in \pi_M(C_h)} |x - \mu(C_h)|^2 \geq |C_h| \cdot |\mu(C_h) - \mu_M(C_h)|^2. $$
Since $|C_h| \geq w_0 n$, we get the desired result. \qed
\end{proof}

 At the beginning of iteration $j$, let $\rc{j}{h}$ denote the points of cluster $C_h$ which remain (i.e., $\rc{j}{h} = P^{(j)} \cap C_h$). After suitable relabeling, we assume wlog that the following invariant holds: 

\begin{align}
\label{eq:invnew}
{\mbox{for $h=1, \ldots, j-1$}}, \quad  |\rc{j}{h}| \leq \frac{w_0^2  n}{10}, \\
\label{eq:invnew1}
{\mbox{and for $h=j, \ldots, k$}}, \quad |C_h \setminus \rc{j}{h}| \leq \frac{w_0^2 jn}{10}.  
\end{align}

For $j=1$, the invariant holds trivially because $\rc{1}{h} = C_h$ for all $h$. Now assume that the invariant holds at the beginning of iteration $j$.  
We will prove that it holds at the beginning of iteration $j+1$.
We first upper bound the optimal value of \outl($\pi_M(P^{(j)}),w_0n/2).$
  

\begin{claim}
\label{cl:outl1}
For any $h \in \{j, \ldots, k\}$, the optimum value of $\outl(\pi_M(P^{(j)}), w_0 n/2)$ is at most $4 \sigma_M(C_h)^2 n/w_0$. 
\end{claim}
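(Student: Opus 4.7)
My plan is to produce a specific feasible solution to $\outl(\pi_M(P^{(j)}), w_0 n/2)$ whose cost matches the claimed upper bound. The natural candidate is a suitable subset of $\pi_M(\rc{j}{h})$: for $h \in \{j,\ldots,k\}$ the invariant tells us that $\rc{j}{h}$ still contains almost all of the cluster $C_h$, and hence inherits the low-variance structure of $\pi_M(C_h)$.

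First I would verify that $|\rc{j}{h}| \ge w_0 n/2$, which is what allows us to draw a size $w_0 n/2$ subset from $\pi_M(\rc{j}{h})$. Since $C_1,\ldots,C_k$ partition $P$ and each has size at least $w_0 n$, we must have $k \le 1/w_0$, so $j w_0 \le k w_0 \le 1$. Plugging into invariant~\eqref{eq:invnew1} gives $|C_h \setminus \rc{j}{h}| \le w_0^2 j n/10 \le w_0 n/10$, and hence $|\rc{j}{h}| \ge |C_h| - w_0 n/10 \ge 9 w_0 n/10 \ge w_0 n/2$.

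Next, I would consider the \centone instance $\cI$ on the point set $\pi_M(\rc{j}{h})$, which lives in the $d'$-dimensional subspace $M$ with $d' = 1/w_0$. Let $c^\star$ be its optimal center (a point of $\pi_M(\rc{j}{h})$) and form $X'$ by taking $c^\star$ together with the $w_0 n/2 - 1$ points of $\pi_M(\rc{j}{h})$ closest to $c^\star$. Since $\pi_M(\rc{j}{h}) \subseteq \pi_M(P^{(j)})$, the set $X'$ is feasible for $\outl(\pi_M(P^{(j)}), w_0 n/2)$, and its centered 1-means cost with center $c^\star$ is at most $\opt(\cI)$. By Claim~\ref{cl:centone} in dimension $d'$, $\opt(\cI) \le 4 d' \cdot |\rc{j}{h}| \cdot \sigma_M(\rc{j}{h})^2$, and by Claim~\ref{cl:sigma1} applied to $\pi_M(\rc{j}{h}) \subseteq \pi_M(C_h)$, $|\rc{j}{h}| \sigma_M(\rc{j}{h})^2 \le |C_h| \sigma_M(C_h)^2 \le n \sigma_M(C_h)^2$. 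Chaining these with $d' = 1/w_0$ yields the desired bound $4 n \sigma_M(C_h)^2 / w_0$.

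The one subtlety to watch for is that the \outl cost requires the chosen center to belong to the selected subset, which is handled by explicitly putting $c^\star$ into $X'$; everything else is a direct combination of the two earlier claims with the size bound from the invariant. I do not expect a serious obstacle here; the role of the claim in the broader analysis is presumably to be paired with a matching lower bound on \outl (driven by \ntsc) so as to force the set $S$ selected in iteration $j$ to lie largely within some single $\rc{j}{h}$.
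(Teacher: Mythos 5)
Your proof is correct and follows essentially the same route as the paper: both exhibit a size-$w_0 n/2$ subset drawn from $\pi_M(\rc{j}{h})$ (justified by the invariant), then chain Claim~\ref{cl:centone} (in the $d'=1/w_0$-dimensional subspace $M$) with Claim~\ref{cl:sigma1} applied to $\rc{j}{h}\subseteq C_h$ to reach $4n\sigma_M(C_h)^2/w_0$. The only cosmetic difference is that you apply Claim~\ref{cl:centone} to all of $\pi_M(\rc{j}{h})$ and truncate around its optimal center, whereas the paper applies it to an arbitrary $w_0n/2$-size subset of $\pi_M(\rc{j}{h})$ directly; both yield the identical chain of inequalities.
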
 
\begin{proof}
 Fix an index $h \in   \{j, \ldots, k\}$. One solution to $\outl(\pi_M(P^{(j)}, w_0 n/2)$ is to pick a subset $T$ of $w_0 n/2$ points from $\pi_M(\rc{j}{h})$ -- invariant~\eqref{eq:invnew1} implies that $|\rc{j}{h}| \geq w_0 n/2$. Now, 
 Claim~\ref{cl:centone} implies that the optimal centered 1-means cost of $T$ is at most 
 $$ \frac{4|T|}{d'} \cdot \sigma_M(T)^2 \stackrel{Claim~\ref{cl:sigma1}}{\leq} \frac{4 |C_h|}{w_0} \sigma_M(C_h)^2 \leq \frac{4n}{w_0} \sigma_M(C_h)^2. $$  \qed

\end{proof}

We now show that $\mu_M(S)$ 
(where $S$ is as in Line 6 of the Algorithm of Figure (\ref{fig:new}))
is close to $\mu_M(C_h)$ for some index $h \in \{j, \ldots, k\}$. 
\begin{lemma}
\label{lem:close1}
There is an index $h \in \{j, \ldots, k\}$ such that $|\mu_M(S) - \mu_M(C_h)| \leq 10 \sqrt{k} \cdot \sigma_M(C_h)/w_0. $ Further, $|S \cap C_h| \geq \frac{w_0 n}{4k}. $
\end{lemma}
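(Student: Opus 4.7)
The plan is to let $h$ be the index in $\{j,\ldots,k\}$ maximizing $|S \cap C_h|$, and show that both conclusions hold for this choice.

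First I would establish the size lower bound $|S\cap C_h|\ge w_0 n/(4k)$ by a pigeonhole argument. Since $|S|=w_0n/2$ and the invariant~\eqref{eq:invnew} guarantees that at most $(j-1)\cdot w_0^2n/10 \leq w_0 n/10$ points of $P^{(j)}$ (and hence of $S$) lie in $C_1\cup\cdots\cup C_{j-1}$, at least $2w_0n/5$ points of $S$ are distributed among $C_j,\ldots,C_k$. Since there are at most $k$ such clusters, the maximizing cluster $C_h$ satisfies $|S\cap C_h|\ge 2w_0n/(5k)\ge w_0n/(4k)$, yielding the second claim.

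For the distance bound, I would start from Claim~\ref{cl:outl1} applied to this same $h$: the optimal centered 1-means cost of $\outl(\pi_M(P^{(j)}),w_0n/2)$, achieved by $S$, is at most $4n\sigma_M(C_h)^2/w_0$. Claim~\ref{cl:centone} then gives
\[
\sigma_M(S)^2 \;\le\; \frac{4n\sigma_M(C_h)^2/w_0}{|S|} \;=\; \frac{8\sigma_M(C_h)^2}{w_0^2}.
\]
In particular, $|S|\,\sigma_M(S)^2 \le 4n\sigma_M(C_h)^2/w_0$, while trivially $|C_h|\,\sigma_M(C_h)^2 \le n\sigma_M(C_h)^2$.

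Now I would apply Lemma~\ref{RcapS} (in the projected subspace $M$) to the pair $S$ and $C_h$:
\[
|\mu_M(S)-\mu_M(C_h)|^2 \;\le\; \frac{2}{|S\cap C_h|}\bigl(|S|\,\sigma_M(S)^2+|C_h|\,\sigma_M(C_h)^2\bigr).
\]
Plugging in $|S\cap C_h|\ge w_0n/(4k)$ and the two upper bounds above gives
\[
|\mu_M(S)-\mu_M(C_h)|^2 \;\le\; \frac{8k}{w_0 n}\cdot\frac{5n\sigma_M(C_h)^2}{w_0} \;=\; \frac{40k\,\sigma_M(C_h)^2}{w_0^2},
\]
so $|\mu_M(S)-\mu_M(C_h)|\le 10\sqrt{k}\,\sigma_M(C_h)/w_0$, as required.

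The main subtlety is that the same index $h$ must satisfy both conclusions. This is handled by defining $h$ via the pigeonhole step; then Claim~\ref{cl:outl1}, which holds for every $h\in\{j,\ldots,k\}$, can be invoked for precisely this choice, tying the $\outl$ cost bound to $\sigma_M(C_h)$ rather than to some other cluster. The rest is bookkeeping with Claim~\ref{cl:centone} and Lemma~\ref{RcapS}.
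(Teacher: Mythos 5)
Your proof is correct, and it takes a cleaner, more modular route than the paper's. The paper proves the distance bound by contradiction: it assumes $|\mu_M(S)-\mu_M(C_h)| > 10\sqrt{k}\,\sigma_M(C_h)/w_0$, then lower-bounds the $1$-means cost of $\pi_M(S)$ by summing $|x-\mu_M(S)|^2$ over $x\in S\cap C_h$, using the elementary inequality $|a-b|^2\ge a^2/2 - b^2$ together with a bound on $\sum_{x\in S\cap C_h}|x-\mu_M(C_h)|^2$, and finally contradicts Claim~\ref{cl:outl1}. You instead convert the $\outl$ cost bound into a bound on $\sigma_M(S)$ via Claim~\ref{cl:centone}, and then invoke the general intersection Lemma~\ref{RcapS} (applied to the projections in $M$) to get the mean-distance bound directly. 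This avoids re-deriving the ``centering'' inequality by hand and reuses a lemma that the paper already applies elsewhere (e.g.\ in Lemma~\ref{lem:rcaps}), so the argument is shorter and more uniform with the rest of the paper; the paper's version is slightly more self-contained but essentially re-proves a special case of Lemma~\ref{RcapS} inline. Your pigeonhole step for the size bound matches the paper's. One minor bookkeeping remark: you invoke invariant~\eqref{eq:invnew} (the paper's proof text cites~\eqref{eq:invnew1}, but the calculation it performs is indeed the one you describe, bounding the mass in $C_1,\ldots,C_{j-1}$), and your constant $\sqrt{40k}/w_0 < 10\sqrt{k}/w_0$ checks out.
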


\begin{proof}
Invariant~\eqref{eq:invnew1} implies that $|S \cap \cup_{h=j}^k C_h| \geq w_0 n/4$. Therefore, there is an index $h \in \{j, \ldots, k\}$ such that 
$|S \cap C_h| \geq \frac{w_0 n}{4k}$.  Suppose $|\mu_M(S) - \mu_M(C_h)| > \frac{10 \sqrt{k} \cdot \sigma_M(C_h)}{w_0}$. Using $|a-b|^2 \geq \frac{a^2}{2} - b^2$ for any real $a,b$, we get:
\begin{align*}
\mbox{1-means cost of }\pi_M(S)&\geq\\
 \sum_{x \in \pi_M(S_h)} |x-\mu_M(S)|^2 & \geq \sum_{x \in \pi_M(S_h)} \left( \frac{|\mu_M(C_h) - \mu_M(S)|^2}{2} - |x-\mu_M(C_h)|^2  \right) \\
 & >  |S_h| \cdot \frac{50k \cdot \sigma_M(C_h)^2}{w_0^2} - \frac{\sigma_M(C_h)^2 n}{w_0}.
\end{align*}
Since $|S_h| \geq \frac{w_0 n}{4k}$, it follows that the 1-means cost (and hence the centered 1-means cost) of $\pi_M(S)$ is more than $\frac{4 \sigma_M(C_h)^2 n}{w_0}$, which contradicts Claim~\ref{cl:outl1}. \qed
\end{proof}

After renumbering, we can assume that the index $h$ in Lemma~\ref{lem:close1} is $j$. Hence
\begin{equation}\label{789}
|S\cap C_j|\geq w_0n/(4k).
\end{equation}
We now argue that for every index $h\geq j+1$, $\mu_M(C_h)$ and $\mu(C_h)$ are close to each other.

\begin{claim}
\label{cl:close2}
For every $h \in \{j+1, \ldots, k\}$, $|\mu_M(S) - \mu_M(C_h)| \geq \gamma \sigma_0/4$. 
\end{claim}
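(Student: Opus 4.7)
The plan is to prove Claim~\ref{cl:close2} by a direct triangle inequality argument that chains $\mu_M(S)$ back to the true means $\mu(C_j), \mu(C_h)$ and then invokes the strong separation condition.

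First I would bound the separation of true means from below: by \sepn (Equation~\eqref{eq:sept}) we have $|\mu(C_j) - \mu(C_h)| \geq \gamma \sigma_0$ with $\gamma = K/w_0^{11}$. Next I would ``pull'' this separation into the SVD subspace $M$: by Lemma~\ref{lem:close2}, for both $\ell = j$ and $\ell = h$,
\[
|\mu_M(C_\ell) - \mu(C_\ell)| \leq \frac{3\sigma_0}{w_0},
\]
so two applications of the triangle inequality yield
\[
|\mu_M(C_j) - \mu_M(C_h)| \geq \gamma \sigma_0 - \frac{6\sigma_0}{w_0}.
\]

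Then I would bring $\mu_M(S)$ into the picture using Lemma~\ref{lem:close1} (after the renumbering), which gives $|\mu_M(S) - \mu_M(C_j)| \leq 10\sqrt{k}\,\sigma_M(C_j)/w_0$. Since projection onto a subspace cannot increase the maximum directional standard deviation, $\sigma_M(C_j) \leq \sigma(C_j) \leq \sigma_0$, and since $k \leq 1/w_0$, this upper bound is at most $10\sigma_0/w_0^{3/2}$. Applying the triangle inequality one more time,
\[
|\mu_M(S) - \mu_M(C_h)| \geq |\mu_M(C_j) - \mu_M(C_h)| - |\mu_M(S) - \mu_M(C_j)| \geq \gamma \sigma_0 - \frac{6\sigma_0}{w_0} - \frac{10\sigma_0}{w_0^{3/2}}.
\]

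The last step is a numerical comparison: since $\gamma = K/w_0^{11}$ with $K$ a sufficiently large constant (as stipulated in~\eqref{eq:sept}), the two error terms $6\sigma_0/w_0$ and $10\sigma_0/w_0^{3/2}$ are both dominated by $\gamma \sigma_0$, and in fact their sum is at most $\tfrac{3}{4}\gamma\sigma_0$, giving the desired bound $|\mu_M(S) - \mu_M(C_h)| \geq \gamma\sigma_0/4$. There is no real obstacle here; the only thing to be careful about is keeping track of which lemma gives which bound and making sure the polynomial dependence on $1/w_0$ in $\gamma$ is large enough to absorb the $1/w_0^{3/2}$ projection error coming from the SVD step.
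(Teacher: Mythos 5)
Your proof is correct and follows essentially the same route as the paper: chain $\mu_M(S)$ to $\mu_M(C_j)$ via Lemma~\ref{lem:close1}, chain $\mu_M(C_j), \mu_M(C_h)$ back to the true means via Lemma~\ref{lem:close2}, invoke the separation condition, and absorb the error terms using the large polynomial factor in $\gamma$. The paper just states this more tersely (it records the intermediate bound $|\mu_M(C_h)-\mu_M(C_j)|\geq\gamma\sigma_0/2$ and then says "follows from Lemma~\ref{lem:close1}"), whereas you spell out the triangle inequalities and the $\sigma_M(C_j)\leq\sigma_0$, $k\leq 1/w_0$ estimates explicitly; the substance is the same.
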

\begin{proof}
The well-separatedness  condition and Lemma~\ref{lem:close2} imply that  for any $h \in \{j_+1, \ldots, k\}$, 
$$ |\mu_M(C_h) - \mu_M(C_j)| \geq \frac{\gamma \sigma_0}{2}. $$
The result now follows from Lemma~\ref{lem:close1}. \qed
\end{proof}

We now  relate $\sigma_M(S)$ to $\sigma_M(C_j)$. 

\begin{lemma}
\label{lem:rel}
$$ \frac{w_0 \cdot \sigma_M(C_j)}{100 \cdot k^{3/2}} \leq \sigma_M(S) \leq \frac{3 \cdot \sigma_M(C_j)}{w_0}. $$
\end{lemma}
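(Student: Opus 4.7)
The plan is to prove the two inequalities separately. The upper bound comes quickly from the optimality of $S$ together with the earlier claims; the lower bound needs an application of \ntsc on a suitably chosen subset, combined with monotonicity of $\sigma$ under taking subsets.

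For the upper bound, observe that $S$ is an \emph{optimal} solution of $\outl(\pi_M(P^{(j)}), w_0 n/2)$, so its centered 1-means cost (in the $d'=1/w_0$-dimensional space $M$) equals the optimum of that \outl instance. By Claim~\ref{cl:outl1} applied with $h=j$, this optimum is at most $4\sigma_M(C_j)^2 n/w_0$. The left inequality of Claim~\ref{cl:centone}, applied to $\pi_M(S)$ (a set of size $w_0n/2$ in $M$), then yields $\sigma_M(S)^2 \le \opt/|S| \le 8\sigma_M(C_j)^2/w_0^2 < 9\sigma_M(C_j)^2/w_0^2$, giving the claimed upper bound $\sigma_M(S) \le 3\sigma_M(C_j)/w_0$.

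For the lower bound, the key object is $S_j := S \cap C_j$, whose size is at least $w_0n/(4k)$ by~(\ref{789}). First I would verify that $|S_j| \geq \sqrt n\log n/100$, so that \ntsc legitimately applies to $S_j \subseteq C_j$; this uses $k \le 1/w_0$ together with~(\ref{eq:lbnd}). Next, let $v^\star$ be a unit vector in $M$ attaining $\sigma_M(C_j)$, and apply \ntsc to $T=S_j$ with the line $L = L_{v^\star}$; this produces
$$ \sigma^2(\pi_{L_{v^\star}}(S_j)) \ge \frac{|S_j|^2}{125 |C_j|^2} \, \sigma_M(C_j)^2. $$
Now I would chain two monotonicity bounds: since $v^\star \in M$, $\sigma(\pi_{L_{v^\star}}(S_j)) \le \sigma_M(S_j)$; and since $\pi_M(S_j) \subseteq \pi_M(S)$, Claim~\ref{cl:sigma1} gives $|S_j|\sigma_M(S_j)^2 \le |S|\sigma_M(S)^2$. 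Combining these with the substitutions $|S_j| \ge w_0 n/(4k)$, $|C_j| \le n$, and $|S| = w_0n/2$ produces $\sigma_M(S)^2 \ge \frac{w_0^2}{4000 k^3} \sigma_M(C_j)^2$, which comfortably implies the desired $\sigma_M(S) \ge w_0 \sigma_M(C_j)/(100 k^{3/2})$.

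The main obstacle is the subtle alignment of directions. \ntsc is stated only for one-dimensional subspaces $L$ of $\mathbb{R}^d$, whereas $\sigma_M(C_j)$ is a maximum over directions inside $M$. The bridge is the easy observation that, because $\pi_M(C_j) \subset M$, the maximum directional variance of $\pi_M(C_j)$ over all unit vectors in $\mathbb{R}^d$ is attained by a vector lying in $M$ (any orthogonal-to-$M$ component contributes nothing). This legitimizes choosing the direction $v^\star \in M$ and invoking \ntsc along it; once that is in hand, every remaining step is routine.
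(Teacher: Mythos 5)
Your proof is correct and follows essentially the same route as the paper's: both prove the upper bound from the optimality of $S$ in the \outl problem together with Claims~\ref{cl:outl1} and~\ref{cl:centone}, and both prove the lower bound by applying \ntsc to $S_j = S\cap C_j$, then using Claim~\ref{cl:sigma1} to compare $\sigma_M(S_j)$ with $\sigma_M(S)$. Your final inequality $\sigma_M(S)^2 \ge \frac{w_0^2}{4000 k^3}\sigma_M(C_j)^2$ coincides with the paper's. One small place where you are more careful than the paper: you explicitly justify why the line-projection form of \ntsc upgrades to the subspace bound $\sigma_M(S_j)^2 \ge \frac{|S_j|^2}{125|C_j|^2}\sigma_M(C_j)^2$, by choosing the maximizing direction $v^\star \in M$ for $\sigma_M(C_j)$ and applying \ntsc along $L_{v^\star}$; the paper invokes \ntsc on projections to $M$ without spelling this out. (A trivial discrepancy: the paper's proof cites Claim~\ref{cl:outl1} as giving a bound of $\frac{2\sigma_M(C_j)^2 n}{w_0}$, while that claim actually states $4\sigma_M(C_j)^2 n/w_0$; your constant $8/w_0^2$ is the correct consequence, and still lies well within the stated $9/w_0^2$.)
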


\begin{proof}
Claim~\ref{cl:outl1} implies that the centered 1-means cost of $\pi_M(S)$ is at most $\frac{2 \sigma_M(C_j)^2 n}{w_0}$. Since $|S| = w_0 n/2,$ Claim~\ref{cl:centone} implies that $\sigma_M(S)^2 \leq \frac{4 \sigma_M(C_j)^2}{w_0^2}. $
For the other direction, we use \ntsc. Since $S_j:= S \cap C_j$ is a subset of $C_j$, and $|S_j| \geq \frac{w_0 n}{4k} \geq \lntsc$ (by inequality~(\ref{eq:lbnd})), it follows from \ntsc that 
$$ \sigma_M(S_j)^2 \geq \frac{w_0^2}{2000 k^2} \sigma_M(C_j)^2. $$
Further $S_j \subseteq S$, and so, Claim~\ref{cl:sigma1} implies that $\sqrt{|S_j|} \sigma_M(S_j) \leq \sqrt{|S|} \sigma_M(S). $ Since $|S_j| \geq \frac{w_0 n}{4k}$ and $|S| = w_0 n/2$,  $|S_j| \geq \frac{|S|}{2k}$.  So, we get 
$\sigma_M(S_j) \leq \sqrt{2k} \cdot \sigma_M(S)$. Using this in the above inequality yields
$$ \sigma_M(S)^2 \geq \frac{w_0^2}{4000 k^3} \sigma_M(C_j)^2. $$
This proves the desired result.  
\qed
\end{proof}

\begin{claim}
\label{cl:elem}
Let $X$ be a set of $m$ points in $M$. For every $\alpha > 0$, at least $\left(1 - \frac{1}{\alpha^2 \cdot w_0} \right) m$ points of $X$ lie within distance $\alpha \cdot \sigma(X)$ of $\mu(X)$. 
\end{claim}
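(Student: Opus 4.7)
The plan is to observe that the claim is essentially a one-line consequence of Markov's inequality once we understand the relationship between the \emph{maximum} directional variance $\sigma(X)^2$ and the \emph{total} (trace) variance of $X$ in the low-dimensional subspace $M$.

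First I would write the total squared-distance sum in $X$ as a trace of its covariance matrix. Specifically, since $X$ lies in the $d'$-dimensional subspace $M$ with $d' = 1/w_0$, I can pick an orthonormal basis $e_1, \dots, e_{d'}$ of $M$ and decompose
\begin{equation*}
\frac{1}{m}\sum_{x \in X} |x - \mu(X)|^2 \;=\; \sum_{i=1}^{d'} \frac{1}{m}\sum_{x \in X} \bigl(e_i \cdot (x-\mu(X))\bigr)^2.
\end{equation*}
Each of the inner sums is the squared deviation of $X$ in the unit direction $e_i$, and hence is bounded above by $\sigma(X)^2$, the maximum directional variance. Summing over the $d' = 1/w_0$ directions yields
\begin{equation*}
\sum_{x \in X} |x - \mu(X)|^2 \;\leq\; \frac{m \cdot \sigma(X)^2}{w_0}.
\end{equation*}

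Next I would apply Markov's inequality to the nonnegative quantity $|x - \mu(X)|^2$ over $x$ chosen uniformly from $X$. The number of $x \in X$ with $|x - \mu(X)|^2 > \alpha^2 \sigma(X)^2$ is at most
\begin{equation*}
\frac{1}{\alpha^2 \sigma(X)^2} \sum_{x \in X} |x - \mu(X)|^2 \;\leq\; \frac{m}{\alpha^2 \, w_0}.
\end{equation*}
Hence at least $\bigl(1 - \frac{1}{\alpha^2 w_0}\bigr) m$ points of $X$ are within distance $\alpha \cdot \sigma(X)$ of $\mu(X)$, which is exactly the statement of the claim.

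There is no real obstacle here: the only point to be careful about is justifying that the sum of variances along an orthonormal basis of $M$ is bounded by $d' \cdot \sigma(X)^2$, which is immediate from the definition of $\sigma(X)$ as a supremum over all unit directions. Everything else is a standard Markov / Chebyshev argument applied in the low-dimensional projection.
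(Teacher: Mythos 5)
Your proposal is correct and is essentially the same argument the paper gives: bound the total squared distance (1-means cost) by $m\,\sigma(X)^2/w_0$ using that $M$ is $(1/w_0)$-dimensional, then apply Markov's inequality. You simply spell out the orthonormal-basis decomposition that the paper leaves implicit in the phrase ``the 1-means cost of $X$ is at most $\sigma(X)^2 m/w_0$.''
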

\begin{proof}
Since $M$ is $1/w_0$-dimensional subspace, the 1-means cost of $X$ is at most $\frac{\sigma(X)^2 \cdot m}{w_0}$. The result now follows from a simple averaging argument. \qed
\end{proof}

The algorithm defines the following parameter (line~\ref{l:rj}) $$r_j := \frac{2000 k^2 \cdot \sigma_M(S)}{w_0^3} .$$ Let $X_j$ be the set of points $x \in P^{(j)}$ for which $|\pi_M(x) - \mu_M(S)| \leq r_j$ (as defined in line~\ref{l:rj}). Note that the algorithm knows $\sigma_M(S)$ and so it can compute $r_j$.  The following key result shows that $X_j$ is very close to $C_j$. 

\begin{lemma}
\label{lem:key1}
At most $\frac{w_0^2 n}{10}$ points of $C_j$ lie in $P^{(j)} \setminus X_j$. Further, at most $\frac{w_0^2 n}{10k}$ points of $C_h$, $h \in \{j+1, \ldots, k\}$ lie in $X_j$.  
\end{lemma}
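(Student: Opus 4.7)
The plan is to prove both parts by combining the triangle inequality with the Markov-type tail bound of Claim~\ref{cl:elem}. For part (a), essentially every point of $C_j$ projects close to $\mu_M(C_j)$, and since $\mu_M(C_j)$ is itself close to $\mu_M(S)$ by Lemma~\ref{lem:close1}, almost all points of $C_j$ land within $r_j$ of $\mu_M(S)$. For part (b), the separation established in Claim~\ref{cl:close2} keeps $\mu_M(S)$ far from $\mu_M(C_h)$, so only a few outlying points of $C_h$ can fall inside the radius-$r_j$ ball around $\mu_M(S)$.

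For (a), I would set $\alpha := 10\sqrt{k}/w_0^2$ and apply Claim~\ref{cl:elem} to $\pi_M(C_j)\subseteq M$: all but at most $|C_j|/(\alpha^2 w_0)=w_0^3|C_j|/(100k)$ points of $C_j$ satisfy $|\pi_M(x)-\mu_M(C_j)|\le \alpha\,\sigma_M(C_j)$. For every such $x$, Lemma~\ref{lem:close1} together with the triangle inequality gives
$$|\pi_M(x)-\mu_M(S)| \le \left(\alpha + \tfrac{10\sqrt{k}}{w_0}\right)\sigma_M(C_j) \le \tfrac{20\sqrt{k}}{w_0^2}\,\sigma_M(C_j),$$
while the lower bound $\sigma_M(S)\ge w_0\sigma_M(C_j)/(100k^{3/2})$ from Lemma~\ref{lem:rel} yields $r_j\ge 20\sqrt{k}\sigma_M(C_j)/w_0^2$. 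Hence every such $x$ lies in $X_j$, and the ``bad'' count is bounded by $w_0^3|C_j|/(100k)\le w_0^3 n/(100k)\le w_0^2 n/10$ as required.

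For (b), fix $h\in\{j+1,\ldots,k\}$ and take any $x\in C_h\cap X_j$. Then $|\pi_M(x)-\mu_M(S)|\le r_j$. The upper bound $\sigma_M(S)\le 3\sigma_M(C_j)/w_0\le 3\sigma_0/w_0$ in Lemma~\ref{lem:rel} together with $k\le 1/w_0$ gives $r_j\le 6000\,\sigma_0/w_0^6$, whereas Claim~\ref{cl:close2} yields $|\mu_M(S)-\mu_M(C_h)|\ge \gamma\sigma_0/4=K\sigma_0/(4w_0^{11})$. Choosing $K$ as a sufficiently large absolute constant makes $\gamma\sigma_0/4\ge 2r_j$, so by the triangle inequality
$$|\pi_M(x)-\mu_M(C_h)| \ge \tfrac{\gamma\sigma_0}{8} \ge \tfrac{\gamma}{8}\,\sigma_M(C_h),$$
where the second step uses $\sigma_M(C_h)\le\sigma(C_h)\le\sigma_0$. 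Applying Claim~\ref{cl:elem} to $\pi_M(C_h)$ with this threshold bounds the number of such $x$ by $|C_h|/((\gamma/8)^2 w_0)=O(w_0^{21}n/K^2)$, which falls below $w_0^2 n/(10k)$ for $K$ large enough (using $k\le 1/w_0$). The main obstacle is purely arithmetic bookkeeping: verifying that the $\gamma=K/w_0^{11}$ term really swamps the $k^2/w_0^4$-type contributions in $r_j$; this is precisely why the separation exponent in~\eqref{eq:sept} has been set so high.
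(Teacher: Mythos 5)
Your proof is correct and follows essentially the same route as the paper's: both parts hinge on exactly the same ingredients (the bounds on $r_j$ from Lemma~\ref{lem:rel}, the proximity of $\mu_M(S)$ to $\mu_M(C_j)$ from Lemma~\ref{lem:close1} for part (a), the separation from Claim~\ref{cl:close2} for part (b), and the averaging tail bound of Claim~\ref{cl:elem} in both). The only cosmetic difference is that for part (a) you argue directly (points near $\mu_M(C_j)$ land in $X_j$) whereas the paper argues by contrapositive (points outside $X_j$ are far from $\mu_M(C_j)$); the arithmetic with $\alpha=10\sqrt{k}/w_0^2$ is identical.
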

\begin{proof}
Lemma~\ref{lem:rel} shows that $r_j \geq \frac{20 \sqrt{k} \cdot \sigma_M(C_j)}{ w_0^2}$. 
Lemma~\ref{lem:close1} now shows that all points $x \notin X_j$ satisfy: 
$$ |\pi_M(x) - \mu_M(C_j)| > \frac{10 \sqrt{k} \cdot \sigma_M(C_j)}{ w_0^2}. $$
Claim~\ref{cl:elem}  implies that number of points in $C_j$ which do not belong to $X_j$ is at most $\frac{w_0^3 |C_j|}{100 k} \leq \frac{w_0^2 n}{10k}$. This proves the first part of the result. 

For the second part, fix an index $h \in \{j+1, \ldots, k\}$. 
Lemma~\ref{lem:rel} shows that $r_j \leq \frac{6000 k^2 \sigma_0}{w_0^4} \leq \gamma \sigma_0/8$. Claim~\ref{cl:close2} now shows that if $x \in C_h \cap X_j$, then $|\pi_M(x) - \mu_M(C_h)| \geq \frac{\gamma \sigma_0}{8} \geq \frac{\gamma \sigma_M(C_h)}{8}$. Claim~\ref{cl:elem} now shows that $|X_j \cap C_h| \leq \frac{64n}{\gamma^2 w_0} \leq \frac{w_0^3 n}{10k}$. This proves the desired result. \qed
\end{proof}
The above result shows that the invariant conditions are satisfied if we define $P^{(j+1)} := P^{(j)} \setminus X_j$. It also follows that 
$$|P^{(k+1)}| \leq \sum_{h=1}^k |\rc{k+1}{h}| \leq \frac{w_0^2 k n}{10} \leq \frac{w_0 n}{10}. $$
Therefore, the algorithm will stop at the end of iteration of $k$. 
This proves the correctness of our algorithm.

Finally, we prove some results which will be useful in the next section. 
\begin{lemma}
\label{lem:useful}
$\sigma_M(X_h) \leq  \frac{4000 k^2}{w_0^4} \cdot \sigma_M(C_j). $
\end{lemma}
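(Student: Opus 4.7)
The plan is to observe that $\sigma_M(X_j)$ is controlled by the radius $r_j$ used to carve out $X_j$, and then to convert this radius into a bound in terms of $\sigma_M(C_j)$ via Lemma~\ref{lem:rel} (interpreting $X_h$ in the statement as $X_j$, the cluster just produced in iteration $j$).

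First I would use the defining property of $X_j$: every $x \in X_j$ satisfies $|\pi_M(x) - \mu_M(S)| \leq r_j$. Since the mean $\mu_M(X_j)$ minimizes the sum of squared deviations (and this minimization holds directionally too), for every unit vector $v$,
\[
\sum_{x \in X_j} \bigl(v \cdot (\pi_M(x) - \mu_M(X_j))\bigr)^2 \;\leq\; \sum_{x \in X_j} \bigl(v \cdot (\pi_M(x) - \mu_M(S))\bigr)^2 \;\leq\; \sum_{x \in X_j} |\pi_M(x) - \mu_M(S)|^2 \;\leq\; |X_j|\, r_j^2.
\]
Taking the maximum over unit vectors $v$ and dividing by $|X_j|$ gives $\sigma_M(X_j)^2 \leq r_j^2$, hence $\sigma_M(X_j) \leq r_j$.

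Next I would substitute $r_j = 2000 k^2 \sigma_M(S)/w_0^3$ and apply the upper bound of Lemma~\ref{lem:rel}, which states $\sigma_M(S) \leq 3\sigma_M(C_j)/w_0$. Combining,
\[
\sigma_M(X_j) \;\leq\; \frac{2000 k^2}{w_0^3}\cdot \sigma_M(S) \;\leq\; \frac{2000 k^2}{w_0^3}\cdot \frac{3\sigma_M(C_j)}{w_0} \;=\; \frac{6000 k^2}{w_0^4}\sigma_M(C_j),
\]
which yields the claimed bound (up to an adjustment of the leading constant).

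I do not expect a real obstacle here: the proof is essentially a one-line observation that a ball of radius $r_j$ has maximum directional variance at most $r_j^2$, followed by invoking the already-established relationship between $\sigma_M(S)$ and $\sigma_M(C_j)$. The only subtlety worth flagging is making sure one uses the variance-minimizing property of the mean to replace $\mu_M(S)$ by $\mu_M(X_j)$ without introducing any extra factor; this is immediate from the Pythagorean identity applied direction-by-direction.
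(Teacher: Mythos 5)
Your proof follows exactly the paper's route: observe that every point of $X_j$ lies within distance $r_j$ of $\mu_M(S)$, hence $\sigma_M(X_j)\le r_j$, and then substitute the definition of $r_j$ together with the upper bound on $\sigma_M(S)$ from Lemma~\ref{lem:rel}. The only difference is the constant you land on ($6000k^2/w_0^4$ vs.\ the stated $4000k^2/w_0^4$), which comes from using the bound $\sigma_M(S)\le 3\sigma_M(C_j)/w_0$ as written in Lemma~\ref{lem:rel} rather than the tighter $2\sigma_M(C_j)/w_0$ that actually appears inside that lemma's proof; this is a cosmetic discrepancy in the paper's constants, not a gap in your argument.
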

\begin{proof}
Since all points in $X_h$ are within distance $r_j$ of $\mu_M(S)$, it follows that 
$\sigma_M(X_h) \leq r_j$. The result now follows from Lemma~\ref{lem:rel} and the definition of $r_j$. \qed
\end{proof}

\begin{lemma}
\label{lem:useful1}
Let $h,h' \in \{1, \ldots, k\}$ be two distinct indices. Then 
$$ |\mu_M(X_h) - \mu_M(X_{h'})| \geq \frac{800}{w_0^4} (\sigma_M(X_h) + \sigma_M(X_{h'})). $$
\end{lemma}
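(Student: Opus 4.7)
The plan is to show that each output cluster $X_h$ differs from its matched ground-truth cluster $C_h$ on only a small fraction of points, so that $\mu_M(X_h) \approx \mu_M(C_h)$; then separation of the $\mu_M(C_h)$'s, inherited from \sepn, will give the required lower bound on $|\mu_M(X_h)-\mu_M(X_{h'})|$, while Lemma~\ref{lem:useful} controls the right-hand side.

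First I would bound $|X_h \triangle C_h|$. Using invariant~\eqref{eq:invnew1} together with $k\leq 1/w_0$, the number of points of $C_h$ already removed before iteration $h$ is at most $w_0^2 h n/10 \leq w_0 n/10$, and the first part of Lemma~\ref{lem:key1} contributes at most $w_0^2 n/10$ more, giving $|C_h\setminus X_h|\leq w_0 n/5$. Symmetrically, using invariant~\eqref{eq:invnew} for indices $h'<h$ and the second part of Lemma~\ref{lem:key1} for indices $h'>h$, we get $|X_h\setminus C_h|\leq (h-1)\cdot w_0^2 n/10 + (k-h)\cdot w_0^2 n/(10k)\leq w_0 n/5$. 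In particular, $|X_h \cap C_h| \geq w_0 n/2$.

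Next I apply Lemma~\ref{RcapS} to the projected sets $\pi_M(X_h)$ and $\pi_M(C_h)$:
$$|\mu_M(X_h)-\mu_M(C_h)|^2 \leq \frac{2}{|X_h\cap C_h|}\bigl(|X_h|\sigma_M(X_h)^2+|C_h|\sigma_M(C_h)^2\bigr) \leq \frac{4}{w_0}\bigl(\sigma_M(X_h)^2+\sigma_M(C_h)^2\bigr).$$
Substituting Lemma~\ref{lem:useful} and the obvious $\sigma_M(C_h)\leq \sigma_0$, and using $k\leq 1/w_0$, this yields $|\mu_M(X_h)-\mu_M(C_h)| = O(\sigma_0/w_0^{13/2})$.

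For the lower bound, Lemma~\ref{lem:close2} gives $|\mu_M(C_\ell)-\mu(C_\ell)|\leq 3\sigma_0/w_0$, so by the triangle inequality and \sepn (equation~\eqref{eq:sept}),
$$|\mu_M(C_h)-\mu_M(C_{h'})| \geq \gamma\sigma_0 - \frac{6\sigma_0}{w_0} \geq \frac{\gamma\sigma_0}{2}.$$
Combining with the previous step, the triangle inequality gives $|\mu_M(X_h)-\mu_M(X_{h'})| \geq \gamma\sigma_0/2 - O(\sigma_0/w_0^{13/2})$, which is at least $\gamma\sigma_0/4 = K\sigma_0/(4w_0^{11})$ once $K$ is a sufficiently large constant (recall $w_0\leq 1$, so $1/w_0^{13/2}\leq 1/w_0^{11}$). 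Finally, Lemma~\ref{lem:useful} bounds the right-hand side as $\frac{800}{w_0^4}(\sigma_M(X_h)+\sigma_M(X_{h'})) = O(\sigma_0/w_0^{10})$, which is dominated by the lower bound for $K$ a suitably large constant.

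The main obstacle is purely bookkeeping: we must verify that the polynomial degrees of $1/w_0$ introduced by Lemma~\ref{lem:useful} (raising $\sigma_M(C_h)$ to $\sigma_M(X_h)$) on both sides of the target inequality are collectively less than the degree $11$ in $\gamma$, so that the "large enough" constant $K$ absorbs all other constants. Since $k\leq 1/w_0$ converts each factor of $k^2$ into $1/w_0^2$, the degrees come out to $13/2$ on the left-hand side bound and $10$ on the right-hand side bound, both strictly below $11$.
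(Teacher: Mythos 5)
Your proof is correct, but it reaches the key estimate $|\mu_M(X_h)-\mu_M(C_h)|\leq O(\sigma_0/w_0^{13/2})$ by a genuinely different route than the paper. The paper's argument is purely geometric: since every point of $\pi_M(X_h)$ lies within radius $r_h$ of $\mu_M(S_h)$ by construction, it gets $|\mu_M(X_h)-\mu_M(S_h)|\leq r_h$ for free, chains this with Lemma~\ref{lem:close1} (closeness of $\mu_M(S_h)$ to $\mu_M(C_h)$) to conclude $|\mu_M(X_h)-\mu_M(C_h)|\leq 2r_h$, and then controls $r_h$ via Lemma~\ref{lem:rel}. You instead ignore the ball structure of $X_h$ entirely, bound the symmetric difference $|X_h\triangle C_h|$ from the invariants and both parts of Lemma~\ref{lem:key1} to get $|X_h\cap C_h|\geq w_0 n/2$, and then apply Lemma~\ref{RcapS} to the projected sets together with Lemma~\ref{lem:useful} to bound the mean shift. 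The remainder (separation of the $\mu_M(C_\ell)$'s via Lemma~\ref{lem:close2} plus \sepn, absorbing lower-order terms into a large $K$, and bounding the right-hand side via Lemma~\ref{lem:useful}) is essentially the same in both. Your approach is slightly more robust in that it only uses cardinality overlap rather than the specific structure of $X_h$ as a metric ball, at the cost of having to assemble the symmetric-difference bookkeeping; the paper's is more direct given what has already been established about $X_h$. One small note: the exponent bookkeeping you do at the end is exactly the right concern, and it checks out — the left-hand-side error is of order $1/w_0^{13/2}$, the right-hand side is of order $1/w_0^{10}$, both strictly less than $\gamma$'s degree $11$ — but be aware that the paper's own final step is slightly sloppy with a factor of $2$ (it proves $|\mu_M(X_h)-\mu_M(X_{h'})|\geq\gamma\sigma_0/4$ but then bounds the right-hand side by $\gamma\sigma_0/2$), whereas your version is careful to make the constants actually close.
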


\begin{proof}
First consider $X_h$. Let $S_h$ be the set $S$ considered in iteration $h$ of our algorithm. Since all points in $\pi_M(X_h)$ lie within radius $r_h$ of $\mu_M(S_h)$, we see that 
$$ |\mu_M(X_h) - \mu_M(S_h)| \leq r_h. $$
Combined with Lemma~\ref{lem:close1} and Lemma~\ref{lem:rel}, this implies that 
$$|\mu_M(X_h) - \mu_M(C_h)| \leq 2r_h. $$
Similarly, $|\mu_M(X_{h'}) - \mu_M(C_{h'})| \leq 2 r_{h'}. $ Therefore, 
$$ |\mu_M(X_h) - \mu_M(X_{h'})| \geq |\mu_M(C_h) - \mu_M(C_{h'})| - 2r_h -2r_{h'} 
\geq \frac{\gamma \sigma_0}{2} - 2r_h -2r_{h'}, $$
where the last inequality follows from Lemma~\ref{lem:close2} and well-separatedness. 
By Lemma~\ref{lem:rel}, $r_h, r_{h'} \leq \frac{2000 k^2\sigma_0}{ w_0^4}$. The above inequality now implies that 
$$ |\mu_M(X_h) - \mu_M(X_{h'})| \geq \frac{\gamma \sigma_0}{4}. $$
The desired result now follows because $\sigma_M(X_h) \leq r_h \leq \frac{2000 k^2\sigma_0}{ w_0^4} \leq \frac{w_0^4 \gamma \sigma_0}{3200}$, and similarly for $\sigma_M(X_{h'})$. 
\qed
\end{proof}
\subsection{Polynomial Time Algorithm without the knowledge of $w_0$}
\label{sec:polyw1}

The algorithm in the previous section assumed that we know $w_0$. In this section, we show how the algorithm can be modified to  work even when $w_0$ is unknown.
The idea is to maintain an estimate $\hatw$ for $w_0$, which starts with 1 and decreases in steps of size $\frac{1}{n}$. For a given $\hatw$, we can run the algorithm {\bf IdentifyK} described in the previous section -- this would lead to a disjoint partition $X_1, \ldots, X_{\hatk}$ of a large enough subset of $P$. One idea would be to check that all of these subsets $X_j$ satisfy \ntsc. Since we don't know how to check \ntsc efficiently, we could try to check this for the projection $\pi_M(P)$ (here, the spectral norm and the 1-means cost are close to each other upto a factor depending on $w_0$ only). The problem with this approach is that this test would fail even if we guessed the right value of $w_0$. Indeed, the sets $X_1, \ldots, X_k$ constructed by {\bf IdentifyK} could have a  non-negligible fraction (e.g. $O(w_0^2)$) of points from clusters other their respective representative ones. To rectify this issue, we first define a {\bf Prune} procedure which, given a set of points $X$ and parameter $\hatw$,  shaves off subsets  of $X$ that are tighter than $X$ (in the subspace $M$). Given such a procedure, we now run it on each of the sets $X_1, \ldots, X_{\hatk}$ returned by {\bf IdentifyK($P, \hatw)$}, and check that none of these sets shrink by a large factor. 




\paragraph{The \prune procedure}
Given a subspace $M$, a set of points $X$ and parameter $\hatw$, the \prune procedure reduces $X$ to a subset $\hatX$ as shown in Figure~\ref{fig:prune}. 

\begin{figure}[h]
    \begin{procedure}[H]
    Initialize $\hatX \leftarrow X$. \\
    \Repeat{$\hatX$ does not change}{
  		Call a subset $T$ of $\hatX$ to be {\em tight} if it satisfies the following two conditions: \\
  		\quad (a) $|T| \geq \lntsc$. \\
  		\quad (b) Optimal average centered 1-means cost of $\pi_M(T)$ is less than $\frac{\hatw^{12} \cdot |T|^2 \cdot \sigma_M(X)^2}{c \cdot |X|^2}$, where $c = 10^{12}. $ \\
  		 \If{$\hatX$ has a tight subset $T$}{
  		  $\hatX \leftarrow \hatX \setminus T. $\\
  		  }
  		   
    }
    \end{procedure}
    \caption{ \prune($X, M, \hatw$)}
    \label{fig:prune}
\end{figure}
%

\subsubsection{The Algorithm}
The algorithm is described in Figure~\ref{fig:unknownw}. It maintains an estimate $\hatw$ for $w_0$. Initially, the estimate starts at 1, and decreases in steps of $1/n$. We can assume wlog that $w_0$ is an integral multiple of $1/n$ (since we can always scale it up to the nearest such multiple). In each such iteration (with a guess $\hatw$), it calls {\bf IdentifyK($P,\hatw$)}. If the clusters returned by this procedure satisfy the given conditions, it halts with and outputs $\hatk$. 

\begin{figure}[h]
    \begin{procedure}[H]
    \label{l:for}
    Initialize $\hatw \leftarrow 1. $ \\
    \Repeat{ the algorithm stops with an estimate $\hatk$ }{ 
           \label{l:call}
           Call {\bf IdentifyK($P,\hatw$)}. \\
           Let $X_1, \ldots, X_{\hatk}$ be the clusters found by it. \\
           Let $\hatX_h$ be the set returned by \prune($X_h, M, \hatw$), $h=1, \ldots, \hatk$. \\
           Check the following conditions: \\
           (a) for all distinct pairs $h,j$, $1 \leq h,j \leq \hatk$, $$ |\mu_M(X_h) - \mu_M(X_j)| \geq \frac{800}{\hatw^4} \left( \sigma_M(X_h) + \sigma_M(X_j) \right). $$ \\
           (b)  for each $h, 1 \leq h \leq \hatk$, $|\hatX_h| \geq |X_h|/2$ \\
           (c) for each $h, 1 \leq h \leq \hatk$, $|X_h| \geq \hatw n/2$. 
           
           \If{ all the above conditions are satisfied}{
            halt and output $\hatk$ as  the number of clusters. \\
            }
            \Else{
            Decrease $\hatw$ by $1/n$. \\
            }

    }
    \end{procedure}
    \caption{ Polynomial time algorithm for identifying $k$ without knowing $w_0$.}
    \label{fig:unknownw}
\end{figure}


\paragraph{Case $\hatw = w_0$}

We begin by first showing that if $\hatw = w_0$, then the clustering $X_1, \ldots, X_k$ produced by {\bf IdentifyK($P,\hatw$)} will satisfy conditions~(a),~(b) and~(c). Lemma~\ref{lem:useful1} shows that condition~(a) will be satisfied. Condition~(c) is satisfied by Lemma~\ref{lem:key1}. 
We now proceed to show that condition~(b) will be satisfied as well. 

\begin{lemma}
\label{lem:tight}
Let $X_1, \ldots, X_k$ be the subsets produced by {\bf IdentifyK($P,\hatw$)}. Suppose $T$ is a tight subset of $X_h$ for an index $h \in \{1, \ldots, k\}$. Then at most half of the points in $T$ belong to $C_h$. 
\end{lemma}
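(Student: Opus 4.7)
The plan is to argue by contradiction: assume $|S| > |T|/2$ for $S := T\cap C_h$, and derive a contradiction by squeezing $\sigma_M(S)$ from both sides. Since $T$ is tight we have $|T|\geq \sqrt{n}\log n/100$ (condition (a)), hence $|S|\geq \sqrt{n}\log n/200$, which -- modulo the slack in the constants built into the \ntsc threshold -- is large enough to invoke \ntsc on $S\subseteq C_h$.

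First I would upper-bound $\sigma_M(T)$ using the tightness hypothesis. Since $\pi_M(T)$ lives in the $(1/w_0)$-dimensional subspace $M$, Claim~\ref{cl:centone} gives $\sigma_M(T)^2 \leq \opt(\cI)/|T|$ for the corresponding \centone instance $\cI$, so condition (b) in the definition of ``tight'' yields
$$\sigma_M(T)^2 \;<\; \frac{w_0^{12}\,|T|^2\,\sigma_M(X_h)^2}{10^{12}\,|X_h|^2}.$$
Passing to the subset $S$ via Claim~\ref{cl:sigma1} and using $|S|>|T|/2$, I would obtain
$$\sigma_M(S)^2 \;\leq\; \frac{|T|}{|S|}\sigma_M(T)^2 \;<\; \frac{2\,w_0^{12}\,|T|^2\,\sigma_M(X_h)^2}{10^{12}\,|X_h|^2}.$$

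Next I would lower-bound $\sigma_M(S)$ using \ntsc. Applying \ntsc with $L$ the one-dimensional subspace of $M$ achieving $\sigma_M(C_h)^2 = \sigma^2(\pi_L(C_h))$, and using $|S|>|T|/2$, gives
$$\sigma_M(S)^2 \;\geq\; \sigma^2(\pi_L(S)) \;\geq\; \frac{|S|^2}{125\,|C_h|^2}\,\sigma_M(C_h)^2 \;\geq\; \frac{|T|^2}{500\,|C_h|^2}\,\sigma_M(C_h)^2.$$
Combining the two bounds and cancelling $|T|^2$, I would reach
$$\frac{|X_h|^2}{|C_h|^2}\,\sigma_M(C_h)^2 \;<\; \frac{w_0^{12}}{10^{9}}\,\sigma_M(X_h)^2.$$

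The last step is to plug in the two structural estimates from the analysis of \textbf{IdentifyK} (which is valid since $\hatw=w_0$ here, after the relabeling that matches $X_h$ with $C_h$). Lemma~\ref{lem:useful} gives $\sigma_M(X_h)^2 \leq (4000\,k^2/w_0^4)^2\,\sigma_M(C_h)^2$, and Lemma~\ref{lem:key1} shows $|X_h|\geq (1-1/10)\,|C_h|$, so $|X_h|^2/|C_h|^2 \geq 81/100$. Substituting these, the displayed inequality becomes $81/100 \lesssim k^4 w_0^4$, but the minimum-weight hypothesis forces $k\leq 1/w_0$ and hence $k^4 w_0^4 \leq 1$ -- a contradiction. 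The surplus $w_0^{12}$ built into the definition of ``tight'' was chosen precisely to absorb the $k^4/w_0^8$ blow-up of Lemma~\ref{lem:useful} plus the $O(1)$ loss in $|X_h|/|C_h|$. The main obstacle is therefore just the bookkeeping of these constants; everything else -- Claim~\ref{cl:sigma1}, Claim~\ref{cl:centone}, \ntsc, and the structural Lemmas~\ref{lem:useful},~\ref{lem:key1} -- is used off the shelf.
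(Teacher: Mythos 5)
Your proof is correct and follows essentially the same route as the paper: both arguments invoke Claim~\ref{cl:centone} to translate the tightness condition into a bound on $\sigma_M(T)^2$, pass between $T$ and $S=T\cap C_h$ via Claim~\ref{cl:sigma1}, apply \ntsc to $S\subseteq C_h$, and then close the loop with Lemmas~\ref{lem:useful} and~\ref{lem:key1} plus $kw_0\leq 1$ (you rearrange the chain to isolate a pure numeric contradiction $|X_h|^2/|C_h|^2\lesssim k^4w_0^4$, while the paper phrases it as the lower bound on $\sigma_M(T)^2$ exceeding the tightness threshold, but this is the same inequality reshuffled). You also correctly flag the factor-of-two slack needed to invoke \ntsc on $S$ with $|S|\geq\lntsc/2$ rather than $\lntsc$ — the paper silently has the same gap.
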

\begin{proof}
Suppose not. Let $T'$ be $T \cap C_h$, and so $|T'| \geq |T|/2$. It follows from Claim~\ref{cl:sigma1} that $\sigma_M(T)^2 \geq \sigma_M(T')^2/2$. Since $T' \subseteq C_h$ and $|T'| \geq \lntsc$, \ntsc implies that $$ \sigma_M(T')^2 \geq \frac{|T'|^2}{125 \cdot |C_h|^2} \sigma_M(C_h)^2 \geq 
\frac{|T|^2}{500 \cdot |X_h|^2} \sigma_M(C_h)^2,$$
where the last inequality follows from the fact that $|X_h| \geq |C_h|/2$ (by Lemma~\ref{lem:key1} the two sets differ in at most $\frac{w_0^2 n}{10}$ elements). 

Now, Lemma~\ref{lem:useful} implies that 
$$ \sigma_M(T)^2 \geq \frac{\sigma_M(T')^2}{2} \geq \frac{ w_0^8 \cdot |T|^2}{c k^4 \cdot |X_h|^2} \sigma_M(X_h)^2 
\geq \frac{ w_0^{12} \cdot |T|^2}{c  \cdot |X_h|^2} \sigma_M(X_h)^2 $$
which contradicts the fact that $T$ is a tight subset of $X_h$ (using Claim~\ref{cl:centone}). \qed
\end{proof}

\begin{corollary}
\label{cor:tight}
Let $X_1, \ldots, X_k$ be the subsets produced by {\bf IdentifyK($P,\hatw$)}. For all $h \in \{1, \ldots, k\}$, $|\hatX_h| \geq |X_h|/2$, where $\hatX_h$ is the set returned by \prune$(X_h, M, \hatw)$. 
\end{corollary}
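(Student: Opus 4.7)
The plan is to combine Lemma~\ref{lem:tight} with a bound on the total ``contamination'' of $X_h$ by points outside $C_h$, to show that \prune can only remove a small fraction of $X_h$.

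First, I would bound $|X_h \setminus C_h|$ for a fixed $h$. After the relabeling used in the analysis of {\bf IdentifyK}, Lemma~\ref{lem:key1} gives $|X_h \cap C_{h'}| \leq \frac{w_0^2 n}{10 k}$ for every $h' > h$. For $h' < h$, invariant~\eqref{eq:invnew} gives $|P^{(h)} \cap C_{h'}| \leq \frac{w_0^2 n}{10}$, and since $X_h \subseteq P^{(h)}$ the same bound applies to $|X_h \cap C_{h'}|$. Summing over $h' \neq h$ and using $k \leq 1/w_0$ (all $k$ ground-truth clusters have weight at least $w_0$), the total contamination satisfies $|X_h \setminus C_h| \leq w_0 n / 10$.

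Next, I would aggregate Lemma~\ref{lem:tight} across the iterations of \prune. Let $Y := X_h \setminus \hatX_h$ denote the full set of points removed. At each iteration, \prune extracts a tight subset $T$ of the current $\hatX$; but since the tightness threshold in the \prune procedure is defined using $\sigma_M(X_h)$ and $|X_h|$ (the unchanging original input), every such $T$ is also a tight subset of $X_h$, and Lemma~\ref{lem:tight} gives $|T \setminus C_h| \geq |T|/2$. Accumulating over the iterations, at least $|Y|/2$ of the points in $Y$ lie outside $C_h$. Since the contaminant set within $X_h$ is static and bounded by $w_0 n / 10$ from step one, we conclude $|Y| \leq w_0 n / 5$.

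Finally, condition~(c), verified earlier via Lemma~\ref{lem:key1}, gives $|X_h| \geq w_0 n / 2$; hence $|Y| \leq w_0 n / 5 < w_0 n / 4 \leq |X_h|/2$, so $|\hatX_h| = |X_h| - |Y| > |X_h|/2$. The main subtlety I anticipate is justifying that Lemma~\ref{lem:tight} can be invoked on tight subsets of intermediate $\hatX \subsetneq X_h$ encountered during pruning, and that per-iteration contamination bounds aggregate cleanly; both points work out because the tightness threshold is pinned to the original $X_h$ rather than the shrinking $\hatX$, and because the contaminant set only decreases across iterations.
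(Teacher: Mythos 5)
Your proof follows essentially the same approach as the paper's: bound the contamination $|X_h \setminus C_h|$ using Lemma~\ref{lem:key1} and the invariants, invoke Lemma~\ref{lem:tight} to conclude that at least half of every tight subset removed by \prune lies outside $C_h$, and deduce that the total pruned mass is at most twice the contamination, which is below $|X_h|/2$. Your version is slightly more careful in the bookkeeping — you explicitly account for the contamination coming from clusters $C_{h'}$ with $h' < h$ via invariant~\eqref{eq:invnew} (landing at the weaker but still sufficient bound $w_0 n/10$), you correctly double the contamination bound when passing to the total number of pruned points, and you note why the tightness threshold being pinned to the original $X_h$ makes Lemma~\ref{lem:tight} applicable throughout the pruning loop — all points the paper's terser proof glosses over, but the underlying argument is the same.
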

\begin{proof}
We know that at most $\frac{w_0^2 n}{10}$ points of $X_h$ lie outside $C_h$ (Lemma~\ref{lem:key1}). Whenever the \prune procedure removes a subset $T$ from $X_h$, at least $|T|/2$ elements belong to $X_h \setminus C_h$ (by Lemma~\ref{lem:tight}). Therefore, it can remove at most $\frac{w_0^2 n}{10}$ elements from $X_h$, which is at most $|X_h|/2$. \qed
\end{proof}

In order to prove correctness, it remains to show that if the algorithm stops before $\hatw$ reaches $w_0$, then it returns $\hatk = k$. So assume that the algorithm stops at a value $\hatw > w_0$ and let $X_1, \ldots, X_{\hatk}$ be the corresponding clusters, which satisfy conditions~(a), (b), (c). The proof proceeds in two parts: we first show that $\hatk \geq k$, and then show that $\hatk \leq k$. 

\paragraph{Case $\hatk < k$:}

We first consider the case when $\hatk < k$. 
\begin{claim}
\label{cl:part}
There is an index $h \in \{1, \ldots, \hatk\}$ and distinct indices $\ell_1, \ell_2 \in \{1, \ldots, k\}$ such that $|X_h \cap C_{\ell_1}|, 
|X_h \cap C_{\ell_2}| \geq \frac{w_0^2 n}{2}. $
\end{claim}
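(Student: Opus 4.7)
The plan is a pigeonhole argument on how ground-truth clusters distribute across $X_1, \ldots, X_{\hatk}$.

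For each $\ell \in \{1, \ldots, k\}$, I define $h(\ell) := \argmax_{h \in \{1, \ldots, \hatk\}} |C_\ell \cap X_h|$, the output cluster that retains the most of $C_\ell$. Since we are in the case $\hatk < k$, the map $\ell \mapsto h(\ell)$ from a $k$-element set to a $\hatk$-element set must collide, producing distinct $\ell_1, \ell_2$ with $h(\ell_1) = h(\ell_2) =: h$; this $h$ will be the index promised by the claim.

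It then remains to lower bound $|C_{\ell_i} \cap X_h|$ by $w_0^2 n /2$. The termination rule inside {\bf IdentifyK($P, \hatw$)} guarantees $|P^{(\hatk+1)}| \leq \hatw n /10$, i.e.\ at most $\hatw n /10$ points of $P$ are residual. I call a cluster $C_\ell$ \emph{light} if its individual residual $|C_\ell \setminus \bigcup_h X_h|$ is at most $w_0 n /2$; then at most $\frac{\hatw n /10}{w_0 n /2} = \frac{\hatw}{5 w_0}$ clusters are non-light, since the per-cluster residuals sum to at most $\hatw n /10$. For a light cluster, $\sum_h |C_\ell \cap X_h| \geq |C_\ell| - w_0 n/2 \geq w_0 n/2$, so the argmax property yields
$$|C_\ell \cap X_{h(\ell)}| \;\geq\; \frac{w_0 n /2}{\hatk} \;\geq\; \frac{w_0^2 n}{2},$$
using $\hatk \leq k \leq 1/w_0$. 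So if both $\ell_1$ and $\ell_2$ supplied by pigeonhole can be taken light, the claim follows immediately.

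The hard part, as I see it, is guaranteeing enough light clusters to rerun the pigeonhole on them: one needs $|\{\ell \text{ light}\}| > \hatk$, which rearranges to $\hatw < 5 w_0 (k - \hatk)$. For $\hatw \leq 5 w_0$ this is automatic since $k - \hatk \geq 1$. For larger $\hatw$, I expect to lean on condition~(c) of the stopping rule in Figure~\ref{fig:unknownw}: since the $X_h$ are disjoint subsets of $P$ each of size at least $\hatw n /2$, one gets $\hatk \leq 2/\hatw$, which combined with the min-weight bound $k \leq 1/w_0$ and the case assumption $k > \hatk$ should supply the required slack to absorb the at most $\hatw/(5 w_0)$ non-light clusters. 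This bookkeeping is the only subtle step; the pigeonhole itself and the light-cluster lower bound are direct.
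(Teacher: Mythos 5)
Your route is genuinely different from the paper's. The paper argues by contradiction: if every $X_h$ hits at most one GT cluster in $\geq w_0^2 n/2$ points, then since $\hatk < k$ some $C_\ell$ is hit by none of the $X_h$'s beyond the threshold, and summing the small intersections over $h$ bounds $|C_\ell|$ below $w_0 n$, a contradiction. This only requires one cluster to be ``missed,'' which is automatic from $\hatk < k$. You instead pigeonhole the argmax map $\ell \mapsto h(\ell)$ and try to prove the colliding pair satisfies the bound; for that you need both colliding clusters to be ``light,'' so you need strictly more than $\hatk$ light clusters, which is a strictly harder counting requirement than the paper's.

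That is where your argument has a genuine gap, and you half-acknowledge it yourself. Your sufficient condition is $\hatw < 5 w_0 (k-\hatk)$. When $k - \hatk = 1$ (which is perfectly possible in the case $\hatk < k$), this reduces to $\hatw < 5w_0$, and nothing you invoke rules out $\hatw \geq 5w_0$: condition~(c) gives $\hatk \leq 2/\hatw$, but that is an upper bound on $\hatk$, not a lower bound on $k - \hatk$, so the ``slack'' you hope for is not actually supplied. The bookkeeping paragraph is a wish, not a proof. The paper's pigeonhole sidesteps all of this because it never needs two light clusters at once.

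On the plus side, you correctly notice something the paper's proof glosses over: the $X_h$'s do not partition $P$ (the algorithm stops with a residual $P^{(\hatk+1)}$ of up to $\hatw n/10$ points), so the paper's line ``$|C_\ell| = \sum_h |C_\ell \cap X_h|$'' is not an equality. A careful version of the paper's contradiction would add the residual term, giving $|C_\ell| < \hatk\, w_0^2 n/2 + \hatw n/10$, and one still has to argue this is below $w_0 n$. But that is a single, sharp inequality in one cluster; your version needs it with room to spare across enough clusters to feed a second pigeonhole. If you want to keep the residual accounting but repair the proof, the simpler fix is to adopt the paper's one-cluster setup: take the $\ell$ outside the image of $h \mapsto \ell_h$, bound its covered mass by $\hatk\, w_0^2 n/2$ and its uncovered mass by the residual, and derive the contradiction — rather than demanding two light clusters collide.
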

\begin{proof}
Suppose not. Then for every $h \in \{1, \ldots, \hatk\}$, there is at most one index, call it $\ell_h$, such that $|X_h \cap C_{\ell_h}| \geq \frac{w_0^2 n}{2}$. Since $k > \hatk$, there is an index $\ell$ which is not equal to $\ell_h$ for any $h \in \{1, \ldots, \hatk\}$. But then
$$|C_\ell| = \sum_{h=1}^\hatk |C_\ell \cap X_h| \leq \hatk \cdot \frac{w_0^2 n}{2} < \frac{w_0^2 k n}{2} \leq \frac{w_0 n}{2}, $$
which is a contradiction. \qed
\end{proof}

Let $h$ be the index guaranteed be Claim~\ref{cl:part}, and by renumbering assume without loss of generality that the indices $\ell_1, \ell_2$ are $1,2$ respectively. 
We now show that $\sigma_M(X_h)$ is large. 

\begin{lemma}
\label{lem:larges}
$\sigma_M(X_h)^2$ is at least $\frac{w_0^3 \cdot \gamma^2 \sigma_0^2}{1600}$.
\end{lemma}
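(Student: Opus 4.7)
The plan is to lower-bound $\sigma_M(X_h)$ by exploiting the fact that $X_h$ contains large subsets $S_i := X_h \cap C_i$, $i=1,2$, from two GT clusters whose projected means are far apart. By Claim~\ref{cl:part} we have $|S_1|,|S_2| \geq w_0^2 n/2$, which by~\eqref{eq:lbnd} exceeds $\lntsc$, so that \ntsc applies to each $S_i$. I would first bound the spread of each $S_i$: since $\pi_M(S_i) \subseteq \pi_M(C_i)$, Claim~\ref{cl:sigma1} gives $\sigma_M(S_i)^2 \leq \frac{|C_i|}{|S_i|}\sigma_M(C_i)^2 \leq \frac{2}{w_0^2}\sigma_0^2$. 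Next, specializing \ntsc to the direction in $M$ that achieves $\sigma_M(C_i)$ yields the subspace analog $\sigma_M^2(S_i) \geq \frac{|S_i|^2}{125|C_i|^2}\sigma_M^2(C_i)$. Feeding this into Lemma~\ref{RcapS} applied to $\pi_M(S_i) \subseteq \pi_M(C_i)$ (exactly as in the proof of Lemma~\ref{lem:rcaps}, but in the subspace) produces
\begin{equation*}
|\mu_M(S_i) - \mu_M(C_i)| \leq \tfrac{50|C_i|^{3/2}}{|S_i|^{3/2}}\sigma_M(S_i) = O(1/w_0^3)\cdot\sigma_M(S_i) = O(\sigma_0/w_0^4).
\end{equation*}

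I would then lower-bound $|\mu_M(S_1)-\mu_M(S_2)|$. By Lemma~\ref{lem:close2} and \sepn, $|\mu_M(C_1)-\mu_M(C_2)| \geq \gamma\sigma_0 - 6\sigma_0/w_0 \geq \gamma\sigma_0/2$, using $\gamma = K/w_0^{11}$ with $K$ large. Combining with the preceding estimate via the triangle inequality gives $|\mu_M(S_1)-\mu_M(S_2)| \geq \gamma\sigma_0/2 - O(\sigma_0/w_0^4) \geq \gamma\sigma_0/4$, since again $\gamma$ dominates $1/w_0^4$ by the choice of $K$.

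Finally, to convert this separation into a variance lower bound I would choose $v := (\mu_M(S_1)-\mu_M(S_2))/|\mu_M(S_1)-\mu_M(S_2)|$ and apply the between-group variance decomposition to the partition $X_h = S_1 \cup S_2 \cup R$ (with $R := X_h \setminus (S_1 \cup S_2)$). Dropping the within-group variances and the $R$-block,
\begin{equation*}
|X_h|\,\sigma_M^2(X_h) \;\geq\; |X_h|\,\mathrm{Var}_v(X_h) \;\geq\; |S_1|\,p^2 + |S_2|\,q^2,
\end{equation*}
where $p := v\cdot(\mu_M(S_1)-\mu_M(X_h))$, $q := v\cdot(\mu_M(S_2)-\mu_M(X_h))$, so that $p-q = |\mu_M(S_1)-\mu_M(S_2)|$. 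Using $p^2+q^2 \geq (p-q)^2/2$ together with $\min(|S_1|,|S_2|)/|X_h| \geq w_0^2/2$ yields $\sigma_M^2(X_h) \geq (w_0^2/4)\,|\mu_M(S_1)-\mu_M(S_2)|^2 \geq w_0^2\gamma^2\sigma_0^2/64$, which is at least $w_0^3\gamma^2\sigma_0^2/1600$ since $w_0 \leq 1$. I expect the only delicate step to be confirming that \ntsc in ambient space transfers cleanly to the \wntsc-type inequality in $M$ used above; everything else is triangle inequality and the between-group variance identity.
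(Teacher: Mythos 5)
Your proof is correct, and the overall skeleton mirrors the paper's: bound $|\mu_M(S_i)-\mu_M(C_i)|$, combine with Lemma~\ref{lem:close2} and \sepn to lower-bound $|\mu_M(S_1)-\mu_M(S_2)|$, then turn that mean separation into a lower bound on $\sigma_M(X_h)$. Two places where your route diverges from the paper's. First, for the mean bound you detour through \ntsc (specialized to a line in $M$) and Lemma~\ref{lem:rcaps}, landing on $O(\sigma_0/w_0^4)$; the paper instead applies Lemma~\ref{RcapS} directly together with Claim~\ref{cl:sigma1} (both transfer verbatim to $\pi_M$) to get $|\mu_M(T_i)-\mu_M(C_i)|^2 \le 4|C_i|\sigma_0^2/|T_i| \le 8\sigma_0^2/w_0^2$, which is shorter and does not invoke \ntsc at all. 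Your detour is sound but unnecessary, and your self-flagged ``delicate step'' about transferring \ntsc into $M$ resolves cleanly exactly as you describe (maximize over $v^\star\in M$ achieving $\sigma_M(C_i)$, apply \ntsc on $\mathrm{span}(v^\star)$). Second, and this is a genuine difference, your conversion of the mean gap into a $\sigma_M(X_h)$ lower bound picks a single direction $v$ and uses the between-group variance decomposition, giving $\sigma_M^2(X_h)\ge (w_0^2/4)|\mu_M(S_1)-\mu_M(S_2)|^2 \ge w_0^2\gamma^2\sigma_0^2/64$. The paper instead lower-bounds $\lVert B\rVert_F^2$ using only one of the two blocks, then passes from Frobenius to spectral norm by dividing by the rank ($\le 2/w_0$), which is what introduces the extra factor of $w_0$ and lands at $w_0^3\gamma^2\sigma_0^2/1600$. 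Your direction-specific argument sidesteps the rank loss and yields a strictly stronger constant; the paper's Frobenius/rank argument is a reusable workhorse elsewhere in Section~\ref{sec:polyw1}, which is presumably why it is used there too. Both are valid.
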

\begin{proof}
Let $T_i$ denote $X_h \cap C_i, i = 1,2$.
By our assumption, $\frac{|T_i|}{|C_h|} \geq w_0^2/2,$ $i=1,2.$
Lemma~\ref{RcapS} along with Claim~\ref{cl:sigma1} imply that 
$$ |\mu_M(C_i) - \mu_M(T_i)|^2 \leq \frac{4 |C_i| \cdot \sigma_0^2 }{|T_i|}
\leq  \frac{8 \sigma_0^2}{w_0^2}. $$
Now, Lemma~\ref{lem:close2} along with the separation condition implies that 
$$|\mu_M(T_1) - \mu_M(T_2)| \geq \frac{\gamma \sigma_0}{10}. $$
Therefore one of $|\mu_M(T_1) - \mu_M(X_h)|, |\mu_M(T_1) - \mu_M(X_h)|$, say the former, is at least $\frac{\gamma \sigma_0}{20}$. 

Let $B$ be the $|X_h| \times d$ matrix whose $j^{th}$ row   given by $\pi_M(x_j) - \mu_M(X_h)$, where $x_j$ denotes the coordinates of $j^{th}$ point in $X_h$.
Then 
$$||B||_F^2 \geq \sum_{x \in T_1} |\pi_M(x) - \mu_M(X_h)|^2 \geq |T_1| \cdot |\mu_M(T_1) - \mu_M(X_h)|^2 \geq \frac{\gamma^2 \sigma^2}{400} \cdot |T_1| \geq 
\frac{w_0^2 \cdot \gamma^2 \sigma^2 n}{800}. $$
Since $B$ has rank at most $\frac{2}{w_0}$, it follows that 
$$\sigma_M(X_h)^2 = ||B||^2 \geq \frac{w_0^3 \cdot \gamma^2 \sigma^2}{1600}. $$ \qed
\end{proof}

\begin{lemma}
\label{lem:prune}
For any index $\ell \in \{1, \ldots, k\}$, $\hatX_h$ contains at most $\frac{w_0^2 n}{2}$ elements of $C_\ell$. 
\end{lemma}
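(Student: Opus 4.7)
The strategy is proof by contradiction using the termination condition of \prune. Suppose, contrary to the claim, that there is an index $\ell \in \{1,\ldots,k\}$ such that the set $T := \hatX_h \cap C_\ell$ has $|T| > w_0^2 n/2$. I would argue that $T$ is a \emph{tight} subset of $\hatX_h$ in the sense of Figure~\ref{fig:prune}. Since $T \subseteq \hatX_h$, this directly contradicts the fact that the \prune procedure terminated leaving $\hatX_h$, because its termination condition is precisely that no tight subset remains.

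First I would check size condition~(a) of tightness: $|T| > w_0^2 n/2$ combined with the standing assumption $w_0^2 n \geq \sqrt{n}\log n$ from inequality~(\ref{eq:lbnd}) gives $|T| \geq \lntsc$ with room to spare. The main work is verifying cost condition~(b). For the upper bound on the optimal average centered 1-means cost of $\pi_M(T)$, I would use Claim~\ref{cl:centone} in the subspace $M$ (whose dimension is $1/\hatw$ because the current call is {\bf IdentifyK}$(P,\hatw)$) together with Claim~\ref{cl:sigma1} applied to the projections: since $T \subseteq C_\ell$, this gives
$$\frac{\opt}{|T|} \leq \frac{4}{\hatw}\,\sigma_M(T)^2 \leq \frac{4}{\hatw}\cdot\frac{|C_\ell|}{|T|}\,\sigma_M(C_\ell)^2 \leq \frac{4n\sigma_0^2}{\hatw\,|T|}.$$
For the lower bound on the tightness threshold I would invoke the key Lemma~\ref{lem:larges}, which says $\sigma_M(X_h)^2 \geq w_0^3\gamma^2\sigma_0^2/1600$, together with $|X_h|\leq n$, giving a threshold at least $\hatw^{12}|T|^2 w_0^3 \gamma^2 \sigma_0^2/(1600 c n^2)$.

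Comparing these two bounds, the sufficient inequality reduces, after cancelling $\sigma_0^2$, to $|T|^3\hatw^{13} w_0^9 \gamma^2 \gtrsim n^3$. Plugging in $|T| > w_0^2 n/2$, using $\hatw \geq w_0$ (the algorithm has not yet reached $\hatw = w_0$), and substituting $\gamma = K/w_0^{11}$, all powers of $w_0$ cancel exactly and the inequality collapses to $K^2 \gtrsim c$, which holds for $K$ chosen large enough in the separation condition~(\ref{eq:sept}). This shows $T$ is tight and yields the contradiction.

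The main obstacle is bookkeeping: there are several competing polynomial factors in $w_0$, $\hatw$, and $\gamma$, and one needs to check that the favourable factor $\hatw^{12}$ in the tightness threshold of \prune, combined with the $\gamma^2 = K^2/w_0^{22}$ provided by well-separation and the $w_0^3$ in the Lemma~\ref{lem:larges} bound, together absorb the unfavourable factor $1/\hatw$ picked up from Claim~\ref{cl:centone}. The design of the exponents $12$ in \prune, $11$ in $\gamma$, and the lower bound $\sigma_M(X_h)^2 \gtrsim w_0^3\gamma^2\sigma_0^2$ is precisely calibrated so that this balance works when $K$ is a large enough absolute constant.
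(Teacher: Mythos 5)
Your argument mirrors the paper's proof almost exactly: both take $T = \hatX_h \cap C_\ell$, upper-bound its average centered 1-means cost via Claim~\ref{cl:sigma1} and Claim~\ref{cl:centone} (using that $T\subseteq C_\ell$), lower-bound the tightness threshold via Lemma~\ref{lem:larges} and $|X_h|\le n$, and conclude $T$ is tight and hence would have been removed by \prune --- a contradiction. One small slip: the reduced inequality should read $|T|^3\hatw^{13} w_0^{3}\gamma^2 \gtrsim n^3$, not $w_0^{9}$; with $w_0^3$ the powers of $w_0$ do indeed cancel exactly (as you claim) upon substituting $|T|>w_0^2 n/2$, $\hatw\ge w_0$, $\gamma = K/w_0^{11}$, whereas with $w_0^9$ they would not.
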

\begin{proof}
Suppose, for the sake of contradiction, that $X':=\hatX_h \cap C_\ell$ has size more than $\frac{w_0^2 n}{2} \geq 2d \log d$ for some $\ell \in \{1, \ldots, k\}$. We first upper bound $\sigma_M(X')^2$. 

Since $X'$ is a subset of $C_\ell$, Claim~\ref{cl:sigma1} shows that 
$$\sigma_M(X')^2 \leq \frac{|C_\ell|}{|X'|} \sigma_M(C_\ell)^2 \leq \frac{2\sigma_0^2}{w_0^2}. $$

Claim~\ref{cl:centone} shows that  the average centered 1-means cost of $\pi_M(X')$ is at most $\frac{4\sigma_0^2}{w_0^3}$. 

Now Lemma~\ref{lem:larges} implies that 
$$ \frac{\hatw^{12} |X'|^2 \cdot \sigma_M(X)^2}{c \cdot |X|^2} \geq 
\frac{w_0^{12} \cdot w_0^4 \cdot w_0^3 \cdot \gamma^2 \sigma_0^2}{3200 c} 
\geq  \frac{2\sigma_0^2}{w_0^3}. $$
But then the \prune($X_h, M, \hatw$) procedure should have removed $X'$ from $\hatX_h$, a contradiction. \qed
\end{proof}

\begin{corollary}
\label{cor:prune}
$|\hatX_h| < |X_h|/2$.
\end{corollary}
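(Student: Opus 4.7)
The plan is a direct union-bound argument built on Lemma~\ref{lem:prune}. Since $\hatX_h \subseteq P$ and the ground-truth clusters $C_1,\ldots,C_k$ partition $P$, we may write
\[
|\hatX_h| \;=\; \sum_{\ell=1}^{k} |\hatX_h \cap C_\ell|.
\]
Lemma~\ref{lem:prune} caps each summand at $w_0^2 n/2$, and the minimum cluster weight condition gives $k \leq 1/w_0$, so
\[
|\hatX_h| \;\leq\; k \cdot \frac{w_0^2 n}{2} \;\leq\; \frac{w_0 n}{2}.
\]

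For the opposing side, I would obtain a lower bound $|X_h| > w_0 n$ by combining two pieces of information. First, because the outer algorithm halted at the current value $\hatw$, condition~(c) holds: $|X_h| \geq \hatw n/2$. In the case analysis of the corollary we are in the regime $\hatw > w_0$, so at least $|X_h| > w_0 n / 2$. Second, by Claim~\ref{cl:part}, the specific $X_h$ under consideration absorbs at least $w_0^2 n/2$ points from each of two distinct ground-truth clusters $C_{\ell_1}$ and $C_{\ell_2}$, so $X_h$ is not a ``thin'' pseudo-cluster. A careful combination of these two lower bounds yields $|X_h| > w_0 n$, which together with the upper bound above gives the desired strict halving $|\hatX_h| < |X_h|/2$.

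The main obstacle is exactly this last step: closing the factor-of-two gap so that we obtain strict halving rather than merely $|\hatX_h| < |X_h|$. Once the corollary is established, it directly contradicts condition~(b) of the outer algorithm, completing the case $\hatk < k$ by showing that the outer algorithm cannot halt with $\hatk < k$ at any $\hatw > w_0$.
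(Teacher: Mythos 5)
Your decomposition $|\hatX_h| = \sum_{\ell=1}^k |\hatX_h \cap C_\ell|$, the appeal to Lemma~\ref{lem:prune}, and the use of $k \leq 1/w_0$ to get $|\hatX_h| \leq w_0 n/2$ all match the paper's argument exactly, as does the plan to finish by lower-bounding $|X_h|$ through condition~(c) and $\hatw > w_0$. The problem is the step you flag yourself: the claimed bound $|X_h| > w_0 n$ does not follow from anything you cite. Condition~(c) gives $|X_h| \geq \hatw n/2$, and $\hatw > w_0$ then gives only $|X_h| > w_0 n/2$ --- half of what you need. Claim~\ref{cl:part} does not rescue this: it gives two disjoint pieces of size at least $w_0^2 n/2$ inside $X_h$, so at best $|X_h| \geq w_0^2 n$, which is \emph{weaker} than $w_0 n$ (since $w_0 < 1$). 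The two lower bounds are bounds on the same quantity, not additive contributions, so ``a careful combination'' of them cannot exceed the better of the two, namely $w_0 n/2$. So the factor-of-two gap you point to is real and unclosed in your writeup. For what it's worth, the paper's own proof, $|\hatX_h| \leq \frac{w_0^2 k n}{2} \leq \frac{w_0 n}{2} \leq \frac{\hatw n}{2} \leq |X_h|/2$, silently invokes $|X_h| \geq \hatw n$ in its last step, which is what condition~(c) would need to say (rather than the stated $|X_h| \geq \hatw n/2$) for the chain to be literally correct; the tighter bound $|X_h| \geq |C_h| - w_0^2 n/10$ from Lemma~\ref{lem:key1} does hold when $\hatw = w_0$, so the fix is likely a constant in the statement of condition~(c), not a new idea --- but your proposal neither notices this mismatch nor supplies a valid alternative, so as written it does not establish the corollary.
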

\begin{proof}
By Lemma~\ref{lem:prune}, 
$$|\hatX_h| \leq \frac{w_0^2 k n}{2} \leq \frac{w_0 n}{2} \leq \frac{\hatw n}{2} \leq |X_h|/2. $$ \qed
\end{proof}

Corollary~\ref{cor:prune} shows that $\{X_1, \ldots, X_{\hatk}\}$ violate condition~(b), which is a contradiction. Therefore, $\hatk \geq k$. 

\paragraph{Case  {${\mathbf \hatk > k}$}}:

We now assume $\hatk > k$. 
\begin{claim}
\label{cl:part1}
There is an index $\ell \in \{1, \ldots, k\}$ and distinct indices $h_1, h_2 \in \{1, \ldots, \hatk\}$ such that $|C_\ell \cap \hatX_{h_1}|, 
|C_\ell \cap \hatX_{h_2}| \geq \frac{\hatw^2 n}{4}. $
\end{claim}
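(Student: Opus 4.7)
The argument mirrors Claim~\ref{cl:part} with the roles of $\hatk$ and $k$ interchanged, and is essentially a pigeonhole count. Since we are in the case $\hatk > k$, there are more pruned clusters $\hatX_h$ than ground-truth clusters, and each $\hatX_h$ is of substantial size; thus two of them must sit largely inside the same $C_\ell$.

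I will proceed by contradiction. Suppose that for every $\ell \in \{1, \ldots, k\}$, at most one index $h \in \{1, \ldots, \hatk\}$ satisfies $|C_\ell \cap \hatX_h| \geq \hatw^2 n/4$; call such an $h$ (if it exists) the \emph{owner} of $\ell$. Then there are at most $k$ owners in total, so since $\hatk > k$ there is at least one $h^\star \in \{1, \ldots, \hatk\}$ that is not the owner of any cluster. For this $h^\star$,
\begin{align*}
|C_\ell \cap \hatX_{h^\star}| < \frac{\hatw^2 n}{4} \quad \text{for every } \ell \in \{1, \ldots, k\},
\end{align*}
and since $\hatX_{h^\star} \subseteq P = \bigcup_{\ell=1}^k C_\ell$, summing over $\ell$ yields $|\hatX_{h^\star}| < k\hatw^2 n/4$.

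On the other hand, conditions (b) and (c) of the algorithm give $|\hatX_{h^\star}| \geq |X_{h^\star}|/2 \geq \hatw n/4$. To complete the contradiction I need an upper bound $k \leq 1/\hatw$ (up to constants), which comes from a second, independent pigeonhole: the sets $X_1, \ldots, X_{\hatk}$ produced by {\bf IdentifyK} are disjoint subsets of $P$, so by condition (c), $\hatk \cdot \hatw n/2 \leq n$, i.e.\ $\hatk \leq 2/\hatw$; combined with $\hatk > k$ this gives $k\hatw < 2$. A small constant tightening (using $\hatk \geq k+1$, or equivalently absorbing a factor of two into the threshold) then upgrades this to $k\hatw^2 n/4 < \hatw n/4$, contradicting the lower bound $|\hatX_{h^\star}| \geq \hatw n/4$ just derived.

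The argument is structurally very close to the proof of Claim~\ref{cl:part} and the only mildly delicate step is bookkeeping the constants so that the two inequalities $|\hatX_{h^\star}| < k\hatw^2 n/4$ and $|\hatX_{h^\star}| \geq \hatw n/4$ actually collide under the constraint $\hatk \leq 2/\hatw$; the combinatorial content is simply a pigeonhole that dualizes the $\hatk < k$ case.
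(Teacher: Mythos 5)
Your approach mirrors the paper's exactly: a pigeonhole dual to Claim~\ref{cl:part}, with a partial ``owner'' map from $[k]$ into $[\hatk]$ and an unowned index $h^\star \in [\hatk]$ whose pruned cluster $\hatX_{h^\star}$ is simultaneously small (under the contradiction hypothesis) and large (by conditions~(b) and~(c)). So the combinatorial content is the same.

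Your worry about the constant bookkeeping is well-placed, and in fact the paper's own write-up glosses over exactly this point: its chain $|\hatX_{h^\star}| < \frac{k\hatw^2 n}{4} \leq \frac{\hatk\hatw^2 n}{4} \leq \frac{\hatw n}{4}$ needs $\hatk\hatw \leq 1$ in the last step, whereas disjointness of the $X_h$ together with condition~(c) only yields $\hatk\cdot \hatw n/2 \leq n$, i.e.\ $\hatk \leq 2/\hatw$. Of the two patches you float, the first (``use $\hatk \geq k+1$'') does \emph{not} close the gap: it gives $k\hatw \leq 2 - \hatw$, which still exceeds $1$ whenever $\hatw < 1$. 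The second patch does work: if the threshold in the claim is lowered to $\frac{\hatw^2 n}{8}$, then the unowned $h^\star$ satisfies $|\hatX_{h^\star}| < \frac{k\hatw^2 n}{8} < \frac{(2/\hatw)\hatw^2 n}{8} = \frac{\hatw n}{4} \leq |\hatX_{h^\star}|$, a genuine contradiction --- at the cost of carrying the weaker guarantee $|C_\ell \cap \hatX_{h_i}| \geq \hatw^2 n/8$ through the rest of the $\hatk > k$ case, where it feeds into Lemma~\ref{lem:rcaps} and loosens the constant in~\eqref{eq:id} by a bounded factor. So your structure is correct; you have in fact put your finger on a constant that needs tightening (in the claim's threshold, or equivalently in condition~(c)) in the paper's own argument, not just in yours.
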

\begin{proof}
Suppose not. So every index $\ell \in \{1, \ldots, k\}$, there is at most one index, say $h_\ell$ for which $|C_\ell \cap \hatX_{h_\ell}| \geq \frac{\hatw^2 n}{4}$. But then there is an index $h \in \{1, \ldots, \hatk\}$ which is not of the form $h_\ell$ for any $\ell \in \{1, \ldots, k\}$. Therefore, 
$$|\hatX_h| = \sum_{\ell=1}^k |\hatX_h \cap C_\ell| \leq \frac{k \hatw^2 n}{4} \leq \frac{\hatk \hatw^2 n}{4} \leq \frac{\hatw n}{4}, $$
which is a contradiction. \qed

\end{proof}

Let $\ell$ be the index guaranteed by Claim~\ref{cl:part1}, and assume by renumbering that $h_1, h_2 = 1,2$ respectively. Let $X_i'$ denote $C_\ell \cap \hatX_i, i=1,2$. 
By applying Lemma~\ref{lem:rcaps} to $X_i', C_\ell$, we see that for $i=1,2$, 
\begin{align*}
|\mu_M(X_i') - \mu_M(C_\ell)| \leq \frac{200}{\hatw^3} \sigma_M(X_i')
\end{align*}
Claim~\ref{cl:sigma1} implies that $\sigma_M(X_i') \leq \sqrt{\frac{|X_i|}{|X_i'|}} \sigma_M(X_i) \leq 
\frac{2 \sigma_M(X_i)}{\hatw}. $ Therefore, we get for $i=1,2$, 
$$|\mu_M(X_i') - \mu_M(C_\ell)| \leq \frac{400}{\hatw^4} \sigma_M(X_i)$$

So we get 
\begin{align}
\label{eq:id}
|\mu_M(X_1') - \mu_M(X_2')| \leq \frac{400 \cdot  (\sigma_M(X_1)+
\sigma_M(X_2))}{\hatw^{4}}
\end{align}

%
By Lemma~\ref{RcapS} and Claim~\ref{cl:sigma1}, for $i=1,2$, 
$$|\mu_M(X_i) - \mu_M(X_i')|^2 \leq \frac{4 |X_i| \sigma_M(X_i)^2}{|X_i'|} 
\leq \frac{16 \sigma_M(X_i)^2}{\hatw^2}. $$
Combining the above with~\eqref{eq:id}, we see that 
$$|\mu_M(X_1)  - \mu_M(X_2)| \leq \frac{800 \cdot  (\sigma_M(X_1)+
\sigma_M(X_2))}{\hatw^4}, $$
which contradicts~(a). 
This shows that $\hatk = k$, and proves the correctness of our algorithm. 

\section{Polynomial Time Algorithm with \wntsc}
\label{sec:polyw2}

In this section, we give another polynomial time algorithm for identifying $k$. This algorithm requires the knowledge of $w_0$, but relies on \wntsc only. The algorithm is shown in Figure~\ref{fig:newalgo}. It runs in several iterations, and in each iteration it finds a large subset $X$ of points which is close (in Hamming distance) to a newly discovered cluster. It removes this set $X$ from further consideration and repeats the whole process till very few points remain. 

In a particular iteration $j$, it first projects the remaining points $P^{(j)}$ to the subspace $M$. In this subspace, the algorithm finds a subset $S$ of size $w_0n/2$ with minimum centered 1-means cost.
In line~\ref{l:convex}, we use a convex program $\calC$ which has three parameters: $\calC(m, \mu,T)$, where $m$ is a positive integer, $\mu$ is a point and $T$ is a subset of $P$. The convex program seeks to find a subset $T', |T'|=m$, of  $T$ which is close to $x$ and has low $\sigma(T')$ value. The details are described below. 
The algorithm calls $\calC(m, \mu(S), P^{(j)})$ for values of $m$ starting from $w_0 n/2$ and increasing in unit steps. It stops at a value $m^\star$ of $m$ when the value of the convex program becomes much higher than that when $m$ was $w_0n/2$. Now it takes the solution $\calC(m^\star, \mu(S), P^{(j)})$ and rounds it an integral solution, which yields the desired subset subset $X$ of $P^{(j)}$.

\begin{figure}[h]
    \begin{procedure}[H]
    Initialize $P^{(1)}$ to be the initial set of points $P$. \\
    Define $d' \leftarrow \frac{1}{w_0}. $ \\
    Let $M$ be the $d'$-dimensional SVD-subspace of $P$. \\
    \For{$j=1,2,3, \ldots$}{
        Let $S \leftarrow \outl(\pi_M(P^{(j)}), w_0 n/2).$ \label{l:outl} \\
        Define $\nu_j \leftarrow \mu(S)$. \\
         Consider the convex program $\calC(m, \nu_j, P^{(j)})$.  \label{l:convex} \\
        Let $m^\star \geq w_0 n/2$ be the highest index $m$ such that $$\opt(\calC(m, \nu_j, P^{(j)})) \leq \frac{72000}{w_0^{3.5}} \cdot  \opt(\calC(w_0n/2, \nu_j, P^{(j)}))$$ \label{l:mst} \\
        Let $y$ be the (fractional solution) to $\calC(m^\star, \nu_j, P^{(j)}). $ \\
        Use Lemma~\ref{lem:rounding} to round $y$ to an integral solution $y'$. \label{l:round} \\
        Let $X \subseteq P^{(j)}$ be the set of points $i$ for which $y'_i = 1$. 
\label{l:rem}        
        \\
        Update $P^{(j+1)} \leftarrow P^{(j)} \setminus X. $\\
        \If{$|P^{(j+1)}| \leq w_0n/10$}{
          Stop and output $j$ as  the number of clusters  in the input. \\
        
    }
    }
    \end{procedure}
    \caption{{\bf IdentifyKnew($P,w_0$)}: Polynomial time algorithm for identifying $k$}
    \label{fig:newalgo}
\end{figure}

We now describe the convex program $\calC(m, \mu, T)$, where $|T| \geq m \geq w_0 n$. For each point $x_i \in T$, we have a variable $y_i \in [0,1]$. Define a $|T| \times d$ matrix $B_y$ as follows: the $i^{th}$ row of $B_y$ is $y_i(x_i - \mu)$ (and hence is a linear function of $y_i$). The convex program is: 

\begin{align}
\min. \quad & \frac{||B_y||}{\sqrt{m}} \notag \\
\sum_{x_i \in P^{(j)}} y_i & = m \\
1 \, \geq y_i & \geq 0 \quad \quad \forall x_i \in T. 
\end{align}

Note that this is a valid convex program since $||B_y||$ is a convex function of the entries in $B_y$. In line~\ref{l:round}, we refer to a rounding algorithm for a solution $y$ to this convex program. We describe this in the result below. 
\begin{lemma}
\label{lem:rounding}
Consider a fractional solution $y$ to $\calC(m, \mu, T)$, where  $w_0 n \leq m \leq |T|$. Then there is an integral solution $y'$ to the convex program such that (i) $||B_{y'}|| \leq \frac{20 \cdot ||B_y||}{w_0^2},$ and (ii) $\sum_{x_i \in T} y'_i \geq m - \frac{w_0^2 n}{20}. $
\end{lemma}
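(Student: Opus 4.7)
My plan is to round by thresholding. Set $\tau := w_0^2/20$, let $S := \{i \in T : y_i \geq \tau\}$, and put $y'_i = 1$ for $i \in S$ and $y'_i = 0$ otherwise. If $|S|$ happens to exceed $m$, I trim $S$ down to any subset of size $m$; zeroing out extra rows of $B_{y'}$ can only decrease its spectral norm, so this trimming can only help condition~(i) while preserving condition~(ii).

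For part~(ii), using $y_i \leq 1$ for $i \in S$ and $y_i < \tau$ for $i \notin S$, the feasibility constraint $\sum_i y_i = m$ of the convex program gives
\[ m \;=\; \sum_{i \in S} y_i \;+\; \sum_{i \notin S} y_i \;\leq\; |S| \;+\; \tau\cdot |T| \;\leq\; |S| \;+\; \tau n, \]
so $|S| \geq m - w_0^2 n/20$, exactly the required bound. The equality $\sum_i y'_i = m$ of the CP is relaxed to this inequality, which is precisely what the lemma allows.

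For part~(i), I would compare the two positive semidefinite matrices $B_y^\top B_y = \sum_i y_i^2\,(x_i-\mu)(x_i-\mu)^\top$ and $B_{y'}^\top B_{y'} = \sum_{i \in S}(x_i-\mu)(x_i-\mu)^\top$. Since $y_i \geq \tau$ for every $i \in S$, we have $B_y^\top B_y \succeq \tau^2\, B_{y'}^\top B_{y'}$ in the Loewner order; taking spectral norms and then square roots yields $\|B_{y'}\| \leq \|B_y\|/\tau = 20\,\|B_y\|/w_0^2$, which is condition~(i). (Equivalently, one can write $B_{y'} = D B_y$ for the diagonal matrix $D$ with $D_{ii}=1/y_i$ on $S$ and $0$ off $S$, and apply submultiplicativity of the spectral norm.)

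There isn't really a hard step here; the whole argument boils down to matching the two allowed slacks. The one place where things could have gone wrong is the ``unscaling'' loss incurred when replacing $y_i \in [\tau,1]$ by~$1$: had the lemma demanded a count slack much smaller than $\tau n$, we would have been forced to take $\tau$ very small and the factor $1/\tau$ in the spectral-norm bound would have blown up. The choice $\tau = w_0^2/20$ makes the count slack and the spectral-norm blowup coincide at the stated constants, so the two conditions are simultaneously satisfied with no further work.
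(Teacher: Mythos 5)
Your proof is correct and takes essentially the same route as the paper: threshold at $w_0^2/20$, use the budget constraint $\sum_i y_i = m$ to count and obtain condition~(ii), and bound $\|B_{y'}\|$ by observing that passing from $y_i(x_i-\mu)$ to $(x_i-\mu)$ on $S$ scales each row by at most $1/\tau$. You are slightly more careful than the paper in two small ways (making the row-scaling bound rigorous via the Loewner order or the diagonal multiplier $D$, and explicitly trimming $S$ if it exceeds $m$), but the underlying argument and constants are identical.
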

\begin{proof}
Let $a$ be the number of points $x_i$ for which $y_i \geq \frac{w_0^2}{20}, $ and $b$ the number of remaining points in $T$. Then 
$$m = \sum_{x_i \in T} y_i \leq a + \frac{w_0^2 b}{20} \leq a + \frac{w_0^2 n}{20}, $$
where the last inequality follows from the fact that $|T| \leq n$. 
Therefore, $a \geq m - \frac{w_0^2 n}{20}$. Now, we define $y_i' = 1$ if $y_i \geq \frac{w_0^2}{20},$ and 0 otherwise. Clearly $||B_{y'}|| \leq \frac{20 \cdot ||B_y||}{w_0^2}, $ because omitting rows from a matrix can only decrease its spectral norm, and scaling its entries by a factor $\alpha$ scales the spectral norm by $\alpha$ as well. This proves the lemma.  \qed
\end{proof}

This completes the description of the algorithm. We now analyse it. 

\paragraph{Analysis}
As in the analysis in Section~\ref{sec:polyw1}, we  write down the invariant conditions that will be satisfied at the beginning of each iteration. 
At the beginning of iteration $j$, let $\rc{j}{h}$ denote the points of cluster $C_h$ which remain in $P^{(j)}$. After suitable relabeling, we assume that the following invariant holds: 

\begin{align}
\label{eq:inv}
{\mbox{for $h=1, \ldots, j-1$}}, \quad  |\rc{j}{h}| \leq \frac{w_0^2  n}{10}, \\
\label{eq:inv1}
{\mbox{and for $h=j, \ldots, k$}}, \quad |C_h \setminus \rc{j}{h}| \leq \frac{w_0^2 jn}{10}.  
\end{align}

 For $j=1$, the invariant holds trivially because $\rc{1}{h} = C_h$ for all $h$. Now assume that the invariant holds at the beginning of iteration $j$. 
 %
Let $A^{(j)}$ be the matrix whose  $i^{th}$ row is given by the coordinates of the $i^{th}$ point in $P^{(j)}$.
Similarly, let $C^{(j)}$ be the corresponding sub-matrix of $C$ obtained by retaining only those rows corresponding to the points in $P^{(j)}$.
 Clearly $||A^{(j)} - C^{(j)}|| \leq ||A - C||. $ 
 Let $\pi_M(x)$ denote the projection of a point $x$ on $M$. Similarly, for a matrix $B$ of suitable dimension, let $\pi_M(B)$ denote the matrix obtained by projecting each row of $B$ on $M$. The following claim is known~\cite{KK10}, the proof is given for sake of completeness.  

\begin{claim}
\label{cl:svd}
$$ ||\pi_M(A^{(j)}) - C^{(j)} ||_F^2 \leq \frac{8 \sigma_0^2n}{w_0} $$
\end{claim}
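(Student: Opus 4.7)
My plan is to mimic the proof of Lemma~\ref{lem:close2} almost verbatim, only taking care that the relevant inequalities survive passing to the row-submatrix indexed by $P^{(j)}$. The first observation is that projection onto $M$ is applied row-by-row, so $\pi_M(A^{(j)})$ is literally the sub-matrix of $\pi_M(A)$ obtained by retaining the rows indexed by points in $P^{(j)}$; similarly $C^{(j)}$ is the corresponding row-subset of $C$. Hence $\pi_M(A^{(j)}) - C^{(j)}$ is the row-subset of $\pi_M(A) - C$ indexed by $P^{(j)}$, and in particular $\pi_M(A^{(j)}) - C^{(j)}$ has rank at most $d' + k \leq 2d' = 2/w_0$ (since $\pi_M(A^{(j)})$ has rank at most $d'$ and $C^{(j)}$ has rank at most $k$).

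The rank bound converts the desired Frobenius-norm inequality into a spectral-norm inequality: it suffices to show that $\| \pi_M(A^{(j)}) - C^{(j)} \| \leq 2\sigma_0 \sqrt{n}$, because then
\[
\| \pi_M(A^{(j)}) - C^{(j)} \|_F^2 \;\leq\; \frac{2}{w_0}\,\| \pi_M(A^{(j)}) - C^{(j)} \|^2 \;\leq\; \frac{8\sigma_0^2 n}{w_0}.
\]
By the triangle inequality, $\| \pi_M(A^{(j)}) - C^{(j)} \| \leq \| \pi_M(A^{(j)}) - A^{(j)} \| + \| A^{(j)} - C^{(j)} \|$, and since selecting a subset of rows can only decrease the spectral norm of a matrix, each of these two terms is bounded by its full-matrix counterpart: $\| \pi_M(A^{(j)}) - A^{(j)} \| \leq \| \pi_M(A) - A \|$ and $\| A^{(j)} - C^{(j)} \| \leq \| A - C \|$.

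To finish, I need two standard bounds. First, since $C$ has rank at most $k \leq d'$ and $\pi_M(A)$ is the best rank-$d'$ approximation to $A$, the SVD optimality gives $\| \pi_M(A) - A \| \leq \| A - C \|$. Second, writing $A - C$ as a vertical stack of the matrices $B_\ell$ whose rows are $x - \mu(C_\ell)$ for $x \in C_\ell$, Weyl's inequality applied to $(A-C)^\top(A-C) = \sum_\ell B_\ell^\top B_\ell$ yields
\[
\| A - C \|^2 \;\leq\; \sum_{\ell=1}^k \|B_\ell\|^2 \;=\; \sum_{\ell=1}^k |C_\ell|\,\sigma(C_\ell)^2 \;\leq\; \sigma_0^2\, n.
\]
Combining the two bounds, $\| \pi_M(A^{(j)}) - C^{(j)} \| \leq 2\| A - C \| \leq 2\sigma_0 \sqrt{n}$, which together with the rank bound above completes the proof.

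The only step that is not entirely mechanical is the monotonicity claim that a row-submatrix has spectral norm no larger than the full matrix; but this is immediate from the variational characterization $\|B\| = \sup_{\|v\|=1}\|Bv\|$, since appending rows can only make the $\ell_2$-norm of $Bv$ larger. Everything else is either SVD-optimality or the additive Weyl bound, both of which are routine.
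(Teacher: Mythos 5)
Your proof is correct and follows the same overall strategy as the paper's (rank bound converting Frobenius norm to spectral norm, then triangle inequality plus SVD optimality), but it is actually more careful at the one point where the paper's write-up is questionable. The paper justifies $\| \pi_M(A^{(j)}) - A^{(j)} \| \leq \| A^{(j)} - C^{(j)} \|$ by asserting that $\pi_M(A^{(j)})$ is the best rank-$d'$ approximation to $A^{(j)}$; this is not literally true, since $M$ is the top SVD subspace of the \emph{full} matrix $A$, not of the sub-matrix $A^{(j)}$. You sidestep this by first invoking monotonicity of the spectral norm under row-deletion to reduce both terms to their full-matrix analogues, $\| \pi_M(A) - A \|$ and $\| A - C \|$, and only then applying SVD optimality (now legitimately, to $A$ itself). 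You also supply the missing derivation of $\|A - C\|^2 \leq \sigma_0^2 n$ via the block decomposition $(A-C)^\top(A-C) = \sum_\ell B_\ell^\top B_\ell$, which the paper asserts without proof (and with an equals sign that should be an inequality). In short, same route, but your version closes two small gaps in the paper's argument.
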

\begin{proof}
Since $(\pi_M(A^{(j)}) - C^{(j)})$ has rank at most $\frac{2}{w_0}$, 
$$ ||\pi_M(A^{(j)}) - C^{(j)} ||_F^2 \leq \frac{2}{w_0} ||\pi_M(A^{(j)}) - C^{(j)} ||^2. $$
Triangle inequality and the definition of $M$ now imply
$$||\pi_M(A^{(j)}) - C^{(j)} || \leq ||\pi_M(A^{(j)}) - A^{(j)} || + ||A^{(j)} - C^{(j)} || \leq 2 ||A^{(j)} - C^{(j)} || \leq 2 ||A-C|| = 2 \sigma_0 \sqrt{n}, $$
where the second inequality follows from the fact that $\pi_M(A^{(j)}$ is the best rank -$\frac{1}{w_0}$ approximation to $A^{(j)}$, and the ranl of $C^{(j)}$ is at most $k \leq \frac{1}{w_0}$. 
This proves the desired result. 
\qed
\end{proof}

The following observation follows easily from Claim~\ref{cl:svd}. 
\begin{corollary}
\label{cor:svd}
The optimum value of the instance $\outl(\pi_M(P^{(j)}), w_0 n/2)$ is at most 
$\frac{16 \sigma_0^2 n}{w_0}$.  
\end{corollary}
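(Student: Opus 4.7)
The plan is to exhibit an explicit feasible solution to $\outl(\pi_M(P^{(j)}), w_0n/2)$ whose centered 1-means cost is already bounded by $\frac{16\sigma_0^2 n}{w_0}$.

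First, I would use invariant~\eqref{eq:inv1} to find a cluster with enough leftover points. For any $h \in \{j,\ldots,k\}$, we have $|C_h \setminus \rc{j}{h}| \leq \frac{w_0^2 j n}{10}$, and since $j \leq k \leq 1/w_0$, this is at most $\frac{w_0 n}{10}$. Combined with $|C_h| \geq w_0 n$, this gives $|\rc{j}{h}| \geq \frac{9 w_0 n}{10} \geq \frac{w_0 n}{2}$, so I can choose an arbitrary subset $T \subseteq \rc{j}{h}$ with $|T| = w_0 n/2$; this $\pi_M(T)$ is a feasible solution to the outlier instance.

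Next, I would bound the sum of squared distances from the points in $\pi_M(T)$ to the true cluster center $\mu(C_h)$. Since every point of $T$ lies in $\rc{j}{h} \subseteq P^{(j)}$, its corresponding row in the matrix $\pi_M(A^{(j)}) - C^{(j)}$ is exactly $\pi_M(x) - \mu(C_h)$. Hence
\[
\sum_{x \in T}|\pi_M(x)-\mu(C_h)|^2 \;\leq\; \sum_{x \in \rc{j}{h}}|\pi_M(x)-\mu(C_h)|^2 \;\leq\; \|\pi_M(A^{(j)})-C^{(j)}\|_F^2 \;\leq\; \frac{8\sigma_0^2 n}{w_0},
\]
where the last inequality is exactly Claim~\ref{cl:svd}. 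In particular, taking $c = \mu(\pi_M(T))$ (the mean minimizes sum of squared distances), the uncentered 1-means cost of $\pi_M(T)$ is at most $\frac{8\sigma_0^2 n}{w_0}$.

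Finally, I would pass from an uncentered to a centered 1-means bound. By the parallel axis identity, for every candidate center $c$, $\sum_{x \in T}|\pi_M(x)-c|^2 = \sum_{x \in T}|\pi_M(x)-\mu(\pi_M(T))|^2 + |T|\cdot |c - \mu(\pi_M(T))|^2$. Choosing $c$ to be the point of $\pi_M(T)$ closest to $\mu(\pi_M(T))$, a simple averaging argument gives $|c-\mu(\pi_M(T))|^2 \leq \frac{1}{|T|}\sum_{x \in T}|\pi_M(x)-\mu(\pi_M(T))|^2$. Plugging back in, the centered 1-means cost of $\pi_M(T)$ is at most twice the uncentered cost, namely $\frac{16\sigma_0^2 n}{w_0}$, as required. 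There is no real obstacle here; the only subtlety is verifying the lower bound on $|\rc{j}{h}|$, which is handled by the $k \leq 1/w_0$ slack in invariant~\eqref{eq:inv1}.
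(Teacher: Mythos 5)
Your proof is correct and follows essentially the same route as the paper: pick an arbitrary subset of size $w_0 n/2$ from a surviving cluster $\rc{j}{h}$ (using invariant~\eqref{eq:inv1}), bound its squared deviations from $\mu(C_h)$ by $\|\pi_M(A^{(j)})-C^{(j)}\|_F^2$ via Claim~\ref{cl:svd}, and lose a factor of $2$ passing to the centered cost. The only difference is that you spell out the ``centered vs.\ uncentered'' factor-of-$2$ step (parallel-axis identity plus averaging), which the paper absorbs into a single displayed inequality.
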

\begin{proof}
The invariant~(\ref{eq:inv1}) implies that $\rc{j}{h} \geq w_0n/2$ for any $h \in  \{j, \ldots, k\}$. Fix such an index $h$.  Consider a solution to $\outl(\pi_M(P^{(j)}), w_0 n/2)$  consisting of a subset $\pi_M(X)$ of $w_0 n/2$ points from $\rc{j}{h}$. The centered 1-means cost of this solution is at most 
$$ 2 \sum_{x \in X} |\pi_M(x) - \mu(C_h)|^2 \leq 2||\pi_M(A^{(j)}) - C^{(j)} ||_F^2, $$
and now the desired result follows from Claim~\ref{cl:svd}. \qed
\end{proof}


Let $S$ be the solution to $\outl(\pi_M(P^{(j)}, w_0n/2)$ as in line~\ref{l:outl} in the algorithm. Corollary~\ref{cor:svd} implies that the centered 1-means cost of $S$ is at most  $\frac{16 \sigma_0^2 n}{w_0}$.  We now show that $\mu(S)$ is a good approximation to the mean of one of the clusters $C_j, \ldots, C_k$. 

\begin{lemma}
\label{lem:centerapprox}
There is an index $h \in \{j, \ldots, k\}$ such that $|\mu(S) - \mu(C_h)| \leq \frac{20 \sigma_0}{w_0}$. 
\end{lemma}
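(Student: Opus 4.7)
The plan is to exhibit the desired cluster $C_h$ by an averaging argument over the clusters $\{C_j, \ldots, C_k\}$ still ``alive'' in iteration $j$, and then derive a contradiction from the assumption that $\mu(S)$ is far from every one of their true means. The two ingredients we already have in hand are Corollary~\ref{cor:svd}, which controls the centered $1$-means cost of $S$, and Claim~\ref{cl:svd}, which controls the total squared distance between the projected points of any surviving cluster and the true cluster mean.

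First I would observe, using invariant~(\ref{eq:inv}), that at most $(j-1)\frac{w_0^2 n}{10}\leq \frac{w_0 n}{10}$ points of $P^{(j)}$ come from the already-peeled clusters $C_1,\ldots,C_{j-1}$. Since $|S| = w_0 n/2$, this means
\[
\sum_{h=j}^{k}|S_h| \;\geq\; \frac{w_0 n}{2} - \frac{w_0 n}{10} \;=\; \frac{2 w_0 n}{5},
\]
where $S_h := S\cap \pi_M(C_h\cap P^{(j)})$.

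Next, suppose for contradiction that $|\mu(S)-\mu(C_h)| > \frac{20\sigma_0}{w_0}$ for every $h \in\{j,\ldots,k\}$. For any $y\in S_h$, the elementary inequality $|\mu(C_h)-\mu(S)|^2 \leq 2|y-\mu(S)|^2 + 2|y-\mu(C_h)|^2$ rearranges to
\[
|y-\mu(S)|^2 \;\geq\; \tfrac{1}{2}|\mu(C_h)-\mu(S)|^2 - |y-\mu(C_h)|^2 \;>\; \tfrac{200\sigma_0^2}{w_0^2} - |y-\mu(C_h)|^2.
\]
Summing this over all $h\in\{j,\ldots,k\}$ and all $y\in S_h$, the first term contributes at least $\frac{200\sigma_0^2}{w_0^2}\cdot\frac{2 w_0 n}{5} = \frac{80\sigma_0^2 n}{w_0}$, while the second term is bounded above, using Claim~\ref{cl:svd}, by $\sum_{x\in P^{(j)}}|\pi_M(x)-\mu(C_{h(x)})|^2 \leq \frac{8\sigma_0^2 n}{w_0}$. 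So the sum of $|y-\mu(S)|^2$ over these $y$'s would exceed $\frac{72\sigma_0^2 n}{w_0}$.

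Finally, I would contrast this with an upper bound: since $\mu(S)$ is the Euclidean mean of $S$, it minimizes the sum of squared distances, so
\[
\sum_{h\geq j}\sum_{y\in S_h} |y-\mu(S)|^2 \;\leq\; \sum_{y\in S}|y-\mu(S)|^2 \;\leq\; \opt(\outl(\pi_M(P^{(j)}),w_0 n/2)) \;\leq\; \frac{16\sigma_0^2 n}{w_0},
\]
by Corollary~\ref{cor:svd}. Since $72 > 16$, this is the desired contradiction, and so some $h\in\{j,\ldots,k\}$ must satisfy $|\mu(S)-\mu(C_h)| \leq \frac{20\sigma_0}{w_0}$. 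There is no real obstacle here beyond verifying the counting step (that a constant fraction of $S$ lies in still-alive clusters), which follows immediately from the invariants.
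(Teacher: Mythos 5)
Your argument is correct and matches the paper's proof essentially line for line: both assume for contradiction that $\mu(S)$ is far from every $\mu(C_h)$ with $h\geq j$, use the invariant to show a constant fraction of $S$ comes from alive clusters, apply the inequality $(a-b)^2\geq a^2/2 - b^2$ summed over those points, bound the error term with Claim~\ref{cl:svd}, and contradict Corollary~\ref{cor:svd}. The only difference is that you carry a slightly tighter constant ($\tfrac{2w_0 n}{5}$ rather than $\tfrac{w_0 n}{4}$) for the mass of $S$ in alive clusters, which just gives a larger slack in the final contradiction.
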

\begin{proof}
For sake of contradiction, let us assume that the statement of the lemma is false. The invariant~(\ref{eq:inv}) implies that $P^{(j)}$ has at most $w_0 n/4$ points from $C_1 \cup \ldots \cup C_{j-1}$. Therefore $S$ contains at least $w_0 n/4$ points from $C_j \cup \ldots \cup C_k$ -- let $S_h$ denote the set of points in $S \cap C_h, h = j, \ldots, k$. Then the 1-means cost (and hence, the centered 1-means cost) of $S$ is at least 
\begin{align*}
 \sum_{h=j}^k \sum_{x \in S_h} |x-\mu(S)|^2 & \geq \sum_{h=j}^k \sum_{x \in S_h} \left( \frac{|\mu(C_h) - \mu(S)|^2}{2} - |x-\mu(C_h)|^2 \right) \\
 & \geq \sum_{h=j}^k \frac{200 \sigma_0^2 |S_h|}{w_0^2} - ||\pi_M(A^{(j)}) - C^{(j)}||_F^2 \\
 & \geq \frac{50 \sigma_0^2 n}{ w_0} - \frac{8 \sigma_0^2 n}{w_0} = \frac{42 \sigma_0^2 n}{w_0},
 \end{align*}
 where the first inequality uses the fact that $(a-b)^2 \geq a^2/2 - b^2$ for any real $a,b$; the second inequality uses the fact that $|\mu(S) - \mu(C_h)| \geq \frac{20 \sigma_0}{w_0}$, and third inequality uses Claim~\ref{cl:svd} and the fact that $\sum_{h=j}^k |S_h| \geq w_0n/4$. But now we get a contradiction because the centered 1-means cost of $S$ is at most  $\frac{16 \sigma_0^2 n}{w_0}$. \qed
\end{proof}

By relabeling, we can assume that the index $h$ in Lemma~\ref{lem:centerapprox} is $j$ (and so, $\mu(C_j)$ is closest to $\mu(S)$ among $\mu(C_j), \ldots, \mu(C_h)$). Thus, we have a good estimate $\nu_j := \mu(S)$ for $\mu(C_j)$. Having found this estimate $\nu_j$, we go back to the original space and find most of the points of $C_j$. 

We now give upper and lower bounds on  $\opt(\calC(m,\nu_j, P^{(j)} ))$. We begin with the simpler upper bound. 
\begin{lemma}
\label{lem:upper}
Let $m$ be a value in the range $[w_0n/2, |\rc{j}{j}|]$. Then $\opt(\calC(m,\nu_j, P^{(j)})) \leq \frac{2\sigma(C_j)}{\sqrt{w_0}} + |\mu(C_j) - \nu_j|. $
\end{lemma}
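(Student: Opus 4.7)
My plan is to exhibit an explicit feasible solution to the convex program $\calC(m,\nu_j,P^{(j)})$ obtained by spreading weight uniformly over the remaining points of cluster $C_j$. Since the range for $m$ is $[w_0n/2,|\rc{j}{j}|]$, the invariant~\eqref{eq:inv1} combined with $j \le k \le 1/w_0$ gives $|\rc{j}{j}| \geq |C_j|-w_0^2jn/10 \geq 9|C_j|/10$, so $|\rc{j}{j}|\ge m$ and setting $y_i := m/|\rc{j}{j}|$ for each $x_i\in\rc{j}{j}$ and $y_i:=0$ otherwise produces a feasible solution (each $y_i\in[0,1]$ and $\sum y_i=m$).

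With this $y$, the matrix $B_y$ has its nonzero rows equal to $\tfrac{m}{|\rc{j}{j}|}(x_i-\nu_j)$ for $x_i\in\rc{j}{j}$. I would then split the row as $(x_i-\mu(C_j))+(\mu(C_j)-\nu_j)$, so that $B_y = \tfrac{m}{|\rc{j}{j}|}(M_1+M_2)$, where $M_1$ has rows $x_i-\mu(C_j)$ and $M_2$ is a rank-one matrix whose $|\rc{j}{j}|$ rows are all equal to $\mu(C_j)-\nu_j$. Triangle inequality for the spectral norm gives $\|B_y\|\le \tfrac{m}{|\rc{j}{j}|}(\|M_1\|+\|M_2\|)$.

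For $M_1$, I would observe that it is a row-submatrix of the full centered-matrix of $C_j$ (whose spectral norm is $\sqrt{|C_j|}\,\sigma(C_j)$ by definition of $\sigma$), and removing rows only decreases the spectral norm, so $\|M_1\|\le \sqrt{|C_j|}\,\sigma(C_j)$. For $M_2$, being rank-one, $\|M_2\|=\sqrt{|\rc{j}{j}|}\,|\mu(C_j)-\nu_j|$. Dividing by $\sqrt m$:
\begin{align*}
\frac{\|B_y\|}{\sqrt m} \;\le\; \frac{\sqrt{m}}{|\rc{j}{j}|}\sqrt{|C_j|}\,\sigma(C_j) + \sqrt{\tfrac{m}{|\rc{j}{j}|}}\,|\mu(C_j)-\nu_j|.
\end{align*}
Using $m\le |\rc{j}{j}|$ and $|C_j|/|\rc{j}{j}| \le 10/9$, the first summand is at most $\sqrt{10/9}\,\sigma(C_j)\le 2\sigma(C_j) \le 2\sigma(C_j)/\sqrt{w_0}$, and the second summand is at most $|\mu(C_j)-\nu_j|$. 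Combining these yields the stated bound.

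The argument is essentially routine once the right feasible solution is written down; the only minor obstacle is making sure the size comparison $|\rc{j}{j}|\ge m$ and $|\rc{j}{j}|\approx |C_j|$ follow from the invariant, which they do because $k\le 1/w_0$ forces $w_0^2 jn/10 \ll |C_j|$. Note that my analysis actually yields the somewhat stronger bound $2\sigma(C_j)+|\mu(C_j)-\nu_j|$; the $1/\sqrt{w_0}$ factor in the statement is slack presumably retained for convenience in the subsequent lower-bound argument.
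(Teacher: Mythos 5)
Your proposal is correct and mirrors the paper's own proof: both exhibit an explicit feasible solution supported on $\rc{j}{j}$, decompose each row of $B_y$ as $(x_i - \mu(C_j)) + (\mu(C_j) - \nu_j)$, bound the first piece as a row-submatrix of the centered matrix of $C_j$ and the second as a rank-one matrix, and finish with the spectral-norm triangle inequality. The only cosmetic difference is that you spread weight uniformly (a fractional solution) while the paper simply sets $y_i = 1$ on an arbitrary subset of $m$ points of $\rc{j}{j}$; your accounting happens to shave the $1/\sqrt{w_0}$ to an absolute constant, but the core argument is identical.
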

\begin{proof}
We pick any set $X$ of $m$ points in $\rc{j}{j}$ and define $y_i = 1$ for these points. 
Let $A'$ be the matrix where each row is the coordinates of a unique point in $X$, and $C'$ be the matrix with each row being $\mu(C_j)$. Then 
$$ ||B_y|| \leq ||A'-C'|| + \sqrt{m} \cdot |\mu(C_j) - \nu_j| \leq \sigma(C_j) \sqrt{n}  + \sqrt{m} \cdot |\mu(C_j) - \nu_j|. $$
This proves the desired result. \qed
\end{proof}

We now prove lower bounds on $\calC(m, \nu_j, P^{(j)})$. Before that we bound $||B_y||$ for certain kind of integral solutions $y$. 
\begin{lemma}
\label{lem:lowery}
 Let $y$ be an integral solution to $\calC(m, \nu_j, P^{(j)}), m \leq n,$ with 
$y_i=1$ for at least $\frac{w_0^2 n}{20k}$ points $x_i \in C_h$ for an index $h \neq j$. Then the objective value of this solution is at least $\frac{\gamma \sigma_0 w_0}{40 \sqrt{k}}. $
\end{lemma}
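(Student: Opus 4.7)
The objective value of the solution $y$ is $\|B_y\|/\sqrt{m}$, where $B_y$ has a nonzero row $x_i - \nu_j$ for each $i \in T := \{i : y_i = 1\}$ and zero rows otherwise. The plan is to exhibit a large spectral contribution coming purely from the points of $T$ that lie in the ``wrong'' cluster $C_h$, and then convert this to a lower bound on $\|B_y\|$.

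Let $T' := T \cap C_h$, so $|T'| \geq \tfrac{w_0^2 n}{20k}$, and let $B'$ be the submatrix of $B_y$ consisting of the rows indexed by $T'$. Since for any unit vector $v$ we have $\|B_y v\|^2 = \sum_{i \in T}((x_i-\nu_j)^\top v)^2 \geq \sum_{i \in T'}((x_i-\nu_j)^\top v)^2 = \|B' v\|^2$, we get $\|B_y\| \geq \|B'\|$. Write $B' = B'_c + \mathbf{1}\bar{\mu}'^\top$, where $\bar{\mu}' = \mu(T') - \nu_j$ and $B'_c$ has mean-zero rows. By the reverse triangle inequality,
\begin{equation*}
\|B'\| \;\geq\; \sqrt{|T'|}\,\bigl|\mu(T') - \nu_j\bigr| \;-\; \|B'_c\|.
\end{equation*}
By definition $\|B'_c\|^2 \leq |T'|\sigma(T')^2$, and since $T'\subseteq C_h$, Claim~\ref{cl:sigma1} gives $|T'|\sigma(T')^2 \leq |C_h|\sigma(C_h)^2 \leq n\sigma_0^2$. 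Thus $\|B'_c\| \leq \sqrt{n}\,\sigma_0$.

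For the first term, I combine three pieces: (i) well-separatedness yields $|\mu(C_h)-\mu(C_j)| \geq \gamma\sigma_0$, (ii) Lemma~\ref{lem:centerapprox} (applied in iteration $j$, after the relabeling making $h=j$ closest) gives $|\nu_j - \mu(C_j)| \leq \tfrac{20\sigma_0}{w_0}$, and (iii) Lemma~\ref{RcapS} with $R=T'$ and $S=C_h$ combined with Claim~\ref{cl:sigma1} yields $|\mu(T') - \mu(C_h)|^2 \leq \tfrac{4|C_h|\sigma_0^2}{|T'|} \leq \tfrac{80 k \sigma_0^2}{w_0^2}$. Since $\gamma = K/w_0^{11}$ with $K$ large and $k \leq 1/w_0$, both $\tfrac{20\sigma_0}{w_0}$ and $\tfrac{9\sqrt{k}\sigma_0}{w_0}$ are negligible compared to $\gamma\sigma_0$, so $|\mu(T') - \nu_j| \geq \gamma\sigma_0/4$.

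Putting everything together,
\begin{equation*}
\|B_y\| \;\geq\; \sqrt{|T'|}\cdot\tfrac{\gamma\sigma_0}{4} - \sqrt{n}\,\sigma_0 \;\geq\; \sqrt{n}\,\sigma_0\left(\tfrac{w_0\gamma}{4\sqrt{20k}} - 1\right) \;\geq\; \tfrac{\gamma\sigma_0 w_0 \sqrt{n}}{40\sqrt{k}},
\end{equation*}
where the last inequality uses that $K$ is chosen large enough that the $-1$ term is absorbed. Dividing by $\sqrt{m} \leq \sqrt{n}$ gives the bound $\tfrac{\gamma\sigma_0 w_0}{40\sqrt{k}}$ on the objective. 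The main technical point is the reverse-triangle step on spectral norms; everything else is a direct application of results already proved in the excerpt (Claim~\ref{cl:sigma1}, Lemma~\ref{RcapS}, Lemma~\ref{lem:centerapprox}, and well-separatedness).
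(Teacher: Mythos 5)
Your proof is correct and follows essentially the same strategy as the paper's: restrict $B_y$ to the rows coming from $C_h$ and apply a reverse triangle inequality on the spectral norm to split into a signal term (the shifted mean) and a noise term bounded by $\sigma_0\sqrt{n}$. The only minor deviation is that you center the decomposition at the empirical mean $\mu(T')$ and then invoke Lemma~\ref{RcapS} to relate $\mu(T')$ to $\mu(C_h)$, whereas the paper centers directly at $\mu(C_h)$ (comparing $A_h'$ to the constant-row matrix $C'$ and bounding $\|A_h'-C'\|\leq\|A-C\|$), which avoids that extra step; both routes give the same bound.
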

\begin{proof}
Consider such a solution $y$. Let $X$ be the set of points $x_i$ for which $y_i = 1$, and let $h \neq j$ be the index  such that $X_h := X \cap C_h$ has at least $\frac{w_0^2 n}{20k}$ points. Let $C'$ and $D'$ be $|X_h| times d$ matrices with each row being $\mu(C_h)$ and $\nu_j$ respectively. Let $A_h'$ be the $|X_h| \times d$  matrix with  row $i$  containing the coordinates $x_i$ of the  $i^{th}$ point in $X_h$. 
Now
$$||B_y|| \geq ||A_h'-D|| \geq ||C'-D'|| - ||A_h' - C'|| \geq \sqrt{|X_h|} \cdot |\mu(C_h) - \nu_j| - ||A-C||. $$

The well-separatedness condition along with Lemma~\ref{lem:centerapprox} implies that 
$|\mu(C_h) - \nu_j| \geq \frac{\gamma \sigma_0}{2}$. Therefore, we get 
$$ ||B_y|| \geq \frac{w_0 \gamma \sigma_0 \sqrt{n}}{20 \sqrt{k}} - \sigma_0 \sqrt{n} 
\geq    \frac{w_0 \gamma \sigma_0 \sqrt{n}}{40 \sqrt{k}}. $$
Since $m \leq n$, the desired result follows. \qed
\end{proof}

\begin{lemma}
\label{lem:lower}
The optimal value of $\calC(w_0n/2, \nu_j, P^{(j)})$ is at least 
$$ \frac{w_0^3 |\mu(C_j) - \nu_j|}{72000} + 
\frac{ w_0^3 \sigma(C_j)}{24000}$$
\end{lemma}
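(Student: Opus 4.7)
The plan is to take an optimal fractional solution $y^*$ of $\calC(w_0n/2, \nu_j, P^{(j)})$, round it to an integral $y'$ using Lemma~\ref{lem:rounding} (which loses a factor $w_0^2/20$ in the spectral norm), and then lower bound $\|B_{y'}\|$ through the structure of the rounded support $X$. Letting $X_h := X \cap C_h$, I split into two cases based on whether $X$ contains many points from a cluster $C_h$ with $h \neq j$.

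In the first case, some $|X_h| \geq w_0^2 n/(20k)$ for $h \neq j$. Then Lemma~\ref{lem:lowery} directly gives $\|B_{y'}\|/\sqrt{|X|} \geq \gamma \sigma_0 w_0/(40\sqrt{k})$. Since $\gamma = K/w_0^{11}$ for large $K$ and $k \leq 1/w_0$, this bound, after the $w_0^2/20$ rounding loss and using Lemma~\ref{lem:centerapprox}'s bound $|\mu(C_j) - \nu_j| \leq 20\sigma_0/w_0$, comfortably exceeds the claimed RHS.

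In the second case, $|X_h| < w_0^2 n/(20k)$ for all $h \neq j$. Combined with invariant~\eqref{eq:inv} bounding the points in $P^{(j)}$ from $C_1 \cup \cdots \cup C_{j-1}$ by $O(w_0 n)$, and $|X| \geq m - w_0^2 n/20$, I obtain $|X'| \geq \Omega(w_0 n)$ where $X' := X_j$. From the identity
$$\sum_{x_i \in X'}(x_i-\nu_j)(x_i-\nu_j)^T = \sum_{x_i \in X'}(x_i-\mu(X'))(x_i-\mu(X'))^T + |X'|(\mu(X')-\nu_j)(\mu(X')-\nu_j)^T,$$
the key inequality $\|B_{y'}\|^2 \geq |X'|\max\bigl(\sigma^2(X'),\; |\mu(X')-\nu_j|^2\bigr)$ follows by restricting to rows in $X'$ and choosing either the variance direction of $X'$ or the direction of $\mu(X') - \nu_j$. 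Applying \wntsc to $X' \subseteq C_j$ gives $\sigma(X') \geq (|X'|/(\sqrt{125}\,|C_j|))\sigma(C_j) = \Omega(w_0 \sigma(C_j))$, which after rounding yields the $w_0^3 \sigma(C_j)/24000$ contribution. For the $|\mu(C_j) - \nu_j|$ contribution I further split on whether $|\mu(X') - \mu(C_j)| \leq |\mu(C_j) - \nu_j|/2$. If yes, triangle inequality gives $|\mu(X') - \nu_j| \geq |\mu(C_j) - \nu_j|/2$, directly producing the desired mean-shift bound. Otherwise, Lemma~\ref{RcapS} applied with $R = C_j, S = X'$, together with $|C_j|/|X'| \leq O(1/w_0)$, forces either $\sigma(X') \gtrsim |\mu(C_j) - \nu_j|$ (strengthening the variance-based bound enough to absorb the mean-shift contribution) or $|\mu(C_j) - \nu_j| \leq O(\sigma(C_j)/\sqrt{w_0})$ (in which case the mean-shift term is small enough to be absorbed into the variance term).

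The main obstacle is tracking the polynomial factors of $w_0$ across these subcases without losing more than the $w_0^3$ that appears in the target bound: the $w_0^2$ rounding loss, the $\Omega(w_0)$ loss when passing from $\sigma(C_j)$ to $\sigma(X')$ via \wntsc, and the $|C_j|/|X'| \leq 4/w_0$ factor from Lemma~\ref{RcapS}, all compound. In particular, it is essential to invoke Lemma~\ref{RcapS} directly rather than the weaker Lemma~\ref{lem:rcaps}, since the latter's \wntsc-substituted form would introduce an extra $(|C_j|/|X'|)^{3/2}$ factor that would worsen the dependence on $w_0$ beyond what the claimed bound allows.
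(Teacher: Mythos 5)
Your overall structure agrees with the paper's: round the fractional optimum via Lemma~\ref{lem:rounding}, case-split on whether the support $X$ contains $\Omega(w_0^2 n/k)$ points from some $C_h$ with $h\neq j$ (in which case Lemma~\ref{lem:lowery} finishes, exactly as the paper does), and otherwise argue via $X_j := X\cap C_j$ which is then of size $\Omega(w_0 n)$. The variance half of the bound (via $\|B_{y'}\|\ge\sigma(X_j)\sqrt{|X_j|}$ and \wntsc) also matches. Where you diverge is the mean-shift half. The paper never introduces $\mu(X_j)$: it lower-bounds $\|B_{y'}\|$ directly by the spectral-norm reverse triangle inequality $\|A_j'-D'\|\ge\|C'-D'\|-\|A_j'-C'\| \ge |\mu(C_j)-\nu_j|\sqrt{|X_j|}-\sigma(C_j)\sqrt{n}$, then uses the already-established variance bound~\eqref{eq:l1} to solve for the objective. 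Your route decomposes $\sum_{X'}(x_i-\nu_j)(x_i-\nu_j)^T$ around $\mu(X')$, which yields $\|B_{y'}\|^2/|X'| \ge \max\bigl(\sigma^2(X'),|\mu(X')-\nu_j|^2\bigr)$, but this forces you to relate $|\mu(X')-\nu_j|$ back to $|\mu(C_j)-\nu_j|$. When $\mu(X')$ lands near $\nu_j$ that bound gives nothing, which is exactly why you need the sub-case split.

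The gap is in the last sub-case. From Lemma~\ref{RcapS} (with Claim~\ref{cl:sigma1}) you correctly get $|\mu(X')-\mu(C_j)| \le O(\sigma(C_j)/\sqrt{w_0})$, so the ``else'' branch of your dichotomy gives $|\mu(C_j)-\nu_j| \le O(\sigma(C_j)/\sqrt{w_0})$. But this does \emph{not} make the mean-shift term absorbable into the variance term: the target contribution is $\frac{w_0^3}{72000}|\mu(C_j)-\nu_j| \le O(w_0^{2.5}\sigma(C_j))$, whereas the variance bound you can prove after rounding is only $\Omega(w_0^3\sigma(C_j))$. The ratio is $\Theta(w_0^{-1/2})\to\infty$ as $w_0\to 0$, so for small $w_0$ the mean-shift term strictly dominates what you can supply. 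The factor $1/\sqrt{w_0}$ goes the wrong way --- it inflates rather than shrinks the term you hope to discard. You flag ``tracking the polynomial factors of $w_0$'' as the main obstacle, but the proposal does not show it closes, and in this sub-case it cannot close by absorption alone. The paper sidesteps the issue precisely because its bound on $\|B_{y'}\|$ is linear in $|\mu(C_j)-\nu_j|$ directly, with only an additive $\sigma(C_j)\sqrt{n}$ penalty, so combining it with~\eqref{eq:l1} by substitution produces a valid lower bound uniformly, with no sub-case in which $|\mu(C_j)-\nu_j|$ escapes control.
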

\begin{proof}
Let $m$ denote an integer between $w_0 n/2$ and $w_0 n/2 - \frac{w_0^2 n}{10} \geq w_0 n/4$. By Lemma~\ref{lem:rounding}, it suffices to show that any integral solution to $\calC(m,\nu_j, P^{(j)}\nu_j, P^{(j)})$ has objective function value at least 
\begin{align}
\label{eq:int}
\frac{w_0 |\mu(C_j) - \nu_j|}{3600} + 
\frac{ w_0 \sigma(C_j)}{1200}
\end{align}
 We proceed to show this next.

Consider an arbitrary integral solution $y$ to $\calC(m,\nu_j, P^{(j)})$. Let $X$ denote the set of points $i$ for which $y_i = 1$. Since $|X| \geq w_0 n/4$, and invariant~(\eqref{eq:inv}) shows that $P^{(j)}$ has at most $w_0 n/8$ points from $C_1 \cup \ldots \cup C_{j-1}$, it follows that $\sum_{h=j}^k |X_h| \geq \frac{w_0 n}{8}$, 
where $X_h$ denotes $X \cap C_h$. Consider an index $h \in \{j+1, \ldots, k\}$. We claim that $|X_h| \leq \frac{w_0 n}{16k}$. Indeed, otherwise  
 Lemma~\ref{lem:lowery} shows that the objective function value of this solution is at least (using Lemma~\ref{lem:centerapprox})
$$ \frac{\gamma \sigma_0 w_0}{40 \sqrt{k}} \geq \frac{w_0 |\mu(C_j) - \nu_j|}{3600} + 
\frac{ w_0 \sigma(C_j)}{1200}$$ 
and so we are done. 

Therefore, for the rest of the argument, we can assume that $|X_j| \geq \frac{w_0 n}{16}. $ Let $B'$ be the submatrix of $B_y$ consisting of only those rows corresponding to points in $X_j$. Clearly, $||B_y|| \geq ||B'|| \geq \sigma(X_j) \cdot \sqrt{|X_j|}.$ Therefore
$$ \frac{||B_y||}{\sqrt{m}} \geq \frac{\sigma(X_j) \sqrt{|X_j|}}{\sqrt{m}} \geq \frac{\sigma(X_j)}{3}. $$
Since $|X_j| \geq \frac{\sqrt{n} \log n}{100}$ (by~\eqref{eq:lbnd}),  \wntsc implies that $\sigma(X_j) \geq \frac{w_0 \sigma(C_j)}{190}. $
Thus, we get 
\begin{align}
\label{eq:l1}
 \frac{||B_y||}{\sqrt{m}} \geq \frac{ w_0  \sigma(C_j)}{570}. 
\end{align}

We now give a second lower bound on $||B_y||$. 
 Let $A_j'$ be the matrix where each row $i$ is $x_i$ for a unique point $x_i \in X_j$.
Let $C'$ and $D'$ be matrices (with as many rows as $|X_j|$) where each row is $\mu(C_j)$ and $\nu_j$ respectively. Then  
$$||B_y|| \geq ||A_j'- D'|| \geq   ||C'-D'|| - ||A_h'-C'|| \geq  |\mu(C_j) - \nu_j| \sqrt{|X_j|}- \sigma(C_j) \sqrt{n}. $$
Therefore, 
\begin{align*}
\frac{||B_y||}{\sqrt{m}} \geq \frac{|\mu(C_j) - \nu_j|}{3}  - \frac{2\sigma(C_j)}{\sqrt{w_0}}. 
\end{align*}

Using the bound on $\sigma(C_j)$ from~\eqref{eq:l1} in the inequality above and rearranging, we get 
\begin{align}
\label{eq:l2}
\frac{||B_y||}{\sqrt{m}} \geq \frac{w_0 |\mu(C_j)-\nu_j|}{1800}
\end{align}

Taking the average of~\eqref{eq:l1} and~\eqref{eq:l2} yields~\eqref{eq:int}. 
\qed
\end{proof}

We are now ready to prove the key result. 

\begin{theorem}
\label{thm:poly}
For any $m \in [w_0n/2, |\rc{j}{j}|]$, $\opt(\calC(m, \nu_j, P^{(j)})) \leq \frac{72000}{w_0^{3.5}} \cdot  \opt(\calC(w_0n/2), \nu_j, P^{(j)}). $ 
\end{theorem}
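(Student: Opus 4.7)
The plan is to combine Lemma~\ref{lem:upper} and Lemma~\ref{lem:lower} directly; no further structural work is needed. The theorem asserts that the value of the convex program does not blow up by more than a $\frac{72000}{w_0^{3.5}}$ factor as we move $m$ from $w_0 n/2$ up to $|\rc{j}{j}|$, and both the numerator and denominator of the desired ratio have already been estimated in terms of the two natural quantities $\sigma(C_j)$ and $|\mu(C_j) - \nu_j|$.

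First I would invoke Lemma~\ref{lem:upper}, which applies for any $m$ in the range $[w_0 n/2, |\rc{j}{j}|]$, to obtain
$$\opt(\calC(m,\nu_j, P^{(j)})) \leq \frac{2\sigma(C_j)}{\sqrt{w_0}} + |\mu(C_j) - \nu_j|.$$
Next, Lemma~\ref{lem:lower} gives
$$\opt(\calC(w_0n/2, \nu_j, P^{(j)})) \geq \frac{w_0^3 \sigma(C_j)}{24000} + \frac{w_0^3 |\mu(C_j) - \nu_j|}{72000}.$$
Now I would compare the two bounds term by term: the $\sigma(C_j)$ contribution in the upper bound is at most $\frac{2}{\sqrt{w_0}} \cdot \frac{24000}{w_0^3} = \frac{48000}{w_0^{3.5}}$ times the corresponding term in the lower bound, and the $|\mu(C_j) - \nu_j|$ contribution is at most $\frac{72000}{w_0^3}$ times the corresponding term. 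Since $w_0 \leq 1$, both ratios are bounded by $\frac{72000}{w_0^{3.5}}$, and summing the two estimates yields the claimed inequality.

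The only mild subtlety is to make sure Lemma~\ref{lem:upper} applies for the full range $[w_0n/2, |\rc{j}{j}|]$, which it does by hypothesis, and that the invariant~\eqref{eq:inv1} guarantees $|\rc{j}{j}| \geq w_0 n/2$ so the range is nonempty. Beyond that the argument is a routine two-term comparison, so I do not expect any genuine obstacle in this particular statement; the real work was already packaged into the preceding lemmas.
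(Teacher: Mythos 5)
Your proposal is correct and takes exactly the same route as the paper, which simply states that the theorem follows from Lemma~\ref{lem:upper} and Lemma~\ref{lem:lower}; you have only filled in the (correct) term-by-term arithmetic that the paper leaves implicit.
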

\begin{proof}
The statement follows from Lemma~\ref{lem:upper} and Lemma~\ref{lem:lower}. \qed
\end{proof}

In our algorithm, 
we find the highest $m^\star \geq w_0n/2$ such that $\opt(\calC(m^\star), \nu_j, P^{(j)})$ is at most $\frac{72000}{w_0^{3.5}} \cdot  \opt(\calC(w_0n/2), \nu_j, P^{(j)})$ (line~\ref{l:mst}). Theorem~\ref{thm:poly} implies that $m^\star \geq |\rc{j}{j}|$. Let $y$ be the optimal solution to $\calC(m^\star, \nu_j, P^{(j)})$. Using Lemma~\ref{lem:rounding}, we round $y$ to an integral solution $y'$. Let $X$ denote the set of points $x_i$ for which $y'_i = 1$. We define $\rc{j+1}{h}$ as $\rc{j}{h} \setminus X$. We now show that this preserves invariants~\eqref{eq:inv} and~\eqref{eq:inv1}. 

\begin{claim}
\label{lem:inv1}
For any index $h \neq j$, $|X \cap C_h| \leq \frac{w_0^2 n}{20k}. $ Therefore, $\rc{j+1}{h}$ satisfies invariant~\eqref{eq:inv1} for $h = j+1, \ldots, k$. 
\end{claim}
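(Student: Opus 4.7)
The plan is to argue by contradiction: assume some $h \neq j$ has $|X\cap C_h|\geq \tfrac{w_0^2 n}{20k}$, and show this violates Lemma~\ref{lem:lowery} when combined with a small upper bound on the objective value of the rounded solution $y'$ viewed as a feasible solution to $\calC(m',\nu_j,P^{(j)})$ with $m' := \sum_i y'_i$.

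For the upper bound I would chain three ingredients in order. First, the algorithm's selection rule on line~\ref{l:mst} gives $\opt(\calC(m^\star,\nu_j,P^{(j)}))\leq\tfrac{72000}{w_0^{3.5}}\cdot\opt(\calC(w_0n/2,\nu_j,P^{(j)}))$. Second, Lemma~\ref{lem:upper} applied with $m=w_0n/2$ (which lies in the admissible range since invariant~\eqref{eq:inv1} at step $j$ forces $|\rc{j}{j}|\geq w_0n/2$), combined with the bound $|\nu_j-\mu(C_j)|\leq 20\sigma_0/w_0$ from Lemma~\ref{lem:centerapprox} and $\sigma(C_j)\leq\sigma_0$, yields $\opt(\calC(w_0n/2,\nu_j,P^{(j)}))=O(\sigma_0/w_0)$. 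Third, the rounding lemma (Lemma~\ref{lem:rounding}) bounds $\|B_{y'}\|\leq\tfrac{20}{w_0^2}\|B_y\|$, and since $m'\geq m^\star - \tfrac{w_0^2 n}{20}\geq m^\star/2$, the objective $\|B_{y'}\|/\sqrt{m'}$ of the rounded solution (as a feasible solution for $\calC(m',\nu_j,P^{(j)})$) comes out to at most $O(\sigma_0/w_0^{6.5})$.

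For the contradictory lower bound, the hypothesis $|X\cap C_h|\geq\tfrac{w_0^2 n}{20k}$ with $h\neq j$ lets me invoke Lemma~\ref{lem:lowery} on $y'$ (valid since $y'$ is integral with $m'\leq n$), which yields $\|B_{y'}\|/\sqrt{m'}\geq\tfrac{\gamma\sigma_0 w_0}{40\sqrt{k}}$. Substituting $\gamma=K/w_0^{11}$ and $k\leq 1/w_0$ converts this to $\Omega(K\sigma_0/w_0^{9.5})$, which beats the upper bound by a factor of $w_0^{-3}$ times $K$; since $w_0\leq 1$, choosing the separation constant $K$ sufficiently large (as permitted by the hypothesis of Theorem~\ref{main2}) forces a contradiction. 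The second sentence of the claim then follows by routine bookkeeping: $\rc{j+1}{h}=\rc{j}{h}\setminus X$, so $|C_h\setminus\rc{j+1}{h}|\leq|C_h\setminus\rc{j}{h}|+|C_h\cap X|\leq\tfrac{w_0^2 jn}{10}+\tfrac{w_0^2 n}{20k}\leq\tfrac{w_0^2(j+1)n}{10}$ for each $h\geq j+1$, restoring invariant~\eqref{eq:inv1} at step $j+1$.

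The main obstacle I foresee is precisely the bookkeeping in the first two paragraphs: one must normalize every bound on the objective value by the same parameter $m$ (so that the rounding loss and the selection-rule loss compose correctly) before lining it up against Lemma~\ref{lem:lowery}, and track the $w_0$-exponents so that the separation $\gamma=K/w_0^{11}$ indeed dominates the accumulated $1/w_0^{O(1)}$ factors from SVD projection, Lemma~\ref{lem:upper}, the ratio $72000/w_0^{3.5}$, and the $20/w_0^2$ rounding blow-up. The individual estimates are routine, but a misplaced factor of $w_0$ would break the contradiction.
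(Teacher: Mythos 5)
Your proposal is correct and follows essentially the same approach as the paper: chain the upper bound on $\opt(\calC(m^\star,\nu_j,P^{(j)}))$ from the selection rule, Lemma~\ref{lem:upper}, and Lemma~\ref{lem:centerapprox}; pass through Lemma~\ref{lem:rounding} to bound the objective of $y'$; and contradict the lower bound from Lemma~\ref{lem:lowery} after substituting $\gamma = K/w_0^{11}$. (One minor note in your favor: the paper's text mistakenly cites Lemma~\ref{lem:upper} for the lower bound when it means Lemma~\ref{lem:lowery}, and you correctly identified the right lemma.)
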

\begin{proof}
From Lemma~\ref{lem:centerapprox} and Lemma~\ref{lem:upper}, objective value of $y'$ is at most 
$$\frac{1.6 \times 10^{6}}{w_0^{5.5}} \cdot \left( \frac{2\sigma(C_j)}{\sqrt{w_0}} + |\mu(C_j) - \nu_j| \right) \leq \frac{4 \cdot 10^{7} \cdot \sigma_0}{w_0^{6.5}}, $$
where the last inequality follows from Lemma~\ref{lem:centerapprox}. 
Let $h \neq j$ be an index such that $|X \cap C_h| \geq \frac{w_0^2 n}{10k}$. Then Lemma~\ref{lem:upper} shows that 
the objective value of $y'$ is at least $\frac{\gamma \sigma_0 w_0}{40 \sqrt{k}}$, a contradiction. 

Since $X$ includes at most $\frac{w_0^2 n}{20}$ points of $C_h$, $h > j+1$, invariant~\eqref{eq:inv1} follows. \qed
\end{proof}

For $h \leq j-1$, invariant~\eqref{eq:inv} holds because the same holds for $\rc{j}{h}$. So it remains to show that $|\rc{j+1}{j}| \leq \frac{w_0^2 n}{10k}$. Lemma~\ref{lem:rounding} shows that $|X| \geq m^\star - \frac{w_0^2 n}{20}  \geq |\rc{j}{j}| - \frac{w_0^2 n}{20}$. Lemma~\ref{lem:inv1} shows that $X$ can include at most $\frac{w_0^2 n}{20}$ points from clusters other than $C_h$. So, $|X \cap C_h| \geq |\rc{j}{j}| - \frac{w_0^2 n}{10}$, and so, $|\rc{j+1}{j}| \leq \frac{w_0^2 n}{10}. $
This shows that the invariant conditions hold at the beginning of iteration $j+1$ also. 

Observe that after $k$ iterations, at most $w_0n/10$ points will remain, and so the procedure will stop after exactly $k$ iterations. This proves correctness of our algorithm.

\section{Applications to Stochastic Models}
\label{sec:a3}
In this section, 
we prove that data from several stochastic models of data generation satisfy our deterministic conditions and so our results about finding the parameter $k$ apply here. We  consider data generated from a mixture $\sum_{\ell=1}^kw_\ell  F_\ell$ of pdf's $F_\ell$, where $\sum_\ell w_\ell = 1$. 

First we give some definitions associated with a pdf $F$. 
Given a pdf $F$ in $\Re^d$, let $\hmu(F)$ denote the mean of $F$. Define $\hsigma(F)$ as the maximum directional variance of $F$, i.e., $$ \hsigma(F) := \max_{v: |v|=1} \Var_{X \sim F}[X \cdot v], $$
where $X \sim F$ denotes a random vector $x$ distributed according to $F$. 
For a pdf $F$ and a unit vector $u$, let $F_u$ denote the 1-dimensional marginal density of $F$ along $u$, i.e., the density corresponding to the random variable $u \cdot X,$ where $X \sim F$. 
\begin{definition}\label{sc-F}
The sample complexity of $F$ denoted sc$(F)$ is the minimum integer $s$ such that for all $m\geq s$, if $S$ is a set of $m$ iid samples drawn according to $F$, with high probability, $S$ satisfies:
\begin{align}
|\mu(S)-\hmu(F)|&\leq \gamma \hsigma(F)\label{mu-empirical}\\
\forall \mbox{\ unit vectors  $v\in {\bf R}^d$}, \quad \quad  \sigma_v(S)&\leq 2\hsigma(F_v)\label{sigma-empirical}
\end{align}
\end{definition}
\begin{definition}
We say that $F$ satisfies {\bf anti-concentration}, if for all unit vectors $u \in \Re^d$, 
\begin{align}
    \label{anti-conc}
F_u(\zeta) \leq \frac{4}{\hsigma(F_u)}, \quad \forall \zeta \in \Re. 
\end{align}
\end{definition}
We note that Gaussians and indeed any log-concave distribution satisfies anti-concentration property~\cite{LLS07}.


We now state the main theorem on the application our deterministic results to stochastic mixtures. After proving it, we give one class of examples - mixtures of sub-gaussian distributions.

\begin{theorem}\label{stoc-appl}
Suppose $F=\sum_{\ell=1}^kw_\ell F_\ell$ is a mixture of pdf's on ${\bf R}^d$ with $w_\ell\geq 2w_0$ and $\hsigma_0=\Max_\ell \hsigma(F_\ell)$ satisfying 
\begin{equation}\label{F-ell-sep}
|\hmu(F_\ell)-\hmu(F_{\ell'})|\geq 6 \gamma\hsigma_0\forall \ell\not=\ell'.
\end{equation}
Suppose we pick $n$ iid samples from $F$, where,
$$n\geq 100 \log k \cdot \Max_\ell \mbox{sc}(F_\ell)/w_0.$$
Further suppose $F_1,F_2,\ldots ,F_k$ satisfy anti-concentration property.
Let $C_\ell$ denote the samples picked according to $F_\ell$. 
Then the clustering $\{C_1, \ldots, C_k\}$ satisfies the minimum cluster weight condition w.r.t. $w_0$, well-separatedness condition~(\ref{eq:sept}) and \ntsc~(\ref{eq:ntscdef}). 
\end{theorem}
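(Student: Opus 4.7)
The proof plan has three parts, one for each of the three conditions, of increasing difficulty.

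\textbf{Minimum weight and well-separatedness.} For the minimum cluster weight, I would just observe that each $|C_\ell|$ is a binomial random variable with mean $w_\ell n \geq 2 w_0 n$; a standard Chernoff bound (applied with a union bound over the $k \leq 1/w_0$ clusters) shows that $|C_\ell| \geq w_0 n$ for every $\ell$ with high probability, using the assumed lower bound on $n$. For well-separatedness, since $|C_\ell| \geq w_0 n \geq \text{sc}(F_\ell)$, property~\eqref{mu-empirical} gives $|\mu(C_\ell) - \hmu(F_\ell)| \leq \gamma \hsigma_0$, and the variance bound~\eqref{sigma-empirical}, applied to the direction maximizing $\sigma(C_\ell)$, yields $\sigma(C_\ell) \leq 2\hsigma_0$. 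Combined with the hypothesis~\eqref{F-ell-sep}, the triangle inequality gives $|\mu(C_\ell) - \mu(C_{\ell'})| \geq 6\gamma \hsigma_0 - 2\gamma \hsigma_0 \geq 4\gamma \hsigma_0 \geq \gamma \sigma_0$ (where $\sigma_0 = \max_\ell \sigma(C_\ell) \leq 2\hsigma_0$), which is stronger than~\eqref{eq:sept}.

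\textbf{Strategy for \ntsc.} Fix a cluster $C_\ell$, a unit vector $v$ defining the line $L$, and a subset $T \subseteq C_\ell$ with $m := |T| \geq \sqrt{n}\log n/100$. Let $\hsigma_v := \hsigma((F_\ell)_v)$ denote the directional variance of the $\ell$-th component in direction $v$. The key mechanism is that anti-concentration $(F_\ell)_v(\zeta) \leq 4/\hsigma_v$ implies that any interval $I \subseteq \Re$ of length $L$ satisfies $\Pr_{X \sim F_\ell}[v \cdot X \in I] \leq 4L/\hsigma_v$. A Vapnik-Chervonenkis style uniform convergence argument (the collection of intervals on the line has VC dimension $2$) then shows that with high probability, uniformly over all intervals $I$ and all directions $v$ (the latter handled by a standard $\varepsilon$-net on $S^{d-1}$ followed by a continuity argument using $\hsigma_v \leq \hsigma_0$),
\[
    |\{x \in C_\ell : v \cdot x \in I\}| \;\leq\; \frac{8 L \cdot |C_\ell|}{\hsigma_v} + O(\sqrt{|C_\ell|} \log n).
\]
Using $|T| \geq \sqrt{n}\log n / 100 \gg \sqrt{|C_\ell|} \log n$, this implies that $T$ cannot be contained in any interval of length less than $c \cdot m \hsigma_v / |C_\ell|$ for a constant $c > 0$.

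\textbf{Finishing \ntsc.} A set of $m$ real numbers contained in no interval shorter than $\Delta$ has variance at least $\Omega(\Delta^2)$ (one can split the set into two halves separated by $\Delta/2$ and compute). Applied to $\pi_L(T)$, this gives
\[
\sigma^2(\pi_L(T)) \;\geq\; c' \cdot \frac{m^2 \hsigma_v^2}{|C_\ell|^2}
\]
for an absolute constant $c'$. On the other hand, $\sigma^2(\pi_L(C_\ell)) \leq 4 \hsigma_v^2$ by~\eqref{sigma-empirical}. Choosing constants so that $c'/4 \geq 1/125$ yields the required inequality~\eqref{eq:ntscdef}.

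\textbf{Main obstacle.} The hardest step is the uniform convergence over subsets $T$ and directions $v$. Naively one would need to control $2^{|C_\ell|}$ subsets, but the reduction to ``the number of points of $C_\ell$ in an interval $I$ in direction $v$'' reduces the combinatorial problem to a VC-dimension $2$ class, after which the lower bound $|T| \geq \sqrt{n}\log n/100$ dominates the additive $\sqrt{|C_\ell|}\log n$ deviation. The continuity in $v$ is handled by an $\varepsilon$-net argument, exploiting that $\hsigma_v$ and $(F_\ell)_v(\zeta)$ are both Lipschitz in $v$; this is where the factor $\log k \cdot \text{sc}(F_\ell)$ in the sample size is absorbed.
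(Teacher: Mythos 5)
Your overall plan matches the paper's: a Chernoff bound for cluster sizes, the empirical mean/variance guarantees~\eqref{mu-empirical}--\eqref{sigma-empirical} plus the triangle inequality for separation, and anti-concentration combined with a VC/$\varepsilon$-sample argument for \ntsc. The first two parts are fine.

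The \ntsc finishing step, however, rests on a false lemma. You claim that ``a set of $m$ real numbers contained in no interval shorter than $\Delta$ has variance at least $\Omega(\Delta^2)$,'' but this fails badly: take $m-1$ copies of $0$ and one point at $\Delta$. That set fits in no interval of length less than $\Delta$, yet its variance is roughly $\Delta^2/m$, not $\Omega(\Delta^2)$. The suggested proof (``split into two halves separated by $\Delta/2$'') does not apply to such a set. Luckily, the uniform-convergence bound you derived just above already gives something much stronger than ``$T$ is not contained in a short interval'': for $L = c\, m \hsigma_v /|C_\ell|$ with $c$ a small constant, it shows that \emph{every} interval of length $L$ contains at most $8cm + O(\sqrt{|C_\ell|}\log n) < m/2$ points of $C_\ell$, and a fortiori at most $m/2$ points of $T$. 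Apply this to the interval of length $L$ centered at the projected mean $v\cdot\mu(T)$: at least $m/2$ points of $T$ lie at distance more than $L/2$ from $v\cdot\mu(T)$, hence
\[
\sigma^2(\pi_L(T)) \;\geq\; \frac{1}{m}\cdot\frac{m}{2}\cdot\left(\frac{L}{2}\right)^2 \;=\; \frac{L^2}{8},
\]
which is the bound you want. This is exactly what the paper does --- it centers the interval at $u\cdot\mu(T)$ and counts escaping points, rather than reasoning about the diameter of $\pi_L(T)$.

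One secondary caution: taking an $\varepsilon$-net over $S^{d-1}$ and union-bounding costs an extra factor that grows like $d\log(1/\delta)$, where $\delta$ is the net resolution; with $\varepsilon$ as small as $\Theta(\log n/\sqrt{n})$ this can force $\log n \gtrsim d/w_0$, which is a stronger requirement on $n$ than the theorem assumes. A cleaner route to uniformity over directions is to work directly with the class of slabs $\{x : a \leq v\cdot x \leq b\}$ in $\Re^d$, which has VC dimension $O(d)$, so the $\varepsilon$-sample property holds simultaneously over all directions with sample complexity $O(d/\varepsilon^2)$ without a net. (The paper's proof as written states the $\varepsilon$-sample property for a fixed direction $u$ only, so your instinct to worry about this point is correct; it just needs the slab-VC fix rather than a net over the sphere.)
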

\begin{proof}
First, by H\"offding-Chernoff, 
it follows that whp, $|C_\ell| \geq w_0 n$, and 
$|C_\ell|\geq \mbox{sc}(F_\ell)$ for all $\ell \in [k]$. 
Conditions~(\ref{mu-empirical}) and (\ref{sigma-empirical}) hold by the 
  definition of sample complexity.
  Conditions~(\ref{mu-empirical}) and~(\ref{F-ell-sep}) imply that for all distinct $\ell, \ell' \in [k]$, 
  $$ |\mu(C_\ell) - \mu(C_{\ell'})| \geq 2 \gamma \hsigma_0. $$
  Condition~(\ref{sigma-empirical}) implies that $\hsigma_0 \geq \hsigma_2$, and so the well-separatedness condition for $C_\ell, \ell \in [k]$ follows.


To prove \ntsc, we use a simple V-C dimension-based argument:
Let $u$ be any unit vector.
Let $T$ be a subset of $C_\ell$ with $|T|\geq \sqrt n\ln n/100$. 
We use anti-concentration to prove a lower bound on $\sigma_v(T)$. 
Let $\zeta=\frac{|T|\sigma_u(F_\ell)}{12|C_\ell|}$.
 By anti-concentration,
$$\prob_{X \in F_{\ell}}(|u\cdot X-u\cdot\mu(T)|\leq\zeta)\leq \frac{4}{\hsigma_u(F_u)} \cdot 2 \zeta = 
\frac{2|T|}{3|C_\ell|}.$$

Consider the set system consisting of intervals on the line, where the measure of each interval is given by $F$. Let $\varepsilon$ denote $\frac{|T|}{10 |C_\ell||}.$
As long as $|T| \geq \sqrt{n} \ln n,$, 
$$ |C_\ell| \geq \frac{c'}{\varepsilon^2} \ln \frac{k}{ \varepsilon}, $$
where $c'$ is a large enough constant. It follows that 
with probability at least $1 - \frac{o(1)}{k}$, $C_\ell$ is an $\varepsilon$-sample for this set system. Let $I$ denote the interval of length $\zeta$ around $u \cdot \mu(T)$. Then we have shown that the measure of $I$ is at most $\frac{2|T|}{3 |C_\ell|}.$ Therefore, 
$$\frac{|C_\ell \cap I|}{|C_\ell|} \leq \frac{2|T|}{3 |C_\ell|}+ \frac{2|T|}{10 |C_\ell|}. $$
Therefore, at  least $0.23|T|$ points of $T$ are at distance greater than $\zeta$ from $u \cdot \mu(T)$, showing that $\sigma_L(T)\geq .033\sigma_L(F_\ell)|T|/|C_\ell|$.
Now, using (\ref{sigma-C-vs-F}), we have $\sigma_L(F_\ell) \geq \sigma_L(C_\ell)/2$ , thus proving \ntsc. \qed
\end{proof}

\subsection{Sub-Gaussian Densities}
\label{sec:subg}
In this section, we consider the class of sub-gaussian pdf's
(general Gaussians are a special class) and bound their sample complexity and hence prove that they satisfy the hypothesis of Theorem (\ref{stoc-appl}).

We refer to \cite{vers} for an introduction to sub-Gaussian random vectors and their sub-Gaussian norm. Briefly, for a real-valued sub-Gaussian random variable $X$, the sub-Gaussian norm of $X$, denoted $||X||_{\mbox{sg}}$ is defined as
$$||X||_{\mbox{sg}}=\mbox{Sup}_{p\geq 1}(E|X|^p)^{1/p}/\sqrt p.$$ 
($X$ is sub-gaussian iff this exists.) For a sub-Gaussian random vector $X$, its sub-Gaussian norm is the supremum over all unit vectors $v$ of the sub-Gaussian norm of $v\cdot X$. If $X$ is a random Gaussian vector with variance-covariance matrix $\Sigma$, then, its sub-Gaussian norm is
$\sqrt {||\Sigma||}$. We will use what we call ``shape parameter'' $\kappa$ defined as: For a sub-Gaussian random vector $X$ with non-singular variance-covariance matrix $\Sigma$, 
the {\it shape parameter} $\kappa(X)$ is defined by
$$\kappa(X)=\frac{||X||_{\mbox{sg}}}{\lambda^{1/2}_{\mbox{min}}(\Sigma)}.$$
If $G$ is a Gaussian pdf, $\kappa(G)$ is the square root of the condition number of the variance-covariance matrix.
If $G$ is the pdf of random variable $X$, we also write $||G||_{\mbox{sg}}$ for $||X||_{\mbox{sg}}$.
\begin{theorem}
Suppose $F=\sum_{\ell=1}^kw_\ell F_\ell$ is a mixture of sub-Gaussian pdf's satisfying anti-concentration 
and the separation condition (\ref{F-ell-sep}), 
with $w_\ell\geq w_0$. Let $\kappa_0=\Max_\ell \kappa(F_\ell)$. Suppose 
$$n\geq c\kappa_0^4d^2/w_0,$$
and a set $S$ of $n$ iid samples are drawn $F$. Letting $C_\ell$ be the subset of $S$ drawn accroding to $F_\ell$, we have whp: $C_\ell$ satisfy the well-separatedness and \ntsc.
\end{theorem}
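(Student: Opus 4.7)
The plan is to reduce this statement to Theorem~\ref{stoc-appl}, whose hypotheses are already almost given: anti-concentration and the mean-separation condition~\eqref{F-ell-sep} are assumed, $w_\ell \geq w_0$ is assumed, and we just need to verify the sample-complexity bound, namely that
$$ n \;\geq\; \frac{100 \log k}{w_0} \cdot \max_\ell \mathrm{sc}(F_\ell). $$
So the whole task boils down to proving that for a sub-Gaussian density $F_\ell$ with shape parameter at most $\kappa_0$ in $\R^d$, one has $\mathrm{sc}(F_\ell) \leq c' \kappa_0^4 d^2$ for a universal constant $c'$. Once that is in hand, the hypothesis $n \geq c\kappa_0^4 d^2/w_0$ (with $c$ chosen large enough) immediately feeds into Theorem~\ref{stoc-appl} and yields well-separatedness and \ntsc for the $C_\ell$.

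To bound $\mathrm{sc}(F_\ell)$, I would verify the two conditions \eqref{mu-empirical} and \eqref{sigma-empirical} separately for $m$ iid draws from $F_\ell$ using standard sub-Gaussian concentration (as in Vershynin~\cite{vers}). For the mean, the sub-Gaussian Bernstein/Hoeffding bound in $\R^d$ gives $|\mu(S) - \hmu(F_\ell)| = O(\|F_\ell\|_{\mathrm{sg}} \sqrt{d/m})$ with high probability; using the shape parameter identity $\|F_\ell\|_{\mathrm{sg}} = \kappa(F_\ell)\lambda_{\min}^{1/2}(\Sigma_\ell) \leq \kappa_0 \hsigma(F_\ell)$, this is at most $\gamma\hsigma(F_\ell)$ as soon as $m \gtrsim \kappa_0^2 d/\gamma^2$. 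For the directional-variance condition, which must hold uniformly over all unit vectors $v$, I would invoke the sub-Gaussian covariance-estimation inequality: for $m \gtrsim d$,
$$\bigl\| \tfrac{1}{m}\textstyle\sum_{i}(X_i-\hmu)(X_i-\hmu)^\top - \Sigma_\ell\bigr\|_{\mathrm{op}} \;\leq\; O\!\bigl(\|F_\ell\|_{\mathrm{sg}}^2 \sqrt{d/m}\bigr)$$
with high probability. Translating into directional variances, this yields $\sigma_v(S)^2 \leq \hsigma(F_{\ell,v})^2 + O(\kappa_0^2 \hsigma(F_{\ell,v})^2 \sqrt{d/m})$ simultaneously for every unit $v$, which is at most $4\hsigma(F_{\ell,v})^2$ once $m \gtrsim \kappa_0^4 d$. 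Absorbing log factors and the $\gamma = K/w_0^{11}$ into the constant, and taking the worst of the two estimates, gives a sample-complexity bound of the claimed order.

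The main obstacle will be the directional-variance clause, because it demands a uniform guarantee over the sphere, not just concentration for a single direction; this is what forces the $\kappa_0^4$ factor (and essentially the $d^2$ factor after accounting for the required tightness to beat $\gamma$-level precision). In contrast, the mean-concentration half is a routine application of sub-Gaussian tail bounds. A minor bookkeeping issue will be to note, via H\"offding-Chernoff on cluster sizes, that whp each $|C_\ell|$ exceeds $w_0 n/2$ and hence exceeds $\mathrm{sc}(F_\ell)$ under our choice of $n$; this lets us apply the above concentration results to each component separately, and a union bound over $\ell \in [k]$ (absorbed into the $\log k$ term already in Theorem~\ref{stoc-appl}) completes the argument. \qed
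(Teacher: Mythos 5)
Your proposal follows essentially the same route as the paper: reduce to Theorem~\ref{stoc-appl} by bounding $\mathrm{sc}(F_\ell)\leq c'\kappa_0^4 d^2$, proving the two clauses of the sample-complexity definition via coordinate-wise sub-Gaussian Hoeffding for the mean and the Vershynin covariance-concentration inequality for the uniform directional-variance bound, then using H\"offding--Chernoff on cluster sizes so that $|C_\ell|\geq\mathrm{sc}(F_\ell)$ whp. This is exactly the structure of the paper's two-step proof (statement of the $\mathrm{sc}$ lemma plus invocation of Theorem~\ref{stoc-appl}), so no substantive comparison is needed.
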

\begin{proof}
The following Lemma bounds the sample complexity of the $F_\ell$. 
By H\"offding-Chernoff it follows that $|C_\ell|\geq \mbox{sc}(F_\ell)$ whp. Now, Theorem (\ref{stoc-appl}) implies the current Theorem.\qed 
\end{proof}
\begin{lemma}
If $F$ is a sub-gaussian pdf on ${\bf R}^d$, we have
$$\mbox{sc}(F)\leq 100\kappa^4(F)d^2.$$
\end{lemma}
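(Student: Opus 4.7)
The plan is to verify that the two conditions \eqref{mu-empirical} and \eqref{sigma-empirical} in the definition of $\mathrm{sc}(F)$ both hold with high probability once $n\ge 100\kappa^4(F)d^2$ i.i.d.\ samples are drawn from $F$. Translate so that $\hmu(F)=0$, let $\Sigma$ be the covariance of $F$, set $K:=\norm{F}_{\mathrm{sg}}$ and $\lambda:=\lambda_{\min}(\Sigma)$, so that $\kappa(F)=K/\sqrt{\lambda}$ and $\hsigma(F)=\sqrt{\norm{\Sigma}}$. Note $K\ge\sqrt{\norm{\Sigma}}\ge\sqrt{\lambda}$, hence $\kappa\ge 1$ and $K/\sqrt{\norm{\Sigma}}\le\kappa$.

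\textbf{Mean concentration.} Fix a unit vector $v\in\R^d$; the variables $v\cdot X_1,\ldots,v\cdot X_n$ are i.i.d., centered, and sub-Gaussian with norm at most $K$, so Hoeffding's inequality yields $|v\cdot\mu(S)|\le CK\sqrt{(\log n)/n}$ with probability at least $1-n^{-10}$. Union-bounding over a $(1/2)$-net of size $9^d$ on the unit sphere and using the standard factor-of-$2$ approximation of the spectral norm on such a net gives $\norm{\mu(S)}\le C'K\sqrt{d/n}$ with high probability. Since $K/\sqrt{\norm{\Sigma}}\le\kappa$, this is at most $\gamma\hsigma(F)$ whenever $n\ge C''\kappa^2 d/\gamma^2$, which is comfortably absorbed by $100\kappa^4 d^2$.

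\textbf{Directional variance concentration.} For a fixed unit $v$, the random variables $(v\cdot X_i)^2$ are sub-exponential with norm at most $2K^2$, so Bernstein's inequality gives, for each $t\ge 1$,
\[ \left|\tfrac{1}{n}\sum_{i=1}^n (v\cdot X_i)^2-\E[(v\cdot X)^2]\right|\;\le\; CK^2\max\!\bigl(\sqrt{t/n},\;t/n\bigr) \]
with probability at least $1-e^{-t}$. Combining this with the mean bound above and the identity $\sigma_v(S)^2=\tfrac{1}{n}\sum_i(v\cdot X_i)^2-(v\cdot\mu(S))^2$ yields $|\sigma_v(S)^2-\hsigma(F_v)^2|\le C'K^2\sqrt{t/n}$. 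The worst case for \eqref{sigma-empirical} occurs when $\hsigma(F_v)^2=\lambda$; to force $\sigma_v(S)^2\le 4\hsigma(F_v)^2$ one therefore needs $C'K^2\sqrt{t/n}\le 3\lambda$, i.e.\ $n\ge C''\kappa^4 t$. Choosing $t=\Theta(d\log d)$ absorbs a union bound over a $(1/d)$-net of size $(3d)^d$ on the unit sphere, and passage from the net to arbitrary unit $v$ uses that $v\mapsto\sigma_v(S)^2$ is Lipschitz with constant at most $2\sigma(S)^2$, itself controlled by the same Bernstein estimate (which gives $\norm{\hat\Sigma_S-\Sigma}=O(K^2\sqrt{d/n})$). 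This yields $n\ge C'''\kappa^4 d^2$.

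The main technical obstacle will be the interpolation step from the finite net to all unit vectors in the variance bound: the target $4\hsigma(F_v)^2$ can be as small as $4\lambda$ along the minimum-eigenvalue direction of $\Sigma$, leaving very little headroom to absorb the perturbation $|\sigma_v(S)^2-\sigma_{v'}(S)^2|$ as $v$ moves off a net point $v'$. Choosing the net resolution at scale $\Theta(\sqrt{\lambda}/K)=\Theta(1/\kappa)$ in the $\Sigma^{1/2}$-geometry (equivalently a $\Theta(1/d)$-net in Euclidean norm after scaling) restores enough slack, and it is precisely this step that inflates the sample complexity from the naive $\tilde O(\kappa^4 d)$ to $O(\kappa^4 d^2)$. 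The constant $100$ in the statement is comfortably loose enough to swallow the Bernstein and net constants, completing the bound $\mathrm{sc}(F)\le 100\kappa^4(F)d^2$.
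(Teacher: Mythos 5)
Your overall strategy is sound, and your argument does lead to the stated bound, but it takes a noticeably more laborious route than the paper's proof, and your diagnosis of where the factor $d^2$ comes from is off.

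The paper handles~\eqref{sigma-empirical} in one stroke by invoking Vershynin's matrix-concentration result (Theorem~5.39 and Remark~5.40), which directly gives a \emph{uniform-in-$v$} bound
$\norm{\tfrac{1}{m}A^{T}A-\Sigma}\le c\,\norm{F}_{\mathrm{sg}}^{2}\sqrt{d/m}$,
and then uses the single observation $\norm{F}_{\mathrm{sg}}\le\kappa(F)\,\hsigma_v(F)$ for every unit $v$ to convert the additive error into a multiplicative one: $\sigma_v^2(C)\le(1+c\kappa^2\sqrt{d/m})\hsigma_v^2(F)$. No net and no Lipschitz interpolation appear, because the matrix bound is already uniform. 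Your proof, by contrast, fixes $v$, applies Bernstein, then union-bounds over a net and interpolates; this is in essence re-deriving Vershynin's theorem from scratch, which is correct but redundant. You do correctly identify the crucial comparison point---the additive error $K^2\sqrt{d/n}$ must be absorbed against the worst-case directional variance $\lambda_{\min}(\Sigma)$, producing the $\kappa^4$---so the heart of the argument is right.

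The place where your write-up goes astray is the closing paragraph. You assert that the net-interpolation step is ``precisely'' what inflates the sample complexity from $\tilde O(\kappa^4 d)$ to $O(\kappa^4 d^2)$. That is not the case. A $\Theta(1/d)$- or $\Theta(1/\kappa^2)$-net on the sphere has at most $(Cd)^d$ (resp.\ $(C\kappa^2)^d$) points, so the union bound only costs a factor of $d\log d$ (resp.\ $d\log\kappa$) in the exponent $t$, giving $n=\tilde O(\kappa^4 d)$---still linear in $d$ up to logarithms. The paper's own route also only needs $m\gtrsim\kappa^4 d$; the stated $100\kappa^4 d^2$ is simply a comfortable (loose) upper bound on $\mathrm{sc}(F)$, not a quantity forced by any net-interpolation obstruction. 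So you should delete or correct the claim that the $d^2$ is ``precisely'' due to interpolation; it is an artifact of the author's choice to state a slack bound, not a genuine cost of the argument. Your mean estimate is fine (the paper does it coordinate-wise with a $[d]\times[k]$ union bound rather than with a net, but both give comparable, easily absorbable constraints since $\gamma$ is large).
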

\begin{proof}
Let $m$ be any integer with $m\geq 100\kappa^4(F)d^2$ and let $C$ be a set of $m$ iid sample drawn according to $F$.

By concentration of real-valued sub-Gaussian random variables (see the H\'offding inequality in Proposition 5.10 of \cite{vers}), for each $i\in [d]$, for all $t\geq 0$, 
$$\prob(|\mu_i(F)-\mu_i(C)|>t)\leq \exp \left(1-\frac{t^2n}{4\hsigma^2(F)} \right).$$
Put $t=10\sqrt{\ln dk}\hsigma(F)/\sqrt n$ and use union bound over all $i \in [d], \ell \in [k]$ to get
$$\prob( \forall \ell \in [k]: |\mu(F)-\mu(C)| \leq t\sqrt d) = 1 - o(1), $$
proving that whp (\ref{mu-empirical}) holds.

Next, we prove that (\ref{sigma-empirical}) holds.
Since the second moment is minimum when centered at the mean we have, for a unit vector $v$, 
\begin{equation}\label{F-ell-C-ell}
\sigma_v(C)^2\leq \frac{1}{|C|}\sum_{x\in C}(v\cdot (x-\mu(F)))^2=\frac{1}{|C|}v^TA^{T}Av,
\end{equation}
where, $A$ is a $|C| \times d$ matrix with each row of the form 
$x-\mu(F)$ for an $x\in C$.

We use Theorem 5.39 and Remark 5.40 of \cite{vers}, which state that whp the following holds (with $\Sigma$ being the  variance-covariance matrix of $F$):
$$||\frac{1}{|C|}A^{T}A-\Sigma||
\leq \frac{c||F||_{\mbox{sg}}^{2}
\sqrt d}{\sqrt {m}}.$$ 
From this, noting that $||F||_{\mbox{sg}}\leq\kappa_0\hsigma_v(F)$ for all $v$,
we get using (\ref{F-ell-C-ell}) that with whp, for all unit vectors $v$,
\begin{align*}
&\sigma_v^2(C_\ell)
\leq (1+\kappa^2(F)\sqrt{d/m})\hsigma_v^2(F)
\end{align*}
Now, using the lower bound on $m$, we get whp
\begin{equation}
\label{sigma-C-vs-F}
\forall \, {\mbox {unit vectors $v$}}: \quad  \sigma_v^2(C)\leq 3\hsigma_v^2(F).
\end{equation}
\qed

\end{proof}

\subsection{Stochastic Block Models}
In the stochastic block model, there are $k$ communities and an unknown $k \times k$ symmetric matrix $P$ with each entry in the range $[0,1]$. Each community $\ell$ has a relative weight $w_\ell$ such that $\sum_{\ell \in [k]} w_\ell=1. $ A graph $G$ on $n$ vertices is sampled from  this model as follows: each vertex $v$ first chooses a community with probability proportional to its weight. Conditioned on this event, an edge appears between two vertices belonging to communities $i$ and $j$ respectively with probability $P_{ij}$. Given this graph, we would like to recover the parameter $k$ (and the partitioning of $G$ into communities). We assume that for each $\ell \in [k]$, $P_{\ell \ell} = \max_{\ell' \in [k]} P_{\ell \ell'}$ -- this is a natural assumption in this setting because we want intra-community density of edges to be higher than inter-community density of edges. We also assume that the probability matrix $P$ satisfies the following condition on separation between inter-cluster and intra-cluster probabilities: for every distinct $\ell, \ell' \in [k]$
\begin{align}
    \label{eq:sbmc}
    \frac{(P_{\ell \ell} - P_{\ell \ell'})^2}{P_{\max}} \geq \frac{400 \max(\gamma^2, \log n/w_0^2)}{ n},
\end{align}
where $P_{\max}$ denotes $\max_{\ell \in [k]} P_{\ell \ell}$. 
This is similar to the separation condition used in several works on SBMs (see e.g. ~\cite{McSherry01}). 
 Also, we assume $P_{\ell \ell} \leq 1/2$ for all $\ell$ -- again this is a standard assumption in such settings because the sampled graphs are not dense. 

Let the vertices of $G$ be labelled $v^1, \ldots, v^n$. For each $i \in [n]$, we associate a vector $w^i \in \Re^n$ as follows: the coordinate $w^i_j$ is 1 if $(v^i,v^j)$ is an edge, 0 otherwise. The cluster $C_\ell$ corresponds to all the vectors $w^i$, where $v^i$ belongs to community $\ell$. Let $w_0 = \min_\ell w_\ell/2$. A It follows that whp each of the clusters $C_\ell$ has at least $w_0 n$ points. 
It remains to show that the clusters  $C_\ell$ satisfy well-separatedness and \wntsc.
We first condition on the choice of community for each vertex $v^i$ (subject to each cluster size being at least $w_0 n$) and show that these two conditions are satisfied with high probability (and so these two conditions hold whp even when we remove this conditioning).

We associate a distribution $F_\ell$ in $\Re^n$ for each community $\ell$ as follows: a random variable $X \sim F_\ell$ has the coordinate $X_i$ distributed independently as $B(P_{\ell \ell'})$, where $B(p)$ denotes Bernoulli distribution with parameter $p$, and $\ell'$ denotes the community to which $v^i$ belongs. The following claim is easy to see. 

\begin{claim}
\label{cl:sbm}
For each $\ell \in [k]$, $\hmu(F_\ell)_i = P_{\ell \ell'}$ where $v^i$ belongs to community $\ell'$; and $P_{\ell \ell}/2 \leq \hsigma(F_\ell)^2 \leq P_{\ell \ell}. $ 
\end{claim}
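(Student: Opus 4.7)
The plan is to unpack the definition of $F_\ell$ coordinate by coordinate, compute the mean and variance of each coordinate, and then observe that because the coordinates of $X \sim F_\ell$ are independent Bernoullis, the covariance matrix is diagonal and the maximum directional variance is just the largest coordinate variance.

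First I would verify the mean statement directly. By the definition of $F_\ell$, for a sample $X \sim F_\ell$ the coordinate $X_i$ is distributed as $B(P_{\ell\ell'})$ where $\ell'$ is the community containing $v^i$. Hence $\hmu(F_\ell)_i = \mathbb{E}[X_i] = P_{\ell\ell'}$, which is the first part of the claim.

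Next I would compute $\hsigma(F_\ell)^2$. Since the coordinates $X_1,\dots,X_n$ are independent, the variance-covariance matrix of $F_\ell$ is the diagonal matrix whose $i$-th diagonal entry is $P_{\ell\ell'(i)}(1-P_{\ell\ell'(i)})$, where $\ell'(i)$ denotes the community of $v^i$. Thus for any unit vector $v \in \Re^n$,
\[
\mathrm{Var}_{X \sim F_\ell}[v \cdot X] = \sum_{i=1}^n v_i^2 \cdot P_{\ell\ell'(i)}(1-P_{\ell\ell'(i)}),
\]
which is maximized, subject to $\|v\|=1$, by concentrating $v$ on a coordinate $i$ achieving the largest diagonal entry. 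Therefore
\[
\hsigma(F_\ell)^2 = \max_{i \in [n]} P_{\ell\ell'(i)}(1-P_{\ell\ell'(i)}).
\]

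Finally I would use the two stated assumptions to pin this maximum down. The assumption $P_{\ell\ell} = \max_{\ell' \in [k]} P_{\ell\ell'}$ says the largest value of $P_{\ell\ell'(i)}$ over $i$ is $P_{\ell\ell}$, and the assumption $P_{\ell\ell} \leq 1/2$ means every $P_{\ell\ell'(i)}$ lies in $[0,1/2]$, a range on which the map $p \mapsto p(1-p)$ is monotone increasing. Hence the maximum over $i$ is attained when $\ell'(i) = \ell$, giving $\hsigma(F_\ell)^2 = P_{\ell\ell}(1-P_{\ell\ell})$. Since $1-P_{\ell\ell} \in [1/2,1]$, we conclude $P_{\ell\ell}/2 \leq \hsigma(F_\ell)^2 \leq P_{\ell\ell}$, as desired. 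There is no real obstacle here; the main thing to be careful about is invoking the monotonicity of $p(1-p)$ on $[0,1/2]$, which is exactly why the assumption $P_{\ell\ell} \leq 1/2$ is needed.
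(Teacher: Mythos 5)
Your proof is correct and uses the same basic observation as the paper: independence of the Bernoulli coordinates makes the covariance matrix of $F_\ell$ diagonal, so $\hsigma(F_\ell)^2$ equals the largest coordinate variance. In fact your argument is more complete than the paper's. The paper's proof of the second statement only establishes the upper bound: it writes $\Var[v\cdot X]=\sum_i v_i^2\Var[X_i]\leq\sum_i v_i^2 P_{\ell\ell}=P_{\ell\ell}$ and stops; the lower bound $P_{\ell\ell}/2\leq\hsigma(F_\ell)^2$ is asserted but not argued. You, on the other hand, pin down $\hsigma(F_\ell)^2 = P_{\ell\ell}(1-P_{\ell\ell})$ exactly by using both standing assumptions ($P_{\ell\ell}=\max_{\ell'}P_{\ell\ell'}$ so the maximizing parameter is $P_{\ell\ell}$, and $P_{\ell\ell}\leq 1/2$ so $p\mapsto p(1-p)$ is monotone on the relevant range), from which both inequalities follow immediately. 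The one detail worth making explicit is that the maximum is actually attained: a coordinate $i$ with $\ell'(i)=\ell$ exists because $|C_\ell|\geq w_0 n>0$, and without such an $i$ the lower bound could fail. Apart from that small omission your proof is tight and correctly identifies exactly where the assumption $P_{\ell\ell}\leq 1/2$ is used, which the paper does not explain.
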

\begin{proof}
The result on $\hmu(F_\ell)$ is easy to see. We now prove the second statement. Fix a unit vector $v$. Let $X \sim F_\ell$. Then 
$$ \Var[\sum_{i \in [n]} X_i v_i] = \sum_{i \in [n]} v_i^2 \Var[X_i] \leq \sum_{i \in [n]} v_i^2  P_{\ell \ell}  = P_{\ell \ell}. $$
\qed
\end{proof}

\begin{claim}
\label{cl:sbm1}
For every distinct $\ell, \ell' \in [k]$, 
$$|\hmu(F_\ell) - \hmu(F_{\ell'})| \geq 20 \max(\gamma, \sqrt{\log n/w_0}) \sqrt{P_{\max}}. $$ 
\end{claim}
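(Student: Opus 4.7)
The plan is to directly compute the squared Euclidean distance $|\hmu(F_\ell) - \hmu(F_{\ell'})|^2$ coordinate-wise, using Claim~\ref{cl:sbm} together with the fact that each community contains at least $w_0 n$ vertices. Since $\hmu(F_\ell)_i = P_{\ell\, c(i)}$ where $c(i) \in [k]$ denotes the community of vertex $v^i$, we have
\[
|\hmu(F_\ell) - \hmu(F_{\ell'})|^2 \;=\; \sum_{i=1}^n \bigl(P_{\ell\, c(i)} - P_{\ell'\, c(i)}\bigr)^2.
\]
I would restrict the sum to indices $i$ with $c(i) \in \{\ell, \ell'\}$, retaining only nonnegative terms. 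By symmetry of $P$, the contribution from $c(i) = \ell$ equals $(P_{\ell\ell} - P_{\ell\ell'})^2$ and from $c(i) = \ell'$ equals $(P_{\ell'\ell'} - P_{\ell\ell'})^2$.

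Next I would invoke the community-size lower bound: conditioning on cluster sizes being at least $w_0 n$ (which holds whp by the H\"offding--Chernoff bound applied to the multinomial assignment), each of the two retained blocks of indices has at least $w_0 n$ members. This yields
\[
|\hmu(F_\ell) - \hmu(F_{\ell'})|^2 \;\geq\; w_0 n \bigl[(P_{\ell\ell} - P_{\ell\ell'})^2 + (P_{\ell'\ell'} - P_{\ell\ell'})^2\bigr].
\]
Since the standing assumption $P_{\ell\ell} = \max_{\ell''} P_{\ell\ell''}$ makes both differences nonnegative (and also $P_{\ell'\ell'} \geq P_{\ell'\ell} = P_{\ell\ell'}$), the separation hypothesis~\eqref{eq:sbmc} can be applied to each of the two terms on the right, giving
\[
(P_{\ell\ell} - P_{\ell\ell'})^2 \;\geq\; \frac{400 \max(\gamma^2, \log n / w_0^2)\, P_{\max}}{n},
\]
and similarly for $(P_{\ell'\ell'} - P_{\ell\ell'})^2$ (applied at index $\ell'$ with $\ell$ playing the role of $\ell'$). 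Substituting these bounds and simplifying produces $|\hmu(F_\ell) - \hmu(F_{\ell'})|^2 \;\geq\; 800\, w_0 \max(\gamma^2,\log n/w_0^2)\, P_{\max}$, which in turn dominates $400 \max(\gamma^2, \log n / w_0)\, P_{\max}$, yielding the claim after taking square roots.

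The routine parts are just bookkeeping with $P$-symmetry and the minimum community weight; there is no real obstacle beyond being careful with the two separate contributions (from the $\ell$-block and $\ell'$-block) and with the $\max(\gamma^2, \log n/w_0^2)$ factor, whose $w_0^{-2}$ rather than $w_0^{-1}$ scaling is what lets the $w_0$ factor in front of $\log n / w_0^2$ reproduce the $\log n / w_0$ inside the target maximum. No probabilistic step beyond Chernoff for cluster sizes is required: the statement concerns only the deterministic vectors $\hmu(F_\ell)$, which are fully determined once the community assignment is fixed.
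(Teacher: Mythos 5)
Your proof follows essentially the same route as the paper: expand $|\hmu(F_\ell)-\hmu(F_{\ell'})|^2$ coordinate-wise, keep only the $\ell$-block (the paper) or both blocks (you, for an extra factor $2$), lower-bound the block size by $w_0 n$, and invoke the separation condition~\eqref{eq:sbmc}. One small caveat on the final ``dominates'' step: $800\,w_0\max(\gamma^2,\log n/w_0^2) = 800\max(w_0\gamma^2,\log n/w_0)$ covers the $\log n$ branch of the target (as you observe, $w_0\cdot\log n/w_0^2=\log n/w_0$), but on the $\gamma^2$ branch it gives only $800\,w_0\gamma^2$ which is below $400\gamma^2$ once $w_0<\tfrac12$; the paper's own one-line proof silently drops this $w_0$ factor, so this is pre-existing slack in the stated constants rather than an error introduced by you.
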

\begin{proof}
By Claim~\ref{cl:sbm}, 
$$ |\hmu(F_\ell) - \hmu(F_{\ell'})|^2 \geq w_0 n (P_{\ell \ell} - P_{\ell \ell'})^2 \geq 400 \max(\gamma^2, \log n/w_0^2) P_{\max}, $$
by~\eqref{eq:sbmc}. 
\end{proof}


Since each entry of a vector $w^i$ is an iid Bernoulli random variable with variance at most $P_{\max}$, results from random matrix theory (see e.g.~\cite{vers}) imply that whp for every $\ell \in [k]$
\begin{equation}
    \label{eq:rand}
    \sigma(C_\ell) \leq 2\sqrt{P_{\ell \ell}} \leq 4 \hsigma(F_\ell),  
\end{equation}
where the last inequality follows from Claim~\ref{cl:sbm1}. 

We now show that the sample means $\mu(C_\ell)$ and $\hmu(F_\ell)$ are close. 
\begin{claim}
\label{cl:meansbm}
The following event happens whp: for every $\ell \in [k]$, 
$|\mu(C_\ell) - \hmu(F_\ell)| \leq 5  \sqrt{P_{\max} \log n/w_0}.$
\end{claim}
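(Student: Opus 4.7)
The plan is to prove the claim by establishing a coordinate-wise concentration bound on $\mu(C_\ell)_j - \hmu(F_\ell)_j$, and then combining the squared deviations across all $n$ coordinates. Throughout, I condition on the community assignment of the vertices (in particular, on the event that $|C_\ell| \geq w_0 n$ for every $\ell$, which holds whp by H\"offding--Chernoff, as noted earlier in the section).

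First, fix $\ell \in [k]$ and a coordinate $j \in [n]$, and let $\ell''$ denote the community containing $v^j$, so that $\hmu(F_\ell)_j = P_{\ell \ell''}$. Then
$$\mu(C_\ell)_j - \hmu(F_\ell)_j \;=\; \frac{1}{|C_\ell|} \sum_{i \in C_\ell} \bigl(w^i_j - P_{\ell \ell''}\bigr),$$
and for fixed $j$ the random variables $\{w^i_j : i \in C_\ell,\, i\neq j\}$ are \emph{independent} Bernoullis of parameter $P_{\ell \ell''} \leq P_{\max}$: the symmetry $w^i_j = w^j_i$ is only an issue when $i = j \in C_\ell$, and that single boundary term contributes an additive $O(1/|C_\ell|)$ that is absorbed into the final constant.

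Next I would apply Bernstein's inequality to this average of $|C_\ell|$ independent zero-mean random variables, each bounded by $1$ and of variance at most $P_{\max}$. With $t = c\sqrt{P_{\max} \log n / |C_\ell|}$, Bernstein gives
$$\Pr\!\left[\bigl|\mu(C_\ell)_j - \hmu(F_\ell)_j\bigr| \,\geq\, c\sqrt{P_{\max} \log n / |C_\ell|}\right] \;\leq\; 2\exp\!\left(-\frac{c^2 \log n}{2(1 + c/(3\sqrt{|C_\ell|P_{\max}/\log n}))}\right) \;\leq\; n^{-3},$$
for a sufficiently large absolute constant $c$. The Gaussian term dominates the linear one precisely because~\eqref{eq:sbmc} together with $|C_\ell| \geq w_0 n$ forces $|C_\ell|P_{\max} \gg \log n$, so the denominator is essentially $2$. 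A union bound over the at most $n\cdot k \leq n^2$ pairs $(\ell, j)$ then yields that, whp, the above per-coordinate bound holds simultaneously for every $\ell$ and every $j$.

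Finally, summing the squared deviations across coordinates,
$$|\mu(C_\ell) - \hmu(F_\ell)|^2 \;=\; \sum_{j=1}^n \bigl(\mu(C_\ell)_j - \hmu(F_\ell)_j\bigr)^2 \;\leq\; n \cdot \frac{c^2 P_{\max}\log n}{|C_\ell|} \;\leq\; \frac{c^2 P_{\max}\log n}{w_0},$$
using $|C_\ell| \geq w_0 n$. Taking the square root and choosing the Bernstein constant so that $c \leq 5$ gives the claimed bound. The only potential obstacle is ensuring that Bernstein (as opposed to Hoeffding) is used so that the bound scales with $\sqrt{P_{\max}}$ rather than a pessimistic constant; Hoeffding alone would yield $\sqrt{\log n / w_0}$, which is too weak when $P_{\max}$ is small. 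Verifying that the separation hypothesis~\eqref{eq:sbmc} places us in the variance-dominated regime of Bernstein is the only nontrivial check.
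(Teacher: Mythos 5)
Your proof takes essentially the same approach as the paper: coordinate-wise Bernstein concentration on $\mu(C_\ell)_j - \hmu(F_\ell)_j$, a union bound over coordinates, and then summing the squared deviations to get the vector bound. The paper's version is extremely terse (it just states the Bernstein bound and the conclusion), whereas you spell out the details the paper glosses over: the mild dependence issue at the diagonal entry $i=j$, the union bound count, and — most usefully — the verification that~\eqref{eq:sbmc} forces $n P_{\max} \gg \log n/w_0$ so that Bernstein operates in its variance-dominated regime and yields the $\sqrt{P_{\max}}$ factor that a naive Hoeffding bound would not. These checks are correct and fill in gaps the paper leaves to the reader.
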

\begin{proof}
Consider a coordinate $i$ where $v^i$ belongs to community $\ell'$. Then $(\hmu(F_\ell))_i = P_{\ell \ell'}$, and Bernstein's inequality implies that 
$$ \Pr \left[ \left| \frac{\sum_{x \in C_\ell} x_i}{|C_\ell|} - P_{\ell \ell'} \right| \geq 5 \sqrt{\frac{P_{\max} \log n}{w_0 n}} \right] $$ is at most $1/n^2$.
This shows that whp 
$|\mu(C_\ell) - \hmu(F_\ell)|$ is at most $10 \sqrt{\frac{P_{\max} \log n}{w_0}}$.
\qed
\end{proof}

Claim~\ref{cl:meansbm} and Claim~\ref{cl:sbm1} together imply that 
whp, for all distinct $\ell, \ell' \in [k]$
$$|\mu(C_\ell) - \mu(C_{\ell'})| \geq
\frac{1}{2} \cdot |\hmu(F_\ell) - \hmu(F_{\ell'})| \stackrel{Claim~\ref{cl:sbm1}}{\geq} 2\gamma \sqrt{P_{\max}} \stackrel{Claim~\ref{cl:sbm}}{\geq} \gamma \hsigma_0,$$
where $\hsigma_0$ denotes $\max_\ell \hsigma(F_\ell).$ This shows that clusters $C_\ell$ satisfy well-separatedness condition. It remains to show that \wntsc is satisfied. 

\begin{lemma}
\label{lem:wntsc}
With high probability, \wntsc is satisfied for all clusters $C_\ell, \ell \in [k]$. 
\end{lemma}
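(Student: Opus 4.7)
My plan is to mimic the proof of Theorem~\ref{stoc-appl} but applied along a single carefully-chosen direction rather than every direction. Since \wntsc requires only a lower bound on $\sigma^2(T)$, the \emph{maximum} directional variance of $T$, it suffices to exhibit one unit vector $u$ for which $\sigma_u(T)$ is already large enough, and then use $\sigma(T) \geq \sigma_u(T)$.

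I would pick $u := \mathbf{1}_{V_\ell}/\sqrt{|V_\ell|}$, where $V_\ell$ is the set of vertices assigned to community $\ell$. For $X \sim F_\ell$, the coordinates $X_j$ with $j \in V_\ell$ are independent Bernoulli$(P_{\ell\ell})$, so $\sqrt{|V_\ell|}\,(u \cdot X)$ is distributed as Binomial$(|V_\ell|, P_{\ell\ell})$; in particular $\hsigma_u(F_\ell)^2 = P_{\ell\ell}(1-P_{\ell\ell}) = \hsigma(F_\ell)^2$, and by~\eqref{eq:rand} this is within a constant factor of $\sigma(C_\ell)^2$. Standard bounds on the Binomial pmf (via Stirling or a local CLT) give that the pmf is pointwise at most $O(1/\sqrt{|V_\ell|P_{\ell\ell}(1-P_{\ell\ell})})$, so any interval of length $2\zeta$ with $\zeta \geq 1/\sqrt{|V_\ell|}$ is hit by $u \cdot X$ with probability only $O(\zeta/\hsigma_u(F_\ell))$. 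This is precisely the measure-level anti-concentration that is used in the VC-sampling step of the proof of Theorem~\ref{stoc-appl}.

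With this anti-concentration in hand, for every subset $T \subseteq C_\ell$ with $|T| \geq \sqrt{n}\log n/100$, I would set $\zeta := |T|\,\hsigma_u(F_\ell)/(12|C_\ell|)$ and reproduce verbatim the VC $\varepsilon$-sample argument from that proof (with $\varepsilon := |T|/(10|C_\ell|)$ and the constant-VC-dimension set system of intervals on the line). This yields that a constant fraction of points in $T$ have $u$-projection at distance more than $\zeta$ from $u \cdot \mu(T)$, so $\sigma^2(T) \geq \sigma_u^2(T) \geq c\,\zeta^2 \geq c'\, |T|^2\sigma^2(C_\ell)/|C_\ell|^2$ for absolute constants $c,c' > 0$, which is the required \wntsc inequality after absorbing constants.

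The main technical obstacle will be verifying that $\zeta \geq 1/\sqrt{|V_\ell|}$ so that the weak anti-concentration actually applies; with $|V_\ell| = \Theta(w_0 n)$ and $|T| \geq \sqrt{n}\log n/100$, this reduces to a mild lower bound on $P_{\ell\ell}$ that can be extracted by unpacking the separation condition~\eqref{eq:sbmc}. A minor related subtlety is that the rows of the adjacency matrix indexed by $V_\ell$ are not perfectly iid samples from $F_\ell$ (they pairwise share one matrix entry due to symmetry of the adjacency matrix), but this rank-one coupling is harmless for the concentration needed in the $\varepsilon$-sample step.
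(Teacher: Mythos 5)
Your proposal matches the paper's own proof essentially step for step: the paper also chooses $v = \mathbf{1}_{I_\ell}/\sqrt{n_\ell}$, proves the same pointwise Binomial anti-concentration bound (its Claim~\ref{cl:anticsbm}, stated as $\mu_\ell(I)\leq |I|/\sqrt{P_{\ell\ell}}+1/\sqrt{n_\ell P_{\ell\ell}}$), sets the same $\zeta = \Theta(|T|\sqrt{P_{\ell\ell}}/|C_\ell|)$, and runs the identical VC $\varepsilon$-sample argument to conclude $\sigma(T)\geq \sigma_L(T)\gtrsim \zeta$. The two caveats you flag — needing $\zeta$ to exceed the lattice spacing $1/\sqrt{n_\ell}$, handled in the paper by invoking~\eqref{eq:sbmc}, and the mild dependence between rows of the symmetric adjacency matrix — are precisely where the paper either spends the same effort or glosses over the same point.
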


\begin{proof}
Fix an index $\ell$. Let $I_\ell$ be the coordinates $i$ corresponding to the cluster $C_\ell$. We know that $|I| = |C_\ell| \geq w_0 n$. For sake of brevity, let $n_\ell$ denote $|C_\ell|$. Let $v \in \Re^n$ be the unit vector with $v_i = \frac{1}{\sqrt{n_\ell}},$ if $i \in I_\ell$; 0 otherwise. We define a discrete probability distribution $\mu_\ell$ on the real line as follows: for a point $y \in \Re$, $$ \mu_\ell(y) := \Pr_{X \sim F_\ell} (X \cdot v = y). $$

The following property should be seen as anti-concentration property of $\mu_\ell$. 
\begin{claim}
\label{cl:anticsbm}
For every interval $I \subseteq \Re,$
$$ \mu_\ell(I) \leq \frac{|I|}{\sqrt{P_{\ell \ell}}} + \frac{1}{\sqrt{n_\ell P_{\ell \ell}}}.$$
\end{claim}
\begin{proof}
Observe that if $X \sim F_\ell$, then $X \cdot v = \frac{1}{\sqrt{n_\ell}} \sum_{i \in I_\ell} X_i. $
It follows that the maximum probability mass on any point is at most $q = \frac{1}{\sqrt{n_\ell P_{\ell \ell}}}.$ Since $X_i$ are either 0 or 1, $X \cdot v$ is integral multiple of $\frac{1}{\sqrt{n_\ell}}.$ Therefore, 
$$\mu_\ell(I) \leq  q \left( 1  + |I| \sqrt{n_\ell} \right)
\leq \frac{1}{\sqrt{n_\ell P_{\ell \ell}}} + \frac{|I|}{\sqrt{P_{\ell \ell}}}.$$
\qed
\end{proof}

Armed with the above anti-concentration result, \wntsc follows from similar arguments as in the proof of Theorem~\ref{stoc-appl}. Fix a subset $T \subseteq C_\ell,$
$|T| \geq \sqrt{n} \log n/100$ 
Let $I$ be the interval of length $\zeta$ (on both sides) around $\mu(T) \cdot v$, where $\zeta = \frac{|T| \sqrt{P_{\ell \ell}}}{12 |C_\ell|}.$ Claim~\ref{cl:anticsbm} implies that 
$$ \mu_\ell(I) \leq \frac{|T|}{6|C_\ell|} + \frac{1}{\sqrt{n_\ell P_{\ell \ell}}} \leq \frac{|T|}{5 |C_\ell|},$$
where the last inequality follows from the fact that 
$n_\ell P_{\ell \ell} \gg k^2/w_0^2$ (using~\eqref{eq:sbmc}).

Consider the set system consisting of intervals on the line, where the measure of each interval is given by $\mu_\ell$. Let $\varepsilon$ denote $\frac{|T|}{10 |C_\ell||}.$
Since $|T| \geq \sqrt{n} \ln n/100$, 
$$ |C_\ell| \geq \frac{c'}{\varepsilon^2} \ln \frac{k}{ \varepsilon}, $$
where $c'$ is a large enough constant (this follows from~\eqref{eq:sbmc}).
It follows that 
with probability at least $1 - \frac{o(1)}{k}$, $C_\ell$ is an $\varepsilon$-sample for this set system. 
 Therefore, 
$$\frac{|C_\ell \cap I|}{|C_\ell|} \leq \frac{|T|}{5 |C_\ell|}+ \frac{|T|}{10 |C_\ell|}. $$
Therefore, at  least $0.7|T|$ points of $T$ are at distance greater than $\zeta$ from $u \cdot \mu(T)$, showing that
$\sigma_L(T)\geq 0.49 \zeta$, where $L$ is the line along $v$. Using the definition of $\zeta$ and Claim~\ref{cl:sbm}, we see that $$ \sigma(T) \geq \sigma_L(T) \geq \frac{|T| \hsigma(F_\ell)}{25 |C_\ell|} \stackrel{\eqref{eq:rand}}{\geq} \frac{|T| \sigma(C_\ell)}{100 |C_\ell|} $$
This proves the \wntsc property for $C_\ell$. 
\qed
\end{proof}

\section{NP-hardness}
\label{sec:np}
The \iden problem is defined as follows: given a set of $n$ points $P$ in $\Re^d$, a target cardinality $h$, is there a subset $X$ of $P$, $|X|=h$, with $\sigma(X) \leq 1$ ? In this section, we prove the following:
\begin{theorem}
\label{thm:nphard}
Given a set of points $P$ and a parameter $h$, checking whether there is a subset $X$ of size $h$ with $\sigma(X) \leq 1$ is NP-complete. Further, the problem of finding the subset $X$ of size $h$ with the minimum $\sigma(X)$ value is APX-hard. 
\end{theorem}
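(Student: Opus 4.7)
The plan is to establish NP-completeness of the decision problem and APX-hardness of the optimisation version by a single reduction from Exact Bounded 3-Cover (\ecov), which is known to be NP-hard and APX-hard \cite{Kann94}. Membership in NP is immediate: $\sigma(X)$ is the spectral norm of the centered data matrix of $X$ divided by $\sqrt{|X|}$, computable in polynomial time. For the reduction, given an \ecov instance with universe $U=\{u_1,\ldots,u_m\}$ and 3-sets $S_1,\ldots,S_t\subseteq U$, I would introduce, for every incidence pair $(u_j,S_i)$ with $u_j\in S_i$, a point $p_{(j,i)} = M e_j \oplus e'_i \in \Re^{m+t}$, where $\{e_j\}$ and $\{e'_i\}$ are the canonical bases of $\Re^m$ and $\Re^t$, and $M$ is a large scaling parameter; set $h=m$. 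A size-$m$ subset is then interpretable as an assignment $\phi:U\to \mathcal S$ of each chosen element to one of its containing sets, and exact covers correspond exactly to assignments in which each selected set receives exactly three elements.

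For completeness, if an exact cover $\{S_{i_1},\ldots,S_{i_{m/3}}\}$ exists, the corresponding subset $X$ has a centered data matrix that splits cleanly: the first block is $M(I_m-\tfrac1m\mathbf 1\mathbf 1^T)$, whose spectral norm is $M$, and the second block is an $m\times t$ matrix whose only nontrivial columns are the $m/3$ basis vectors $e'_{i_\ell}$, each repeated three times. An eigen-analysis analogous to the calculation $AA^T = 3I_h - \tfrac{9}{m}\mathbf 1\mathbf 1^T$ used earlier gives $\sigma(X)^2 = (M^2+O(1))/m$, which I set (up to scaling) as the threshold $1$.

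For soundness, let $c_j=|\{i:(j,i)\in X\}|$. If $X$ is not ``one-per-element'' (some $c_j\geq 2$), the direction $e_j\oplus 0$ alone yields directional variance $M^2 c_j(1-c_j/m)/m \geq 2M^2/m\cdot(1-o(1))$, which exceeds the threshold for $M$ sufficiently large. The delicate case is when $X$ is one-per-element yet induces no exact cover, so some set has $k_i\in\{1,2\}$ assigned elements instead of $0$ or $3$. Here individual coordinate directions do \emph{not} cleanly separate the cases, since $k_i(1-k_i/m)/m$ is monotone in $k_i\in\{0,\ldots,3\}$; the gap must be extracted from a mixed direction supported on the offending rows, and I would verify this by computing $AA^T$ in the ``one-per-element'' subspace and identifying the dominant eigenvector. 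The toy computation I can carry out in $m=3$ with a mixed assignment gives $\sigma(X)^2 = 7/9$ versus $1/3$ for an exact cover, confirming a strictly positive gap.

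The principal obstacle is quantifying this second-case gap uniformly in $m$ and amplifying it enough to preserve APX-hardness. A natural remedy I would adopt is to replace each tag $e'_i$ by a block of $K$ identical copies (amplifying the second-block contribution by a factor of $K$ without affecting the first block), and to choose $M$ and $K$ so that the threshold lies in a window the exact-cover subsets satisfy but any one-per-element non-cover violates by a constant multiplicative factor. Composing this with the constant-factor inapproximability of \ecov under bounded occurrences then yields the desired APX-hardness for the optimisation problem $\min_{|X|=h}\sigma(X)$, and in particular NP-completeness of the decision version.
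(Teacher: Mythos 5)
The reduction source (\ecov) and overall strategy (gadget $+$ threshold) match the paper, but your gadget is genuinely different. The paper creates \emph{one point per set} $S$: $x(S)$ has value $\sqrt{h}/\sqrt{3}$ at the three coordinates of $S$ and zero elsewhere, with $h=m/3$. A subset of $h$ points then corresponds to a sub-collection of $m/3$ sets, and an \emph{exact} cover is witnessed by the columns of $A$ being pairwise orthogonal (each of length $\sqrt{h}$), which immediately forces $\|A\|=\sqrt{h}$ and hence $\sigma(X)\le 1$. Soundness is equally local: if the chosen sets are not disjoint, two columns of $A$ share a nonzero row, and the paper exhibits a specific $2$-sparse unit vector supported on those two columns along which the centered second moment strictly exceeds $h$. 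You instead build one point per \emph{incidence pair} with $h=m$, so chosen subsets are element-to-set assignments. This is a different encoding and requires its own soundness argument.

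That soundness argument is where the proposal breaks down, and the difficulty you flag is not merely a bookkeeping issue. In the one-per-element case, the first (element) block of the centered data matrix is \emph{identical} for every such subset --- it is just $M(I_m-\tfrac1m\mathbf 1\mathbf 1^T)$ --- so it provides no signal. The second (tag) block has Gram matrix $\mathrm{diag}(k)-\tfrac1m kk^T$, whose top eigenvalue is controlled by $\max_i k_i$. A one-per-element assignment that fails to be an exact cover can still have many $k_i=3$ (e.g.\ disturb just one triple into counts $2$ and $1$ while leaving the rest at $3$), in which case the top eigenvalue of the tag block is again $\approx 3$, the same as for an exact cover. So neither block alone separates the cases; any separation would have to come from cross terms in $G_{12}=M(\Phi-\tfrac1m\mathbf 1 k^T)$, which you have not analysed and which scale with $M$ in a way that competes with the first block. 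The $K$-fold duplication of tag coordinates can only amplify a gap that already exists; it cannot create one. Your $m=3$ toy computation is not indicative because with $m=3$ there is only one exact-cover pattern ($k=(3,0,0)$) and the tag block vanishes exactly, a degeneracy that disappears for larger $m$. In short, completeness is fine but soundness has a genuine hole, whereas the paper's one-point-per-set gadget makes the overlap of two chosen sets directly visible as a column inner product and hence yields a clean $2$-sparse witness direction.
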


The ideas in the reduction are similar to those in~\cite{CivrilM09}. 
We reduce from \ecov. An instance of \ecov is given by a set system $({\cal S}, U)$ consisting of a collection $\cal S$ of subsets of a ground set $U$. Let $m$ denote $|U|$. Each set in $\cal S$ has cardinality 3, and each element of $U$ appears in exactly 3 distinct sets in $\cal S$. The problem is to decide whether there is a sub-collection of $\cal S$ of size $m/3$ which covers all the elements in $U$. 

Given such an instance $\cI$ of \ecov, we reduce it to an instance $\cI'$ of \iden as follows: we define $m$ points in $\Re^n$, where we have a point $x(S)$ for each set $S \in {\cal S}$. We define the parameter $h$ to be $\frac{m}{3}$. 
If the set $S = \{i_1, i_2, i_3\}$, then 
$x(S)_i = \frac{\sqrt{h}}{\sqrt{3}}$, when $i=i_2, i_3, i_3$; 0 otherwise. 

\begin{lemma}
\label{lem:np}
$\inst'$ has a subset $X$ of cardinality $h$ satisfying $\sigma(X) \leq 1$ iff  $\inst$ has a set cover of size $h$. 
\end{lemma}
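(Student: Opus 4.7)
The plan is to prove the two directions of the iff separately; the forward direction reduces to a clean linear-algebra computation, and the reverse direction is where the \ecov structure is actually exploited.

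For the forward direction I would pick a subcollection $\mathcal{S}'\subseteq\mathcal{S}$ of size $h=m/3$ that forms an exact cover and set $X=\{x(S):S\in\mathcal{S}'\}$. Because $\mathcal{S}'$ partitions $U$, each element $i$ belongs to exactly one $S\in\mathcal{S}'$, so $\mu(X)_i=(1/h)\cdot\sqrt{h/3}=1/\sqrt{3h}$ for every coordinate. Let $A$ be the $h\times m$ matrix whose rows are $x(S)-\mu(X)$ for $S\in\mathcal{S}'$; each row has three entries equal to $\sqrt{h/3}-1/\sqrt{3h}$ and $m-3$ entries equal to $-1/\sqrt{3h}$. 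Using disjointness of the supports, I would compute $AA^T$ directly: diagonal entries simplify to $h-1$ and every off-diagonal entry to $-1$, yielding $AA^T=hI_h-J_h$. The nonzero eigenvalues are all equal to $h$, so $\|A\|^2=h$ and $\sigma(X)^2=\|A\|^2/h=1$, as desired.

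For the reverse direction, suppose $X=\{x(S_1),\ldots,x(S_h)\}$ satisfies $\sigma(X)\le 1$, and let $c_i$ denote the number of $S_j$'s containing element $i$, so that $\sum_i c_i=3h$. Two identities are central: (i) expanding $|x(S_j)-\mu(X)|^2$ using $\mu(X)_i=c_i/\sqrt{3h}$ yields the Frobenius formula $\|A\|_F^2=h^2-\tfrac{1}{3}\sum_i c_i^2$; and (ii) since the rows of $A$ sum to zero, $\mathrm{rank}(A)\le h-1$, which combined with $\|A\|^2\le h$ gives $\|A\|_F^2\le(h-1)h$. Combining (i) and (ii) yields $\sum_i c_i^2\ge 3h$, while the integrality bound $c_i^2\ge c_i$ gives the same lower bound from the combinatorial side. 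Forcing both bounds tight (together with saturation of the rank inequality, i.e., every nonzero singular value of $A$ equalling $\|A\|$) would constrain each $c_i$ to $\{0,1\}$, and then $\sum_i c_i=3h=m$ pushes $c_i=1$ for every $i\in U$, so $\{S_1,\ldots,S_h\}$ is an exact cover.

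The main obstacle I expect is the reverse direction, specifically translating the spectral hypothesis $\|A\|\le\sqrt h$ into the combinatorial conclusion $c_i\in\{0,1\}$. Simply matching the two lower bounds on $\sum c_i^2$ does not by itself force either of them to be attained, so one must argue additionally that the rank bound $\|A\|_F^2\le(h-1)\|A\|^2$ is saturated -- this is a finer spectral rigidity statement. For that step I would work with the explicit form of the Gram matrix $AA^T$, whose $(j,k)$-entry for $j\ne k$ evaluates to $(h/3)|S_j\cap S_k|-\tfrac{1}{3}(\alpha_{S_j}+\alpha_{S_k})+\beta/(3h)$ with $\alpha_{S_j}=\sum_{i\in S_j}c_i$ and $\beta=\sum_i c_i^2$, and I would exploit the structural constraint $|S_j|=3$ to rule out any overlap pattern that is compatible with $\sigma(X)\le 1$ but fails to be an exact cover.
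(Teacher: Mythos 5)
Your forward direction is correct and is in fact slightly sharper than the paper's: you compute the centered Gram matrix $AA^T=hI_h-J_h$ exactly and conclude $\sigma(X)=1$, whereas the paper simply bounds $\sigma(X)\leq \|A\|/\sqrt{h}$ using the uncentered matrix. Both are fine.

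The reverse direction, however, has a genuine gap that your proposed repair does not close. From (i) and (ii) you obtain $\sum_i c_i^2\geq 3h$, but this is exactly the inequality that integrality already gives you for free, and neither inequality is forced to be tight. Worse, the direction of information is wrong: if the sets overlap then $\sum_i c_i^2>3h$, which by your identity (i) means $\|A\|_F^2<h(h-1)$ is \emph{smaller} than in the disjoint case — and a small Frobenius norm is perfectly compatible with $\|A\|\leq\sqrt{h}$. So the rank inequality $\|A\|_F^2\leq(h-1)\|A\|^2$ is simply slack, and there is no ``rigidity'' statement to saturate; the chain of inequalities never produces a lower bound on $\|A\|$ that beats $\sqrt{h}$. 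What actually drives the bound in the overlap case is concentration of the energy in \emph{one} direction: two overlapping columns are positively correlated, which inflates the top singular value even while the total Frobenius norm drops. Any correct argument must therefore exhibit that direction explicitly.

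Your fallback — computing the off-diagonal Gram entries $(AA^T)_{jk}=\tfrac{h}{3}|S_j\cap S_k|-\tfrac{1}{3}(\alpha_{S_j}+\alpha_{S_k})+\tfrac{\beta}{3h}$ and exploiting $|S_j|=3$ — is the right idea, and it is essentially what the paper does, but in a much more elementary form: the paper takes a pair of columns $j,k$ that share a nonzero row (which must exist when there is no exact cover) and directly evaluates $\|A'v\|$ for $v=\tfrac{1}{\sqrt2}(e_j+e_k)$, splitting into two small cases depending on whether the shared-row structure repeats. Your Gram-matrix expression would let you evaluate the same quadratic form $v^\top AA^Tv$, and the $\tfrac{h}{3}|S_j\cap S_k|$ term is what makes it exceed $h$. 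But in the proposal this step is only described, not carried out, so as written the reverse implication is not proved.
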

\begin{proof}
First suppose $\inst$ has an exact cover $\{S_1, \ldots, S_h\}$ of size $h$. 
Note that these sets must be mutually disjoint. Define $X$ to be the set of points $x(S_1), \ldots, x(S_h)$. 
Let $A$ be the $n \times h$ matrix whose columns are given by $x(S_1), \ldots, x(S_h)$. Since the columns of $A$ are orthogonal, and each of them has the same
length $\sqrt{h}$, it follows that $||A|| = \sqrt{h}$. Since $\sigma(X) \leq \frac{||A||}{\sqrt{h}} \leq 1$, one direction of the desired result follows. 

To show the converse, suppose $\inst$ has no set cover of size $h$. Let $X$ be a subset of $h$ points in $\inst'$. As above, let $A$ be the $n \times h$ matrix representing the coordinates of the points in $X$. 
The sets in $\cI$ corresponding to $X$  cannot be mutually disjoint, otherwise they will form a set cover in $\inst$. In other words, there must be two columns in $A$, say wlog column 1 and 2,  which have a non-zero value in the same row. Again, by renumbering, assume that $A_{11} = A_{12}  = \frac{\sqrt{h}}{\sqrt{3}}.$ Let $P$ denote the indices $(i,j)$ such that $A_{ij}$ is non-zero (i.e., equal to $\frac{\sqrt{h}}{\sqrt{3}}$. 

We are interested in the matrix $A'$ hose $i^{th}$ row is given by subtracting $\mu(X)$ from the $i^{th}$ row of $A.$ So we now the coordinates of $\mu(X)$. By definition of \ecov, each row in $A$ can have at most 3 non-zero coordinates. Therefore, $\mu(X)_i$ is at most $\frac{3}{h} \cdot \frac{\sqrt{h}}{\sqrt{3}} = \frac{\sqrt{3}}{\sqrt{h}}$ for $i=1, \ldots, n$. 
Therefore for every index $(i,j) \in P$, $A'_{ij} \geq \frac{\sqrt{h}}{\sqrt{3}} - \frac{\sqrt 3}{\sqrt h}. $ Further, for every pair $(i,j) \notin P, 1 \leq i \leq n, 1 \leq j \leq h,$ $A'_{ij} \geq -\sqrt{\frac{3}{h}}. $
Note that $\sigma(X) = \frac{||A'||}{\sqrt{h}}.$ We now argue that $||A'|| > \sqrt{h}$, which will then imply that $\sigma(X) > 1$.

Recall that $(1,1), (1,2) \in P$. Each of the first and the second columns of $A'$ has 3 non-zero entries. 
Two cases arise: 
\begin{itemize}
    \item There is no index $i  \in \{2, \ldots, n\}$ such that both $(i,1),(i,2)$ are in $P$. Since each column of $A$ has three non-zero entries, we can assume wlog that $(2,1),(3,1), (4,2), (5,2) \in P$.
     Consider the unit vector $v \in \Re^h$ with $v_1 = v_2 = \frac{1}{\sqrt{2}},$ and the other coordinates 0. A calculation shows that $||A'v|| \geq  \sqrt{h},$
     (assuming $h$ ius large enough). It follows that $\sigma(X) > 1$. 
    
    \item There is an index $i \in \{2, \ldots, n\}$ such that $(i,1), (i,2) \in P$: assume wlog that $i=2$. By renumbering rows of $A'$, we can also assume that $(3,1), (4,2) \in P$ (recall that each column of $A$ has exactly three non-zero entries).  Again, considering the unit vector $v$ as in the previous case, we see that 
    $||A'v|| \geq \sqrt h$, and so, $\sigma(X) > 1$ again. 
\end{itemize}
This proves the lemma. 
\qed
\end{proof}

It is easy to deduce from the reduction that the optimization version of \iden, which seeks to find a subset $X$ with minimum $\sigma(X)$,  is also APX-Hard. 

%
%
%
%
%
%
%
%
%
%
%
%
\section{Some Counter-Examples}
\label{sec:ex}

We first give an example showing that the elbow method can make a large error in estimate the value of $k$. For a set of points $X$, we $\Delta_k(X)$ to denote the optimal $k$-means cost, and let $k^\star$ denote (assume that the true number of clusters is at least 2)
$$ \argmax_{n-1 \geq k \geq 2} \frac{\Delta_{k-1}(X)}{\Delta_k(X)} $$
\begin{lemma}\label{elbow-method}
For any positive integers $r,k \geq 2,$ and large enough $d$, there is a mixture of $k$ standard Gaussians ${\cal N}({\bf \mu},I)$, each with weight $1/k$ with every pairwise mean separated by $k^r$, such that with high probability, for sufficiently large $n$ (polynomially bounded in $d,k$), $k^\star(X)  \neq k$, where $X$ is a sample of $n$ points from the mixture of Gaussians.
\end{lemma}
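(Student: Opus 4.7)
The plan is to place the $k$ Gaussian means in a two-level hierarchy so that, on a large enough sample $X$, the single largest multiplicative gap in the sequence $\bigl\{\Delta_{j-1}(X)/\Delta_j(X)\bigr\}_{j=2}^{n-1}$ occurs at $j=2$ rather than at $j=k$; this forces $k^\star(X)=2\neq k$. I sketch this for $k\geq 3$ (the edge case $k=2$ is degenerate and would require a separate modification). I partition the $k$ indices into two super-groups $G_1,G_2$ of sizes $\lceil k/2\rceil$ and $\lfloor k/2\rfloor$. In $\mathbb R^d$ I place the means in $G_1$ along one coordinate axis at positions $0,k^r,2k^r,\ldots$, and the means in $G_2$ along a parallel line shifted orthogonally by $M:=k^{Cr}$ for a constant $C$ to be chosen. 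All pairwise distances are then $\geq k^r$ as required, within-super-group distances are $\leq k^{r+1}$, and every cross-super-group distance is $\geq M$. I fix $d\geq k^{2r+4}$ and take $n$ polynomially large in $d,k$.

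Using $\chi^2$ concentration for the empirical variance of a standard Gaussian and Chernoff bounds on cluster sizes, with high probability
\[
\Delta_k(X)=nd\,(1\pm o(1)),\qquad \Delta_2(X)=nd+\Theta(nk^{2r+2}),\qquad \Delta_1(X)=\Delta_2(X)+\tfrac{nM^2}{4}(1\pm o(1)),
\]
where the middle identity uses the K\"onig (parallel-axis) identity inside each super-group and the last uses it across super-groups. Hence
\[
\frac{\Delta_1(X)}{\Delta_2(X)}=1+\Theta\!\left(\frac{M^2}{d+k^{2r+2}}\right)=\Theta\!\bigl(k^{2Cr-2r-4}\bigr),
\]
which can be made arbitrarily large by taking $C$ large. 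For $j\in\{3,\ldots,k\}$, every near-optimal $j$-clustering must respect the two super-groups (any cross-super-group merge costs $\Omega(nM^2)$, while a super-group-respecting $j$-clustering costs only $O(nd+nk^{2r+2})$), so $\Delta_j(X)\geq\Delta_k(X)$, $\Delta_{j-1}(X)\leq\Delta_2(X)$, and consequently $\Delta_{j-1}/\Delta_j\leq\Delta_2/\Delta_k=1+O(k^{-2})$. For $j>k$, the same separation argument shows the optimal $j$-clustering respects the partition into the $k$ Gaussians and reduces to sub-clustering each Gaussian individually; here I will use Gaussian anti-concentration (the \ntsc\ ingredient already used in the proof of Theorem~\ref{stoc-appl}) together with standard concentration of pairwise distances of Gaussian samples in $\mathbb R^d$ to bound $\Delta_{j-1}/\Delta_j$ by an absolute constant uniformly in $j$.

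Taking $C$ large then makes $\Delta_1/\Delta_2$ strictly dominate every other ratio in the sequence, yielding $k^\star(X)=2\neq k$. The technically most delicate step is the uniform upper bound on $\Delta_{j-1}/\Delta_j$ for $j>k$: one must rule out ``accidental'' tight sub-clusters among Gaussian samples across the entire range $k<j<n$, including the regime of very large $j$ where $\Delta_j$ itself is small. This is precisely the Gaussian anti-concentration / \ntsc\ property of Theorem~\ref{stoc-appl}, supplemented by min-pairwise-distance concentration for $n$ Gaussian samples in high dimension to handle $j$ close to $n$; all other ingredients are routine concentration inequalities.
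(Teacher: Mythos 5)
Your construction is genuinely different from the paper's, and your core argument is sound, but you do more work than the lemma requires. The paper places the means collinearly and equally spaced (at $\tfrac{4\sqrt d}{k}\ell$, $\ell\in[-k,k]$), and establishes a small but strict numerical gap: $\Delta_1/\Delta_2\approx 3.42$ while $\Delta_{k'-1}/\Delta_{k'}\le\Delta_2/(nd)\approx 3.4$ for every $k'\in[3,k]$. Your two-level hierarchy (two far-apart super-groups, each internally close) replaces this delicate numerics with a tunable parameter $M=k^{Cr}$ that makes $\Delta_1/\Delta_2$ as large as desired while $\Delta_{j-1}/\Delta_j\le\Delta_2/\Delta_k=1+O(k^{-2})$ for $j\in[3,k]$, so the dominance is robust rather than razor-thin. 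Both arguments prove the lemma by the same underlying logic: since $\Delta_1/\Delta_2$ is a term in the sequence over which $k^\star$ maximizes, showing $\Delta_1/\Delta_2>\Delta_{k-1}/\Delta_k$ already rules out $k^\star=k$.

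Where you overreach is in trying to pin down $k^\star=2$ exactly. That is what forces your most delicate step -- bounding $\Delta_{j-1}/\Delta_j$ uniformly over $j\in(k,n-1]$ via anti-concentration and min-pairwise-distance concentration. This step is not needed: the argmax cannot be $j=k$ once $j=2$ strictly beats it, regardless of what happens for $j>k$. The paper similarly never touches $j>k$ for exactly this reason. If you drop that paragraph your proof becomes cleaner and fully rigorous given the routine concentration facts you cite. Finally, you correctly note that $k=2$ is degenerate in your construction (a singleton per super-group collapses the hierarchy and you would be proving $k^\star=2=k$, the opposite of the claim); it is worth observing that the paper's construction, which produces $2m+1\ge 3$ components, also does not cover $k=2$, so this caveat is not a deficiency specific to your approach.
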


\begin{remark}\label{gap-statistic}
The Gap Statistic is akin to the elbow method, except it compares $\Delta_k$ on the data against the expected $\Delta_k$ on data generated from a single component null hypothesis mixture. In the simple example of Lemma (\ref{elbow-method}), with a suitable null hypothesis, one may get the correct $k$, but, no general results are known.
\end{remark}

\begin{proof}
For every integer $\ell \in [-k,k]$ define $\mu_k$ as the vector $\left( \frac{4 \sqrt{d}}{k} \ell, 0, 0, \ldots, 0 \right).$
We define a mixture with $2k+1$ components, where the component $F_\ell, \ell \in [-k,k]$ is ${\cal N}(\mu_k, I)$. Note that all component means lie on the first coordinate axis. Assuming the sample size $n$ is large enough, the sample means for each of the components also lie close this axis -- for sake of simplicity (though this assumption can be easily removed), assume that the sample means coincide with the corresponding component means. We also assume for simplicity (again, this can be easily removed) that there are exactly $\frac{n}{2k+1}$ samples from each component. 

Let $X$ be a sample of $n$ points, with $C_\ell$ being the points from $F_\ell$. Observe that for $\ell \in [-k,k]$, whp 
$$ \sum_{x \in C_\ell} |\mu(C_\ell) - x|^2 = |C_\ell| \cdot (d + O(\sqrt{d}). $$

 Consider a solution to the $k'$-means problem where we locate a set of $k'$ centers at $A = \{a_1, \ldots, a_{k'}\}$. For a point $x$, let $d(x,A)$ denote $\min_{a \in A} |x-a|$.
 Also suppose we assign all points in a cluster to a common center in $A$ (this may not be the best way of minimizing the $k'$-means objective function, but will give a tight enough upper bound). 
 Then, for a fixed cluster $C_\ell$, 

$$ \sum_{x \in C_\ell} d(x,A)^2  = \sum_{x \in C_\ell} |x - \mu(C_\ell)|^2 + |C_\ell| d(\mu(C_\ell), A)^2. $$
Now using the above inequality, it follows that the $k'$-means cost of the solution $A$ is at most: 
$$ n(d + O(\sqrt{d})) + \sum_{\ell \in [-k,k]}  |C_\ell| \cdot d(\mu(C_\ell), A)^2. $$

By symmetry, $\Delta_1(X)$ is achieved by placing a center at the origin, in which case (since all points of a cluster $C_\ell$ are assigned to the same center, we have equality here) 
$$\frac{\Delta_1(X)}{n} = d + O(\sqrt{d}) + \frac{16d \cdot \sum_{\ell \in [-1,1]} \ell^2 }{3} = 11.66 d + O(\sqrt{d}).$$

 Now we upper bound $\Delta_2(X).$ One solution is to locate two centers at on the first axis with coordinates $2 \sqrt{d}$ and $-2 \sqrt{d}$ respectively.
A routine calculation shows that 
$$ \frac{\Delta_2(X)}{n} \leq d + O(\sqrt{d}) + \frac{12d}{5} 
= 3.4d + O(\sqrt{d}). $$
 Therefore, $\frac{\Delta_1(X)}{\Delta_2(X)} \geq 
\frac{11.66}{3.4} - O \left( \frac{1}{\sqrt{d}} \right) \sim 3.42. $ As long as $k'$ stays at most $k$, $\frac{\Delta_{k'}(X)}{n} \geq d + \sqrt{d}. $
Therefore (using the upper bound on $\Delta_2(X)$) for all $k' \geq 3$, 
$$ \frac{\Delta_{k'-1}(X)}{\Delta_{k'}(X)} \leq 3.4.$$

It follows that if $k' \in [2,k]$, the highest ratio is achieved at $k'=2$. 
\qed
\end{proof}
We now show that for points drawn from GMM satsifying separation condition, the $k$-means and $2k$-means cost can be very close to each other, and so even a PTAS (with approximation ratio $(1+\varepsilon)$ where $\varepsilon > 0$ does not depend on the dimension $d$) may not be able to distinguish between these two costs. 

\begin{lemma}\label{k-means-example-2}
Suppose $F$ is a mixture of $2k$ standard Gaussians $F_1, \ldots, F_{2k}$ in $\Re^d$ with $d>100k^{25}$ and with uniform weights and for a large constant $\gamma$,
$$\mu(F_{2\ell})=\gamma (8 k^6,\ell k^8,0,0,\ldots ,0)\; ,\;
\mu(F_{2\ell-1})=
\gamma (-8k^6,
\ell k^8,0,0,\ldots ,0)\mbox{ for }\ell=1,2,\ldots ,k.$$
Then, for a sample $S$ of size $n$ from $F$, 
\begin{align}
&\E(\Delta_k(S))\leq d+O(\sqrt d)+100\gamma^2k^{12}\label{k-means-cost}\\
&\E(\Delta_{2k}(S))\geq d-O(\sqrt d)\label{2k-means-cost}
\end{align}
\end{lemma}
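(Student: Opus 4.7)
I will handle the two inequalities separately: the upper bound on $\E(\Delta_k(S))$ via an explicit $k$-clustering, and the lower bound on $\E(\Delta_{2k}(S))$ via a dimension argument exploiting that the $2k$ component means all lie in the $2$-dimensional plane spanned by the first two coordinates.

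For the upper bound, consider the ``pair-midpoint'' $k$-clustering that groups all samples drawn from $F_{2\ell-1}$ or $F_{2\ell}$ into cluster $\ell$ with center $m_\ell := \gamma(0,\,\ell k^8,\,0,\ldots,0)$. Any sample from $F_i$ has the form $x = \mu(F_i) + z$ with $z\sim \mathcal{N}(0,I_d)$, and by construction $|\mu(F_{2\ell-1}) - m_\ell| = |\mu(F_{2\ell}) - m_\ell| = 8\gamma k^6$. Since $\E[z]=0$ the cross-term vanishes, giving
\[
\E|x - m_\ell|^2 \;=\; |\mu(F_i) - m_\ell|^2 + \E|z|^2 \;=\; 64\gamma^2 k^{12} + d.
\]
Because this is an admissible $k$-clustering, averaging over the $n$ samples yields $\E[\Delta_k(S)]/n \leq d + 64\gamma^2 k^{12} \leq d + 100\gamma^2 k^{12}$, as desired.

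For the lower bound, fix any $2k$-partition of $S$ with centers $c_1,\ldots,c_{2k}$. Let $W=\mathrm{span}(c_1,\ldots,c_{2k})$, which has dimension at most $2k$, and let $V$ be the $2$-dimensional coordinate plane containing every mean $\mu(F_i)$. Put $U := W+V$, so $\dim U \leq 2k+2$. For each sample $x_j = \mu_{i(j)} + z_j$ assigned to a center $c_\ell$, we have $\mu_{i(j)} \in V \subseteq U$ and $c_\ell \in W \subseteq U$, so
\[
|x_j - c_\ell|^2 \;\geq\; \bigl|P_{U^\perp}(x_j - c_\ell)\bigr|^2 \;=\; |P_{U^\perp} z_j|^2.
\]
Summing and bounding uniformly over all subspaces of dimension at most $2k+2$,
\[
\Delta_{2k}(S) \;\geq\; \sum_j |z_j|^2 \;-\; \max_{\dim U \leq 2k+2} \sum_j |P_U z_j|^2 \;=\; \sum_j |z_j|^2 \;-\; \sum_{i=1}^{2k+2}\sigma_i^2(Z),
\]
where $Z$ is the $d\times n$ matrix with columns $z_j$.

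Two standard estimates finish the argument: $\E\sum_j |z_j|^2 = nd$, while random matrix theory for a standard Gaussian $Z$ of size $d\times n$ gives $\E[\sigma_1^2(Z)] = O(d+n)$, hence $\E\sum_{i=1}^{2k+2}\sigma_i^2(Z) \leq O(k(d+n))$. Therefore $\E[\Delta_{2k}(S)] \geq nd - O(k(d+n))$; dividing by $n$ and using $d > 100 k^{25}$ (so $k \ll \sqrt d$) together with $n$ sufficiently large polynomial in $d$ yields $\E[\Delta_{2k}(S)]/n \geq d - O(\sqrt d)$. The main obstacle is that the subspace $U$ depends on the random sample, which we circumvent by passing to the worst-case $U$ of dimension $2k+2$; the sharp bound on the top singular values of $Z$ then guarantees that this pessimistic step only costs $O(k(d+n)) \ll nd$ in the relevant regime.
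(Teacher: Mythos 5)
Your upper bound is the same construction as the paper's: place $k$ centers at the pairwise midpoints $\gamma(0,\ell k^8,0,\ldots,0)$ and compute $\E|x-m_\ell|^2 = d + 64\gamma^2 k^{12}$, which is within the stated $100\gamma^2 k^{12}$ budget (the paper's $O(\sqrt d)$ slack in the upper bound is not needed once you take a true expectation). One small normalization note: as written, the lemma's left-hand sides should really be $\E[\Delta_k(S)]/n$, which is how the paper uses $\Delta_k$ in its elbow-method argument; you handle this correctly.

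For the lower bound you take a genuinely different (and more rigorous) route than the paper. The paper merely asserts that "choosing the $2k$ means of the $2k$ Gaussian densities is nearly optimal," without detail. You instead observe that for \emph{any} choice of $2k$ centers, the signal part $\mu_{i(j)} - c_\ell$ lives in a subspace $U$ of dimension at most $2k+2$ (the span of the centers plus the two-dimensional coordinate plane containing all component means), so the per-point cost is at least $|P_{U^\perp} z_j|^2$. Passing to the worst-case $U$ via Ky Fan's principle bounds what can be subtracted by $\sum_{i=1}^{2k+2}\sigma_i^2(Z)$, and the $O(d+n)$ control on $\E[\sigma_1^2(Z)]$ closes the argument, provided $n$ is polynomially large in $d$ (so that $k(d+n)/n = O(\sqrt d)$; you correctly flag that this size condition is needed, which the lemma statement leaves implicit). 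This is cleaner than arguing the true means are near-optimal: it never needs to compare to the true-means clustering at all, only to note that any $2k$ centers capture a low-dimensional slice of an isotropic noise cloud. Both approaches ultimately rest on the same phenomenon, but yours is a complete proof where the paper sketches.
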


\begin{proof}
For (\ref{k-means-cost}), we note that for a standard Gaussian, with high probability,  the expected distance squared of a sample from the mean of the Gaussian is $d$ and with high probability it lies in $[d-c\sqrt d,c+\sqrt d]$. Further, if we choose $k$ centers as $\gamma (0,\ell k^8,0,0,\ldots ,0), \ell=1,2,\ldots ,k$, then, for samples from $F_{2\ell}$ and $F_{2\ell-1}$ the expected distance squared to 
$\gamma (0,\ell k^8,0,0,\ldots ,0)$ is $d+64\gamma^2k^{12},$ since we may choose the first coordinate of the sample independently of other coordinates. So, (\ref{k-means-cost}) follows.

For the $2k$ means cost, it is not difficult to see that choosing the $2k$ means of the $2k$ Gaussian densities is nearly optimal with high probability and this implies
(\ref{2k-means-cost}).
\qed
\end{proof}

We now give an example which shows that for any constants $c \geq 1, \varepsilon > 0$, an input set of points can be $(c, \varepsilon)$-stable with respect to two different values of the parameter $k$.

\begin{examp}
\label{stability}
The example is very simple, and also captures other deterministic conditions like proximity~\cite{KK10}. 
The dimension $d=2.$ There are $n$ points which are divided into 4 groups of size $n/4$ each -- call them $G_1, \ldots, G_4$. The points in these groups are co-located at $(D, 1), (D,-1), (-D,1), (-D,-1)$ respectively, where $D \gg n$. Assuming $n$ is large enough (compared to $(c,\varepsilon)$) the following  
clustering into 2 clusters is $(c, \varepsilon)$-stable: $G_1 \cup G_2, G_3 \cup G_4$. Indeed, the 2-means cost of this clustering is $n/4$. Now any clustering which differs from this on more than $\varepsilon n$ points has cost at least $\varepsilon n \cdot D^2 > cn/4$. Therefore, this clustering is $(c,\varepsilon)$-stable. But so is the clustering $(G_1, G_2, G_3, G_4)$, whose 4-means cost is 0. 

\end{examp}

This example shows that formulating the tightness definition in terms of 1-means cost does not suffice. 

\begin{examp}
\label{ex:2}
The data $X$ in $\Re^d$ is generated by a  GMM consisting of with two components (each being spherical Gaussian with unit variance), of weight 1/2 each.  The means of the two Gaussians is separated by a large constant $c$. Whp the average 1-means cost of the whole data is $d+O(\sqrt d)+(c^2/4)$.
For any $\varepsilon\in \Omega(1)$, any subset of $\varepsilon$ fraction of data can be seen to have average 1-means cost of at least $d-c\ln(1/\varepsilon)\sqrt d$, which is $1-o(1)$ of that of the whole data 
for $d\rightarrow\infty$. Thus 1-means cost is not a good measure of ``tightly packed''.

However, it is easy to see that $\sigma(X)$ $c/2$, whereas the $\sigma$ of the data generated by one component is at most $2$, thus, the \wntsc property is violated, indicating that $k>1$.
\end{examp}

{\small
\bibliographystyle{alpha}
\bibliography{paper}
}

\section*{Appendix}
\subsection{Missing proofs from Section~\ref{sec:prel}}

\simple*
\begin{proof}
There is a unit vector $v$ such that $|S|\sigma^(S)=\sum_{x\in S} (v\cdot (x-\mu(S)))^2$. Now, we have
$$\sum_{x\in S} (v\cdot (x-\mu(S)))^2\leq \sum_{x\in S} (v\cdot (x-\mu(X)))^2\leq \sum_{x\in X} (v\cdot (x-\mu(X)))^2
\leq\sigma^2(X)|X|,$$
proving the Claim.\qed

\end{proof}

\abc*
\begin{proof}
The 1-means cost of $X$ is at most $\opt(\cI)$ and at least $\opt(\cI)/4.$ Further, $\sigma(X)^2$ is at least the 1-means cost of $X$ and at most $d$ times this quantity (since $\sigma(X)^2$  is the maximum 1-means cost of $X$ along any direction). 
\qed
\end{proof}

\Tech*

\begin{proof}

Let $$v=\mu(S)-\mu(R).$$
\begin{align}
&\sum_{j\in R\cap S}(v\cdot (A_{\cdot,j}-\mu(R)))^2\leq 
\sum_{j\in R}(v\cdot (A_{\cdot,j}-\mu(R)))^2
\leq |v|^2|R|\sigma^2(R)   \label{197}
\end{align}
On the other hand, we have:
\begin{align}\label{198}
\sum_{j\in R\cap S}(v\cdot (A_{\cdot,j}-\mu(R)))^2\nonumber
&\geq\frac{1}{2}\sum_{j\in R\cap S}(v\cdot (\mu(S)-\mu(R)))^2\nonumber
 -\;  \sum_{j\in R\cap S}(v\cdot (A_{\cdot,j}-\mu(S)))^2\nonumber\\
&\geq \frac{1}{2}|R\cap S|\; |v|^4-\sum_{j\in S}(v\cdot (A_{\cdot,j}-\mu(S)))^2\nonumber\\
&\geq\frac{|R\cap S|}{2}|v|^4-|v|^2\sigma^2(S)|S|,
\end{align}
where,  first inequality uses the fact that $(a+b)^2\geq a^2/2-b^2$ for any reals $a,b$. 
The Lemma now follows from  
 (\ref{197}) and (\ref{198}). 
\hfill\qed
\end{proof}

\subsection{Missing proofs from Section~\ref{sec:id1}}

We prove the following corollary of Lemma~\ref{lem:clus1}.

\begin{restatable}{corollary}{kbound}
\label{cor:compare}
$\hatk \geq k$. 
\end{restatable}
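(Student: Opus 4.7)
The plan is to build a pigeonhole argument that turns Lemma~\ref{lem:clus1} into the lower bound $\hatk \geq k$. The key point is that each ground-truth cluster $C_h$ must have a large intersection with some $T_\ell$, and Lemma~\ref{lem:clus1} forbids any $T_\ell$ from being the ``home'' of two different $C_h$'s.

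Concretely, for each $h \in [k]$ I would define $f(h) \in [\hatk]$ as the index achieving $\max_\ell |C_h \cap T_\ell|$. Since the $T_\ell$ partition $P$ and $|C_h| \geq w_0 n$, averaging gives
\begin{equation*}
|C_h \cap T_{f(h)}| \;\geq\; \frac{|C_h|}{\hatk} \;\geq\; \frac{w_0 n}{\hatk}.
\end{equation*}
Since the $w_\ell$ sum to $1$ and each is at least $w_0$, we have $k \leq 1/w_0$, so if $\hatk < k$ then $\hatk < 1/w_0$, whence $|C_h \cap T_{f(h)}| \geq w_0^2 n \geq w_0^2 n/10$.

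If $\hatk < k$ then by pigeonhole there exist distinct $h, h' \in [k]$ with $f(h) = f(h') = \ell$ for some $\ell \in [\hatk]$. The bound above shows that both $|C_h \cap T_\ell|$ and $|C_{h'} \cap T_\ell|$ exceed $w_0^2 n/10$, which directly contradicts Lemma~\ref{lem:clus1}. Hence $\hatk \geq k$.

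The argument is essentially purely combinatorial once Lemma~\ref{lem:clus1} is in hand, so I do not foresee any real obstacle; the only subtlety is the one-line verification that $k \leq 1/w_0$ so that the averaged lower bound $w_0 n/\hatk$ clears the $w_0^2 n/10$ threshold required to invoke Lemma~\ref{lem:clus1}. Combined with the already-observed fact that the procedure halts with $\hatk \leq k$ (tried with the partition $C_1, \ldots, C_k$), this yields $\hatk = k$, completing the correctness proof of Theorem~\ref{k-first}.
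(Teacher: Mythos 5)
Your proof is correct and is essentially the same as the paper's: both run a pigeonhole argument through Lemma~\ref{lem:clus1}. The paper maps each $T_\ell$ to the unique $C_h$ with $|C_h \cap T_\ell| \geq w_0^2 n/10$ (if any) and notes some $C_h$ is unhit, then bounds $|C_h|$ to get a contradiction; you map each $C_h$ to its largest-overlap $T_\ell$ and find a collision contradicting the lemma directly --- mirror images of the same argument, both hinging on $k \leq 1/w_0$.
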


\begin{proof}
Suppose not. 
Define a (partial) function $\tau : [\hatk] \rightarrow [k]$, where $\tau(\ell), \ell \in [\hatk]$ is the unique index $h$ (assuming it exists) with $|C_h \cap T_\ell| \geq \frac{w_0^2 n}{10}$ (Lemma~\ref{lem:clus1}). Since $k > \hatk$, there is an index $h$ such that $\tau^{-1}(h)$ is empty. In other words, 
%
$|C_h \cap T_\ell| \leq \frac{w_0^2 n}{10}$ for all $\ell \in \{1, \ldots, \hatk\}$. But then 
$$ |C_h| = \sum_{\ell=1}^\hatk |C_h \cap X_\ell| \leq \frac{w_0^2 \hatk n}{10} < \frac{w_0^2 k n}{10} < w_0 n, $$ which is a contradiction.
\qed
\end{proof}

\end{document}